\newtheorem{Lemma}{Lemma}
\newtheorem{Theorem}{Theorem}
 \newtheorem{remark}{Remark}
\begin{document}	
\title{ Stochastic Geometry Based Modelling and Analysis of Uplink Cooperative Satellite-Aerial-Terrestrial Networks for Nomadic Communications with Weak Satellite Coverage}
	
\author{Wen-Yu Dong,~\IEEEmembership{Student Member,~IEEE}, Shaoshi Yang*,~\IEEEmembership{Senior Member,~IEEE},\\ Ping Zhang,~\IEEEmembership{Fellow,~IEEE}, Sheng Chen,~\IEEEmembership{Life Fellow,~IEEE}	%
\thanks{This work was supported in part by the Beijing Municipal Natural Science Foundation (No. L242013 and No. Z220004), in part by the Open Project Program of the Key Laboratory of Mathematics and Information Networks, Ministry of Education, China (No. KF202301). \textit{(* Corresponding author)}}
\thanks{W.-Y. Dong and S. Yang are with the School of Information and Communication Engineering, Beijing University of Posts and Telecommunications, with the Key Laboratory of Universal Wireless Communications, Ministry of Education, and also with the Key Laboratory of Mathematics and Information Networks, Ministry of Education, Beijing 100876, China (E-mails: wenyu.dong@bupt.edu.cn, shaoshi.yang@bupt.edu.cn).}
\thanks{P. Zhang is with the School of Information and Communication Engineering, Beijing University of Posts and Telecommunications, and also with the State Key Laboratory of Networking and Switching Technology, Beijing 100876, China (E-mail: pzhang@bupt.edu.cn).} %
\thanks{S. Chen is with the School of Electronics and Computer Science, University of Southampton, Southampton SO17 1BJ, U.K. (E-mail: sqc@ecs.soton.ac.uk).} %
\vspace*{-5mm}
}
	\markboth{Accepted to publish on IEEE Journal on Selected Areas in Communications, Aug. 2024}%
	{Shell \MakeLowercase{\textit{et al.}}: Stochastic Geometry Based Modeling and Analysis of Cooperative Satellite-Aerial-Terrestrial Systems for Nomadic Communications with Weak Signal Coverage}
	
	
\maketitle 

\begin{abstract}
Cooperative satellite-aerial-terrestrial networks (CSATNs), where unmanned aerial vehicles (UAVs) are utilized as nomadic aerial relays (A), are highly valuable for many important applications, such as post-disaster urban reconstruction. In this scenario, direct communication between terrestrial terminals (T) and satellites (S) is often unavailable due to poor propagation conditions for satellite signals, and users tend to congregate in regions of finite size. There is a current dearth in the open literature regarding the uplink performance analysis of CSATN operating under the above constraints, and the few contributions on the uplink model terrestrial terminals by a Poisson point process (PPP) relying on the unrealistic assumption of an infinite area.  This paper aims to fill the above research gap. First, we propose a stochastic geometry based innovative model to characterize the impact of the finite-size distribution region of terrestrial terminals in the CSATN by jointly using a binomial point process (BPP) and a type-II Mat{\'e}rn hard-core point process (MHCPP). Then, we analyze the relationship between the spatial distribution of the coverage areas of aerial nodes and the finite-size distribution region of terrestrial terminals, thereby deriving the distance distribution of the T-A links. Furthermore, we consider the stochastic nature of the spatial distributions of terrestrial terminals and UAVs, and conduct a thorough analysis of the coverage probability and average ergodic rate of the T-A links under Nakagami fading and the A-S links under shadowed-Rician fading. Finally, the accuracy of our theoretical derivations are confirmed by Monte Carlo simulations. Our research offers fundamental insights into the system-level performance optimization for the realistic CSATNs involving nomadic aerial relays and terrestrial terminals confined in a finite-size region.
\end{abstract}
	
\begin{IEEEkeywords}
Cooperative satellite-aerial-terrestrial networks, stochastic geometry, nomadic communications, coverage probability, Nakagami fading, shadowed-Rician fading. 
\end{IEEEkeywords}

\section{Introduction}\label{S1}

\IEEEPARstart{T}{he} sixth generation (6G) mobile communications are aiming at providing ubiquitous connectivity for human society. However, extending the current terrestrial communication infrastructure to cover the vast rural and remote areas is encountering numerous problems, particularly in poor countries \cite{9042251}. In order to realize ubiquitous connectivity, satellite communication has been considered as an alternative solution due to its convenient deployment, significant adaptability, and extensive coverage \cite{9314201, 8353853, 9693912}. Nevertheless, despite the advantages of employing satellites as a supplementary means of communication alongside ground-based systems, practical implementation is impeded by the considerable challenges associated with ultra-long distance transmission from satellites to the earth's surface as well as the obstructive effects of buildings and mountains that result in deep shadowing, thereby hindering direct space-to-ground communication at the data rates expected by ground users.

In order to alleviate the aforementioned challenges, satellite networks have been integrated with terrestrial networks as a viable strategy, commonly known as cooperative satellite-terrestrial networks (CSTNs) \cite{1522108,9179999,9610113}. In CSTNs, terrestrial stations traditionally act as relays, facilitating communication between satellites and terrestrial terminals. This not only extends the coverage of satellite signals but also reduces their pathloss, thereby enhancing the quality of reception at terrestrial terminals. Nevertheless, terrestrial stations also face challenges in fulfilling space-to-ground communication requirements due to various reasons, e.g., terrain conditions. In this context, the utilization of unmanned aerial vehicles (UAVs) as aerial relays is gaining popularity due to its cost-effectiveness and adaptability. For example, in disaster-stricken areas, UAVs can quickly serve as substitutes for impaired terrestrial stations. Moreover, UAVs possess the capability of intelligently adapting their locations to effectively respond to unforeseen traffic requirements in the terrestrial network \cite{9003405}, thus achieving nomadic communication.  Hence, the integration of UAVs into CSTNs, leading to the establishment of cooperative satellite-aerial-terrestrial networks (CSATNs), holds significant importance.

\subsection{Related Works}\label{S1.1}

Thus far, numerous studies have been conducted on the design and analysis of satellite networks. These studies encompass a wide range of topics including system architecture, resource management, security, and performance evaluation, all aimed at enhancing the efficiency, reliability, and/or overall performance of satellite networks. 
	Among the existing studies, some considered the uplink performance analysis of a simpler satellite-terrestrial network,  by deriving the outage probability (OP) \cite{9347980,9509510,9676997,9838778}, while the others considered the downlink ergodic sum rate (ESR) \cite{ huaicong2022ergodic, 9808306} and OP \cite{9130899, song2022cooperative, 9789274,  9582189} for the more complex CSATNs.



	From the information theoretic perspective, ESR and OP are two main metrics for measuring network performance. Specifically, ESR assesses the network’s total throughput, which corresponds to the network-level transmission efficiency, while OP assesses the network’s capability to operate properly under noise and/or interference, which corresponds to the network-level transmission reliability. Therefore, it is essential to investigate the ESR and OP of CSATN systems. It should be noted that there exist other performance metrics for a network, such as end-to-end latency, network jitter, network lifetime, and so forth. However, these metrics are typically studied on the network layer, rather than on the fundamental physical layer of our interest. In \cite{huaicong2022ergodic}, a switch-based hybrid FSO/millimeter wave scheme with a robust beamforming (BF) algorithm was proposed for the uplink non-orthogonal multiple access scenario, while a hybrid multiple access scheme was suggested in \cite{9808306} to offer dependable connectivity for heterogeneous users in a CSATN. 	Yastrebova et al.  \cite{9347980} investigated the impact of terrestrial interference on the uplink of an LEO satellite constellation in high frequency bands of International Mobile Telecommunications (IMT), and the uplink coverage probability in hybrid satellite-terrestrial networks was analyzed in \cite{9509510}. Manzoor et al. \cite{9676997} modeled the data frame repetition behaviour based on the probability of a single transmission taking place on line-of-sight (LoS) links, and analyzed the coverage performance based on the frame success rate. Furthermore, Chan et al.  \cite{9838778} analyzed the uplink coverage probability and uplink throughput performance in massive IoT-over-satellite networks. 
	 The downlink OP of a CSATN was investigated in \cite{9130899}, where an optimization problem in terms of the transmit power and the transmission time over the satellite-aerial and aerial-terrestrial links is formulated and solved, to obtain the optimal end-to-end energy efficiency for the CSATN system considered. Song et al. \cite{song2022cooperative} derived a closed-form expression for the OP of several types of communication links, including the uplink from the aerial source to the satellite or aerial relay, the downlink from the satellite or aerial relay to the terrestrial destination, and the links between aerial relays.  Zhang et al. \cite{9789274} conducted a study on the OP of the downlink in a CSATN system that utilizes the decode-and-forward (DF) relay mechanism. Vasudha et al. \cite{9582189} considered the relative velocity and the random position of receivers, and analyzed the aerial-terrestrial downlink performance by evaluating the instantaneous signal-to-noise ratio in the presence of erroneous estimation of both channel gain and noise.

 We emphasize that among the above contributions, \cite{9347980,9509510,9676997,9838778} focused on analyzing the uplink performance; however, their system models only involved ground devices and satellites, without including aerial devices. Additionally, most existing studies on the performance of CSATNs focus on the downlink \cite{9130899, song2022cooperative, 9789274,  9582189}, while the analysis of the uplink of CSATNs is lacking. In this paper we aim to fill this gap by providing a comprehensive analysis of the uplink performance of a CSATN that utilizes UAVs as relays.

In various real world scenarios, such as post-disaster urban reconstruction, it is commonly seen that users tend to congregate in finite-size areas. Understanding this fundamental property is critical for more effective network restoration or achieving more efficient network operation. However, evidently a significant gap exists in the current research landscape regarding the uplink performance analysis of CSATN systems, where the terrestrial terminals are located in finite-size areas.  Existing studies often model terrestrial terminals by a Poisson point process (PPP) that relies on the assumption of an infinite distribution region \cite{9347980,9509510,9676997,9838778}, which is not suitable for finite-size areas due to its boundary effects. As both the binomial point process (BPP) \cite{7882710} and type-II Mat{\'e}rn hard-core point process (MHCPP) \cite{6574907} are mathematically defined within finite regions, they are well suited for areas with spatial boundaries. This makes them particularly effective for explaining the network configuration attributes of finite-size areas.
 Although many studies using stochastic geometry explore finite distribution regions, they predominantly focus on terrestrial device-to-device networks \cite{6932503,8016632,Zhao_etal2019}, while neglecting the unique challenges posed by CSATNs. The main motivation of our work is to address this void in the research concerning CSATNs. 


In addition to the aforementioned scenario modeling, the channel model also constitutes a pivotal aspect of CSATNs. The shadowed-Rician (SR) fading, distinguished from Rayleigh and Nakagami fading, has proven to be a more fitting choice for statistically characterizing satellite channels, as demonstrated in various frequency bands, e.g., S-, L-, Ku-, and Ka-band \cite{1623307}. This versatility positions SR fading as a well-suited option for modeling satellite communication channels. In the existing works that focused on satellite communication using the SR fading model \cite{bhatnagar2013closed, 7869087, 8894851, 9678973, 8068989, 2021Stochastic}, however, either the presence of interference is neglected or the channel model for interference links is studied using the Nakagami fading model. Therefore, it is evident that a comprehensive investigation into interference using the SR model is lacking. Addressing this issue is another motivation of our work.
              
\subsection{Contributions and Organization of the Paper}\label{S1.2}
\begin{table*}[!t]
\caption{Comparison of existing state-of-the-arts in related works with our proposed work} 
\label{Comparison} 
\vspace*{-5.5mm}
\begin{center}
\resizebox{0.93\linewidth}{!}{
\begin{threeparttable}
\begin{tabular}{c|c|c|c|c|c|c} 
	\bottomrule
	\textbf{Reference} & \textbf{Link types} & \textbf{\makecell{Channel fading model \\ of target signal}} & \textbf{\makecell{Channel fading model \\ of interference signal}} & \textbf{finite-size area} & \textbf{Beamforming} & \textbf{Point process} \\ \bottomrule
		Yastrebova \emph{et al.} \cite{9347980}& T-S & T-S: Rician & T-S: Rician  & - & - & PPP \\ \hline\rule{0pt}{8pt}	
	Homssi \emph{et al.} \cite{9509510}& T-BS-S & \makecell{T-BS: Rayleigh \\ BS-S: the empirical model} & \makecell{T-BS: Rayleigh \\ BS-S: the empirical model} & - & - & PPP \\ \hline\rule{0pt}{8pt}
	 Manzoor \emph{et al.}	\cite{9676997} & T-S & T-S: the empirical model & T-S: the empirical model & - & - & PPP \\ \hline\rule{0pt}{8pt}
	Chan \emph{et al.} \cite{9838778}& T-S & T-S: the empirical model & T-S: the empirical model & - & - & PPP \\ \hline\rule{0pt}{8pt}
	Pan \emph{et al.} \cite{9130899} & S-A-T & \makecell{S-A: SR \\ A-T: Rician} & - & -	& - & Single-Point \\ \hline\rule{0pt}{8pt}
	Song \emph{et al.} \cite{song2022cooperative} & A-S-GW & \makecell{A-S: Nakagami \\ S-GW: SR} & Nakagami & - & - &	MHCPP/PPP \\ \hline\rule{0pt}{8pt}
	Zhang \emph{et al.}	\cite{ 9789274} & S-A-T & \makecell{S-A: SR \\ A-T: Rician} & - & - & - & Single-Point \\ \hline\rule{0pt}{8pt}
	Vasudha \emph{et al.} \cite{9582189}& S-A-T & \makecell{S-A:~- \\ A-T: Rayleigh} & - & - & - & PPP \\ \hline\rule{0pt}{8pt}
	Kolawole \emph{et al.} \cite{8068989}& \makecell{S-T\\BS-T} &  \makecell{S-T: SR (Approximated\\ as a Gamma function) } & BS-T: Nakagami & - & \checkmark & PPP \\ \hline\rule{0pt}{8pt}
	Talgat \emph{et al.} \cite{2021Stochastic} & S-GW-T & \makecell{S-GW: SR \\ GW-T: Rayleigh}& - & - & - & BPP/PPP\\ \hline\rule{0pt}{8pt}
	Jung \emph{et al.} \cite{9678973} & S-T & SR  & - & - & - & BPP \\ \hline\rule{0pt}{8pt}
	Our work & T-A-S &  \makecell{T-A: Nakagami \\ A-S: SR}& \makecell{T-A: Nakagami \\ A-S: SR} & \checkmark & \checkmark & BPP/MHCPP \\ \bottomrule
\end{tabular}
\begin{tablenotes}
  \footnotesize
	\item S: satellite, A: aerial node, T: terrestrial terminal, GW: gateway, BS: base station. SR: shadowed-Rician fading. \checkmark: considered, - : not considered.
	\item  MHCPP: Mat{\'e}rn hard-core point process, PPP: Poisson point process, BPP: binomial point process.
\end{tablenotes}
\end{threeparttable}
}
\end{center}
\vspace*{-7mm}
\end{table*}
	 
Inspired by the insights gained from prior discoveries, this study directs its focus towards the uplink performance of CSATNs, whose terrestrial terminals are constrained in finite-size regions. Specifically, we introduce an innovative CSATN system tailored to a finite-size region and derive an expression to characterize the communication coverage of users within this finite-size area, where terrestrial terminals establish connections with satellites through the assistance of multiple aerial relays. In our exploration, we partition the uplink system into two main segments: the terrestrial terminals to aerial relays (T-A) link and the aerial relays to the satellite (A-S) link. Subsequently, we conduct an  accurate analysis of each segment's performance by using Nakagami and SR fading models, respectively. This approach allows us to gain profound insights into the network's behavior, shedding light on the unique challenges and opportunities presented by the interplay of terrestrial, aerial, and satellite components in a finite-size geographical region. Our novel contributions are summarized as follows.
	 
\begin{itemize}[label=\textbullet,font=\Large]
\item We introduce a stochastic geometry based innovative model to characterize the impact of the finite-size distribution region of terrestrial terminals in the CSATN by jointly using a BPP  and a type-II MHCPP. There is a current dearth in the open literature regarding the uplink performance analysis of CSATN. The limited research of using stochastic geometry in satellite communications mainly focuses on the downlink of CSTNs and uses the classical PPP to model the distribution of terrestrial nodes. However, PPP relies on the unrealistic assumption that terrestrial nodes are distributed in an infinite area. In this work, the CSATN considered comprises a satellite, multiple aerial nodes, and a set of terrestrial terminals that are located in a finite-size area. In our proposed model, terrestrial terminals are governed by a BPP, while aerial nodes adhere to a type-II MHCPP. 

\item We analyze the sophisticated relationship between the spatial distribution of the coverage areas of aerial nodes and the finite-size distribution region of terrestrial terminals, thereby deriving the T-A links' distance distribution, which must be obtained for further analyzing the coverage probability and the average ergodic rate. Our system model incorporates a representation of the coverage area of each aerial node, in the form of a circle that results from a 2-D projection of a cone. Although there exist studies that use stochastic geometry to analyze network performance subject to the constraint of finite node distribution region, they predominantly focus on terrestrial networks, where the unique challenges posed by CSATNs are of course not involved. Furthermore, we delineate the operational range of the aerial nodes by assessing their communication links with the terrestrial terminals.  

\item We consider the stochastic nature of the spatial distributions of terrestrial terminals and UAVs, and conduct a thorough analysis of  both the coverage probability and the average ergodic rate of the T-A links under Nakagami fading and the A-S links under SR fading. Since there is still a lack of interference analysis for the SR model at present, we propose a novel method for analyzing the coverage probability and the average ergodic rate of A-S links while assuming the interference imposed on the target aerial node by other aerial nodes experiences the SR fading. Specifically, we delineate the statistical properties of the received signal-to-interference-plus-noise ratio (SINR) for the T-A and A-S links based on the distance distribution function obtained. Moreover, we analyze the communication quality of the A-S links in terms of coverage probability when using the directional BF, and the benefits of this strategy are validated by subsequent simulations. 

\item We garner a substantial volume of results to assess the performance of the CSATN considered via extensive Monte Carlo simulations, which demonstrate the correctness of our theoretical analysis.  Specifically, we conduct comprehensive numerical simulations and discussions for both the T-A and A-S links. Additionally, we conduct a comparative study of the impact of key system parameters, including the coverage area of aerial nodes, the flying altitude of aerial nodes, as well as the terrestrial and aerial nodes' densities, transmission distance, and antenna gain, on the achievable system performance.
\end{itemize}		 
	 
Table~\ref{Comparison} summarizes the distinctions between our work and the state-of-the-art studies, which highlights the novelty of our contributions.

The remainder of this paper is organized as follows. Section~\ref{S2} presents the network deployment, propagation models and association policy for the CSATN system. The distribution of distances within a finite-size region is provided in Section~\ref{S3}, where we discuss the positioning relationship between the finite-size area and the coverage range of aerial nodes. Section~\ref{S4} presents our main performance analysis results, including the derivation of the analytical coverage probability for the T-A link and A-S link in the uplink. Section~\ref{S5} provides the derivations of the analytical average ergodic rate for the T-A link and the A-S link, respectively. In Section~\ref{S6}, we provide numerical results to verify our theoretical derivations and to study the effect of key system parameters, such as network scale, array gain and transmission power, on the network performance. Our conclusions are drawn in Section~\ref{S7}.

\textbf{Notation:} $\mathbb{P}(\cdot)$ denotes the probability measure and $\mathbb{E}[\cdot]$ denotes the average measure. The Laplace transform of random variable $X$ is defined as $\mathcal{L}_X\left(s\right)
=\mathbb{E}\,\left[\exp (-s X)\right]$. The cumulative distribution function (CDF) and probability density function (PDF) of $X$ are denoted by $F_X(x)$ and $f_X(x)$, respectively, while the conditional PDF of $X$ conditioned on $Y$ is denoted as $f_{X|Y}(x|y)$. $\Gamma(\cdot)$ is the Gamma function, and $(\cdot)_{n}$ is the Pochhammer symbol, which is defined as $(x)_{n}=\Gamma(x+n)/\Gamma(x)$. The lower incomplete Gamma function is defined as $\bar{\gamma}(a,x)=\int_0^xt^{a-1}\exp (-t){\rm d}t$. $\mathcal{B}(\bm{o},R_{\textrm{A}})$ denotes the circular plane with radius $R_{\textrm{A}}$ centered at point $\bm{o}$. The 2-norm of $\bm{x}=[x_1, x_2, \cdots , x_n]^{\rm T}$ is defined as $\left\|\bm{x}\right\|_2=\sqrt{x_1^2+\cdots+x_n^2}$. $\binom n k$ denotes the binomial coefficient. ${_1F_1}\left(\cdot;\cdot;\cdot\right)$ is the confluent hypergeometric function of the first kind.

\section{System Model}\label{S2}
 
The system model of our uplink CSATN is depicted in Fig.~\ref{fig:1}, which consists of an LEO satellite ($S$), a group of aerial nodes, e.g., UAVs ($A_n$, $n\in\{1,\cdots,N_{\textrm{A}}\}$ and $N_{\textrm{A}}\geq 1$) and a number of terrestrial terminals, i.e., users, ($T_l$, $l\in\{1,\cdots,N_0\}$ and $N_0\geq 1$). The aerial-satellite links operate in frequency division duplex (FDD) mode to improve the communication efficiency, mitigate sophisticated interference, and simplify performance assessment\footnote{In satellite communications, significant round-trip signal delay due to long signal propagation distance causes a large idle period between the pair of uplink and downlink time slots when using time division duplex (TDD), making TDD inefficient for most satellite communication systems. Currently almost all the existing satellite communication systems adopt FDD rather than TDD. In addition, time-slot overlap due to time synchronization errors may lead to sophisticated interference. By contrast, FDD can effectively mitigate these issues, though the individual bandwidths on the uplink and downlink are less abundant.}.  In terms of the terrestrial-aerial links, since the transmission distance is much smaller than that of the aerial-satellite links, in principle both FDD and TDD can be used, depending on specific application requirements. Specifically, the signal from $T_l$ is transmitted to $S$ through $A_n$ using two links, namely, the T-A link and the A-S link.
 
\begin{figure}[!tp]
\begin{center}
\includegraphics[width=0.9\columnwidth]{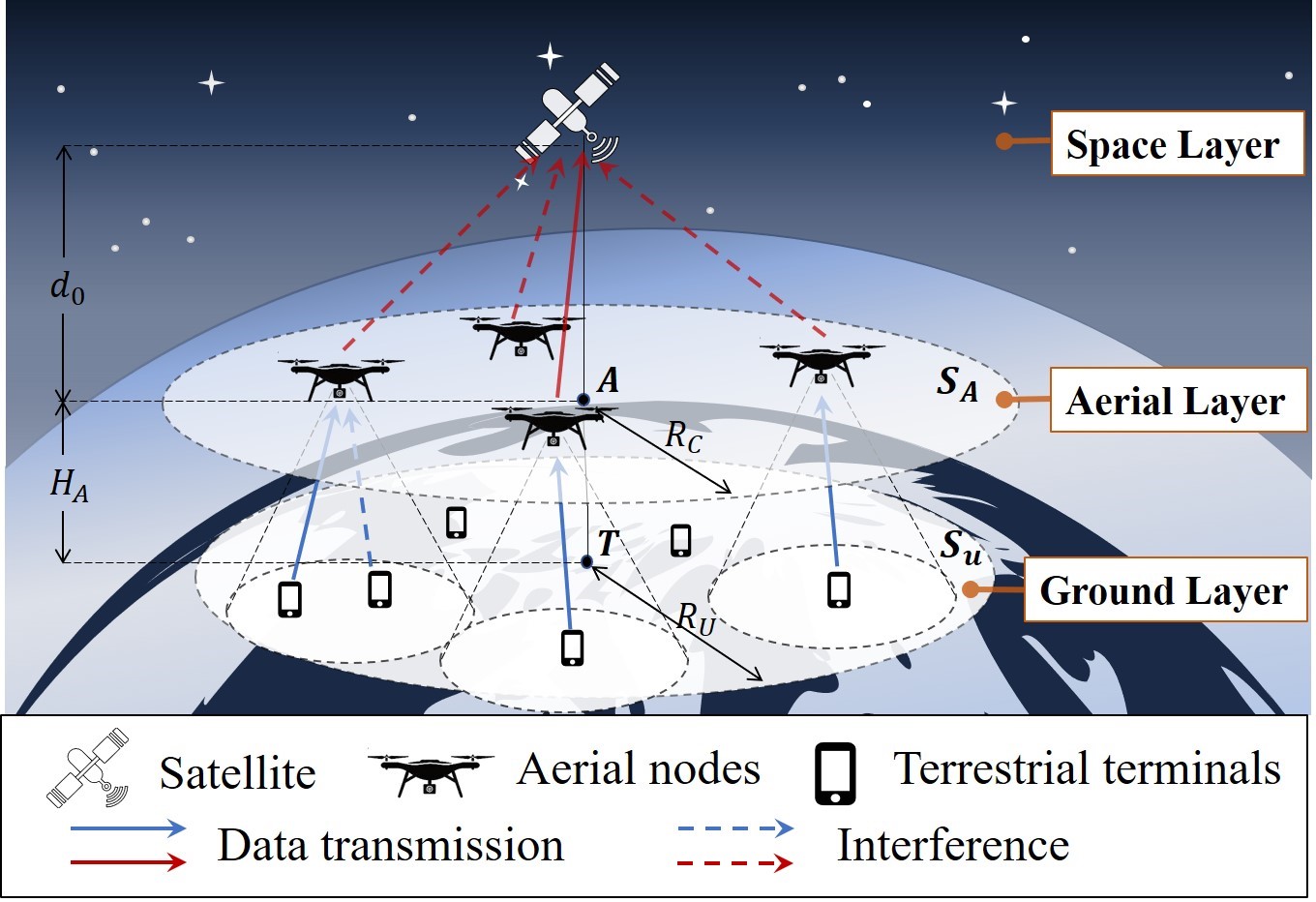}
\end{center}
\vspace*{-4mm}
\caption{Illustration of the CSATN system.}
\label{fig:1}
\vspace*{-5mm}
\end{figure}
The aerial nodes serve as relays that can use any feasible relay protocol, such as amplify-and-forward (AF) and DF, while the satellite either processes the data as a space base station or as a space relay forwarding ground signals to remote ground gateways.  In order to enhance spectrum utilization, the communications on the T-A links, and the communications on the A-S links, share their individual frequency bands, respectively. As a result, the received signals at $A_n$ and $S$ are subject to multi-user interference (MUI) caused by other terrestrial terminals and other aerial nodes, respectively. It is worth noting that the aerial nodes employ a dual-band, dual radio frequency (DRF) transmission and reception mechanism. Hence, the transmission of signals from terrestrial terminals will not cause any interference with the transmission of signals from aerial nodes to the satellite. {\color{black}Furthermore, it is assumed that aerial nodes in this system fly at the same height $H_{\textrm{A}}$ above the earth surface while the satellite is located at a height of $d_0$ above the plane of aerial nodes\footnote{{\color{black}In practical applications, slight errors in the height of aerial nodes are inevitable, making it unrealistic for them to be perfectly aligned at the same altitude. However, in ultra-low altitude environments, the impact of small fluctuations in $H_{\textrm{A}}$ on the theoretic results can be neglected. Therefore, for the sake of simplicity, we assume that the aerial nodes are positioned at the same height.}}.} For the sake of conciseness, we also use A or AN and T or TN to stand for aerial node and terrestrial node, respectively.

\subsection{Topology Deployment}\label{S2.1}

\subsubsection{Deployment of TNs}

We consider a realistic scenario where the majority of users are geographically concentrated within a specific finite region, such as  a post-disaster gathering point. To represent this user concentration, we use a circular bounded area designated as ${\color{black}\mathcal{S_{\textrm{U}}}}$, with radius $R_{\textrm{U}}$. Within this area, users are modeled as BPP denoted as $\Phi_1$, with density $\lambda_{\textrm{T}}$. Additionally, users located inside the coverage area of an AN  are represented as $\Phi_{\textrm{U}}$.
  
\subsubsection{Deployment of ANs}

We exploit type-II MHCPP to mimic the deployment of ANs. The aerial transmitters are randomly dispersed on the circular plane, denoted as ${\color{black}\mathcal{S_{\textrm{A}}}}$, with radius $R_{\textrm{C}}$.  The vertical distance from the ANs to the ground is $H_{\textrm{A}}$. Additionally, the coverage of each AN $A$ on the ground can be modeled as a circular plane of radius $R_{\textrm{A}}$. That is, the coverage range of AN $A$ on the ground is the circular region ${\color{black}\mathcal{S}'_{\textrm{A}}}={\cal B}(\bm{a},R_{\textrm{A}})$, which contains TNs capable of establishing connection with the AN, where $\bm{a}$ is the ground projection position of $A$.
	 
Next we delineate the construction of MHCPP, represented as $\Phi_{\textrm{A}}$, utilizing a two-step approach.
	
Initially, candidate points are uniformly generated using a homogeneous PPP with density $\lambda_1$ within the circular space ${\color{black}\mathcal{S_{\textrm{A}}}}$. With respect to the quantity of candidate points, denoted as $N_c$, the PDF of the Poisson distribution can be represented as $\mathbb{P}(N_{c}\! =\! k)=\frac{\lambda_{1}S_{\textrm{A}}}{k!}\exp(-\lambda_{1}S_{\textrm{A}})$ where $k$ is the number of candidate points and $S_{\textrm{A}}$ denotes the area of space ${\color{black}\mathcal{S_{\textrm{A}}}}$.   
	
Subsequently, each candidate point is assigned with an individual score, randomly selected from a uniform distribution between 0 and 1.  Afterward, we selectively keep the point with the smallest score within a restricted circular space defined by the radius $D_{\min}$ and eliminate the others. To elaborate, a given point $Q$ initially acts as the center of a small circular repulsion space with the repulsion radius $D_{\min}$, i.e., ${\cal B}(\bm{q},D_{\min})$, where $\bm{q}$ is the position of $Q$. The concept of the repulsion space is that within this space, if there are multiple candidate nodes, some nodes need to be removed so that only a single candidate node finally remains within a circular repulsion space with a repulsion radius of $D_{\min}$. Specifically, if there are additional candidate points within ${\cal B}(\bm{q},D_{\mathrm{min}})$, their scores are compared with that of $Q$, and only the point with the smallest score is retained in each  repulsion space. This implies that the distance $D$ between every two points is larger than or equal to $D_{\mathrm{min}}$.

At the end of the second step, we construct the MHCPP, denoted as $\Phi_{\textrm{A}}$, with the density $\lambda_A$ which can be mathematically represented as $\lambda_{\textrm{A}}=\frac{1-\exp\big(-\pi D_{\mathrm{min}}^{2}\lambda_1\big)}{\pi D_{\mathrm{min}}^{2}}$. 

\vspace{-0.2cm}
\subsection{Terrestrial-Aerial Link Propagation Model}\label{S2.2}
	
\subsubsection{Channel Fading} 

The terrestrial-aerial connection commonly relies on LoS transmission [31]. In addition, terrestrial devices are often situated in outdoor environments prone to rich scattering, hence the signals from these devices to aerial nodes often propagate through multiple paths. Due to the versatility of the Nakagami fading model and the fact that it often better matches empirical data, we use it to effectively characterize the signal propagation characteristics of terrestrial-aerial links in complex environments possibly with a mixture of rich scattering and the LoS paths\footnote{The Rician and the Nakagami models behave approximately equivalently near their mean value. This observation has often been used to advocate the Nakagami model as an approximation for situations where a Rician model would be more appropriate.}.
 Specifically, the small-scaling channel fading coefficient $h_{\textrm{TA}}$ of T-A link exhibits Nakagami fading. Consequently, $|h_{\textrm{TA}}|^2$ can be modeled as a random variable that follows a normalized Gamma distribution. The Nakagami fading parameter is denoted by $N_{\textrm{TA}}$, and for the sake of simplicity, it is assumed to be a positive integer \cite{6932503}.

\subsubsection{SINR model} 

The SINR at the AN receiver $A$ of the T-A link from the target transmitter $T_m$ can be expressed as:
\begin{align}\label{eqSINR} 
	\mathrm{SINR}_1 =& \frac{	P_{\textrm{T}} \left|h_{T_m A}\right|^2 (H_{\textrm{A}}^2 + R_m^2)^{-\frac{\alpha_1}{2}}}{I_{\textrm{T}}+\sigma_{\textrm{T}}^2} \nonumber \\
	\approx & \frac{	P_{\textrm{T}} \left|h_{T_m\!A}\right|^2 (H_{\textrm{A}}^2+R_m^2)^{-\frac{\alpha_1}{2}}} {I_{\textrm{T}}},
\end{align}
where the interference power 
\begin{align}\label{eqInterf} 
  I_{\textrm{T}} =&\sum_{T_n\in\Phi_{\textrm{U}} {\backslash} T_m} 	P_{\textrm{T}}\left|h_{T_n }\right|^2 (H_{\textrm{A}}^2+R_n^2)^{-\frac{\alpha_1}{2}},
\end{align}
$\Phi_{\textrm{U}}=\Phi_1 \bigcap \mathcal{B}(\bm{a},R_{\textrm{A}}) $, $T_n$ are the interfering TNs, $P_{\textrm{T}}$ is the transmit power at TNs, $R_{m}$ and $R_n$ are the distances from the target node $T_m$ and the interference node $T_n$ to the terrestrial projection position $\bm{a}$ of the AN $A$, respectively, while $\alpha_1$ is the path-loss exponent of the T-A link, and $\sigma_{\textrm{T}}^2$ is the strength of additive white Gaussian noise (AWGN) of the T-A link. The approximation in (\ref{eqSINR}) is due to the fact that the system is interference limited and $I_{\textrm{T}} \gg \sigma_{\textrm{T}}^2$.

\subsection{Aerial-Satellite Link Propagation Model}\label{S2.3}

\subsubsection{Directional beamforming modeling} 

We assume that the antenna array on the satellite uses receive BF, and the antenna array on each aerial node uses transmit BF. In principle, the aerial nodes can also use receive BF, but this assumption is not mandatory herein.  The physical antenna array is approximated using a sectored antenna model for simplicity. This assumption is justified because we can use specific BF algorithms and corresponding sidelobe suppression techniques to achieve a single mainlobe with a significantly higher gain than the sidelobes \cite{9926973, 8059815}. Furthermore, since the satellite has a large coverage area, the signals transmitted by all the aerial nodes within the satellite's coverage area are assumed to be received by the satellite's mainlobe. Meanwhile, we assume that for the target aerial node, it always uses the mainlobe to transmit signals to the satellite, whilst for each interfering aerial node, whether its mainlobe or sidelobe aligns to the satellite is random. 
 In the sectored antenna model, it is assumed that the array gain remains a constant value $G$ for all the angles within the mainlobe, and it is a constant value $g$ within the sidelobe\footnote{Although this assumption is not perfectly accurate from theoretic perspective, it is indeed a sufficiently good approximation in most practical applications, as evidenced by the technical specifications of commercial sector antenna products for cellular base stations.}.

Let the overall directional BF gain from an AN $A$ to the satellite $S$ be $G_{\textrm{A-S}}$, $G_{\textrm{r}}$ signify the gain of the receiver array mainlobe, and the angular width of the mainlobe of transmitter be $\theta$. Further denote $G_{\textrm{t}}$ and $g_{\textrm{t}}$ as the transmitter array gains of the mainlobe and sidelobe, respectively. The target AN $A_m$ and its serving satellite $S$ can estimate the angles of departure and arrival, and then adjust their antenna steering orientations to exploit the maximum directivity gain \cite{6932503}. We assume that the estimates are perfect. The value of $G_{A_m\!-\!S}$ for the A-S link from a given target AN $A_m$ to the satellite $S$ is given by $D_{A_m\!-\!S}=G_{\textrm{t}} G_{\textrm{r}}$. On the other hand, for the $n$-th interfering link, we assume that the angle of arrival and the angle of departure of the signal are independently and uniformly distributed in the range $(0, \,2\pi]$, which results in a random directivity gain denoted as $D_{A_n\!-\!S}$. Thus, the probability distribution for $D_{A_n\!-\!S}$ can be expressed as: 
\begin{align}\label{eqDBFg} 
	G_{A_n\!-\!S} =& \left\{ \begin{array}{cl}
		G_{\textrm{t}} G_{\textrm{r}}	, & {\rm P}_{{\rm Mb},{\rm Mb}}=\frac{\theta}{2\pi} , \\ 
	g_{\textrm{t}} G_{\textrm{r}}	, & {\rm P}_{{\rm Sb},{\rm Mb}}=1-\frac{\theta}{2\pi} ,
	\end{array} \right.
\end{align}
where ${\rm P}_{tx,{\rm Mb}}$ denotes the probability of the $n$-th A-S link in state $(tx,{\rm Mb})$, with $tx \in\{{\rm Sb},{\rm Mb}\}$, while $\rm Sb$ and $\rm Mb$ denote the sidelobe and mainlobe.

\subsubsection{Channel fading} 

The SR fading model \cite{2003A} is commonly used to analyze the performance of satellite-terrestrial links in different fixed and mobile satellite services operating in various frequency bands \cite{jung2018outage}. For the low-altitude nomadic relay UAVs in the CSATN system considered, the presence of abundant shadowing effect due to complex terrain, combined with the predominant LoS propagation path, closely aligns with the characteristics of SR fading. In our analysis, $2 c$ denotes the mean power of the multipath component excluding the LoS component, and $\Omega$ denotes the mean power of the LoS component, while $q$ is the Nakagami fading parameter. Let the small-scaling channel fading coefficient of the A-S link be $h$. Then the PDF of $|h|^{2}$ is represented as \cite{zhang2019performance}:	
\begin{align}\label{eqPDFscf} 
	f_{|h|^2}(x) =& \kappa\, \exp(-\beta x)\, {_1F_1}\left(q\, ;1\, ;\delta x\right ),
\end{align}
where $\kappa =\frac{(2 c q)^q}{2 c(2 c q+\Omega)^q}$, $\delta=\frac{\Omega}{2 c(2 c q+\Omega)}$, and $\beta=\frac{1}{2 c}$.
     
\subsubsection{SINR model} 

According to the aforementioned model, the SINR at the satellite receiver $S$ originating from the target AN $A_m$ can be expressed as:
\begin{align}\label{eqSINRa-s} 
  \mathrm{SINR}_2
  \approx& \frac{P_m G_{A_m\!-\!S}\left|h_{A_m\!S}\right|^2 d_{m}^{-\alpha_2}}{I_{\textrm{A}}},
\end{align}
where the interference power
\begin{align}\label{eqInta-s} 
  I_{\textrm{A}} =& \sum_{A_n\in\Phi_{\textrm{A}} {\backslash} A_m} P_{n}G_{A_n\!-\!S}\left|h_{A_n\! S}\right|^2 d_{n}^{-\alpha_2},
\end{align}
$A_n$ are interfering ANs, $P_m$ and $P_n$ are the transmit power of target AN $A_m$ and interfering ANs $A_n$, respectively, $d_{m}$ and $d_{n}$ are the distances between $A_m$ and $S$ and between $A_n$ and $S$, respectively, while $\alpha_2$ is the path-loss exponent of the A-S link, and $ \sigma_{\textrm{A}}^2$ is the strength of AWGN of the A-S link. Again owing to the interference limited nature, $I_{\textrm{A}} \gg \sigma_{\textrm{A}}^2$.
        
\subsection{Association Policy}\label{S2.4}
       
There is a strong repulsion among the points in the MHCPP. We set the ground coverage radius of AN $R_{\textrm{A}}$ to half of the repulsion radius $D_{\textrm{min}}$, i.e., $R_{\textrm{A}}\! =\! \frac{D_{\textrm{min}}}{2}$. Hence, TNs can establish communication with at most one AN. We further explore an association policy where the links between TNs and ANs as well as between ANs and the satellite are randomly established. In other words, our target transmitter is randomly selected within the communication range and the T-A link is independent of the A-S link. Therefore, meeting the following criteria indicates the user's ability to successfully establish communication with the satellite: (i) the $\mathrm{SINR}_1$ for the T-A link surpasses a preset threshold denoted as $T_{h_1}$, and (ii) the $\mathrm{SINR}_2$ for the A-S link surpasses a predefined threshold denoted as $T_{h_2}$. By defining  $P_{\textrm{cov}}^{\textrm{T-A}}\! \triangleq\mathbb{P}\!\left( \mathrm{SINR}_1\! \geq\! T_{h_1}\right)$ and $P_{\textrm{cov}}^\textrm{{A-S}}\! \triangleq\! \mathbb{P}\left( \mathrm{SINR}_2\! \geq\! T_{h_2}\right)$, the coverage probability can be expressed as
\begin{align}\label{eqCPdf} 
	P_{\textrm{cov}} =& P_{\textrm{cov}}^{\textrm{T-A}}\, P_{\textrm{cov}}^{\textrm{A-S}} .
\end{align}

\section{Distribution of Distances}\label{S3}

In order to derive the coverage probability expressions, it is necessary to firstly characterize the distance distributions arising from the stochastic geometry of the system under consideration. In particular, we present the PDF for the distribution of distances in \textbf{Lemma~\ref{L1}} and \textbf{Lemma~\ref{L2}}.

As the BPP is employed to model the user distribution, $N_0$ TNs are scattered independently and uniformly across the circular wireless network centered at $\bm{u}$ with radius $R_{\textrm{U}}$. Consequently, the PDF describing the user distribution within the region $\mathcal{B}(\bm{u},R_{\textrm{U}})$ is given by
\begin{align}\label{Lemma1} 
	f\left(\bm{x}_i\right) =& \frac{1}{\pi R_{\textrm{U}}}^2 , ~ \left\|\bm{x}_i-\bm{u}\right\|_2 \leq R_{\textrm{U}} ,
\end{align}
where $\bm{x}_i$ for $i\in\{1,2, \cdots,N_0\}$ are the locations of users.
	
\begin{figure}[tbp]
\begin{center}
\includegraphics[width=0.9\columnwidth]{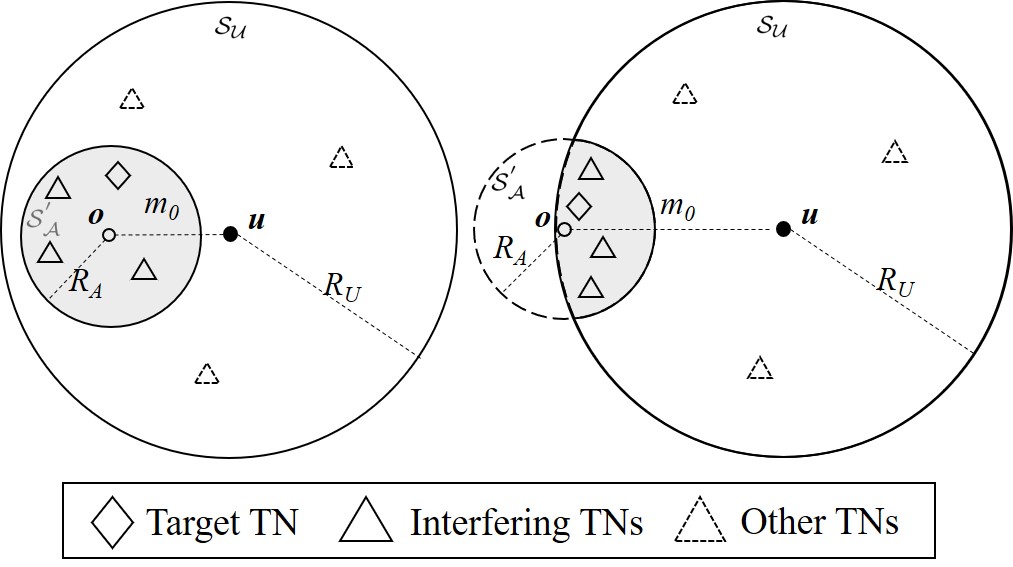}
\end{center}
\vspace*{-4mm}
\caption{The positional relationship between ${\color{black}\mathcal{S}_{\textrm{A}}'}$ and ${\color{black}\mathcal{S_{\textrm{U}}}}$, where $\bm{o}$ is the ground projection position of the AN considered.}
\label{position} 
\vspace*{-5mm}
\end{figure}

From Fig.~\ref{fig:1}, it can be seen that the coverage area of each AN is finite-size, impacting the users eligible to establish communication with an AN. Specifically, these users must be situated inside the intersecting region of ${\color{black}\mathcal{S}_{\textrm{A}}'}$ and ${\color{black}\mathcal{S_{\textrm{U}}}}$. Given the constant vertical separation between ANs and the ground, our focus narrows down to the distribution of the distance $r$ between the user and the AN's projection point on the ground. Let the random variable  $M$ be the distance from the projected location of an AN to the center of the user area. In this context, we can analyze the PDF of the distance $r$ conditioned on $M$ equal to $m_0$. Based on the geometric relationships, it becomes apparent that when $M > R_{\textrm{A}} + R_{\textrm{U}}$, there will be no users within the AN's coverage range. Thus, we assume that the radius $R_{\textrm{C}}$ of all the ANs' location area, ${\color{black}\mathcal{S}_{\textrm{A}}}$, is equivalent to the sum of $R_{\textrm{A}}$ and $R_{\textrm{U}}$, i.e., $R_{\textrm{C}} = R_{\textrm{A}} + R_{\textrm{U}}$.
\begin{figure*}[b]	\setcounter{equation}{9}
	\vspace*{-5mm}
	\hrulefill
	\vspace*{-1.5mm}
	\begin{align}\label{Lemma1.1} 
		f_{R^{3)}}(r|m_0) =& \frac{r\sqrt{4m_0^2R_{\textrm{U}}^2-\left(m_0+R_{\textrm{U}}^2-r^2\right)^2}}{m_0^2} + 2r\left(\varphi_{3}-\frac{1}{2}\sin\big(2\varphi_3\big)\right) \nonumber \\ &
		+\frac{\big(m_{0}^{2}+R_{U}^{2}-r^{2}\big)\sqrt{2R_{\textrm{U}}^2(m_0+r^2)-R_{\textrm{U}}^4-(m_0^2-r^2)^2}}{2m_0^{2}r} .
	\end{align}
	\vspace*{-2mm}
\end{figure*}
	
The PDF of the distance distribution from the users to the ground projection point of an AN can be categorized into two scenarios based on the relative spatial relationship between the coverage range of the AN, ${\color{black}\mathcal{S}_{\textrm{A}}'}$, and the user area, ${\color{black}\mathcal{S_{\textrm{U}}}}$. These two scenarios are illustrated in Fig.~\ref{position}. The first scenario arises when the complete area of ${\color{black}\mathcal{S}_{\textrm{A}}'}$ is entirely within ${\color{black}\mathcal{S_{\textrm{U}}}}$, while the second scenario occurs when these two areas have a partial intersection. Further details can be found in Appendix~\ref{ApA}.

\begin{Lemma}\label{L1}
The PDF describing the distance $r$ between a given TN and the projected location of the AN within the overlapping area of ${\color{black}\mathcal{S}_{\textrm{A}}'} \bigcap {\color{black}\mathcal{S_{\textrm{U}}}}$, conditioned on $M=m_0$, is expressed as: \setcounter{equation}{8}
\begin{align}\label{fr5} 
	f_R(r|m_0)\! =&\! \left\{\!\!\!\! \begin{array}{cl}
		f_{R^{1)}}(r|m_0)\!=\!\frac{2r}{R_{\rm{A}}^2}, \!\!\! &\!\!\!  0<r<R_{\rm{A}} \text{ and} \\
		\!\!\! &\!\!\! 0<m_0\!<\!R_{\rm{U}}\! -\!R_{\rm{A}}, \\
		f_{R^{2)}}(r|m_0)\!=\!\frac{2\pi r}{\gamma}, \!\!\! &\!\!\! 0\! <\! r\! <\! R_{\rm{U}}\! -\! m_0 \text{ and} \\
		\!\!\! &\!\!\! R_{\rm{U}}\! -\! R_{\rm{A}}\!<\!m_0<R_{\rm{U}}, \\
		f_{R^{3)}}(r|m_0), \!\!\! &\!\!\! R_{\rm{U}}\! -\! m_0\! <\! r\! <\! R_{\rm{A}} \text{ and} \\
		\!\!\! & \!\!\! R_{\rm{U}}\! -\! R_{\rm{A}}\!<\! m_0\!<\! R_{\rm{U}}, \\
		f_{R^{3)}}(r|m_0), \!\!\! &\!\!\! m_0\! -\! R_{\rm{U}}\! <\! r\! <\! R_{\rm{A}} \text{ and} \\
		\!\!\! &\!\!\! R_{\rm{U}}\! <\! m_0\!<\!R_{\rm{U}}\! +\! R_{\rm{A}} ,
	\end{array}\!\! \right.\!\!
\end{align}
where $f_{R^{3)}}(r|m_0)$ is given in (\ref{Lemma1.1}) at the bottom of this page,\setcounter{equation}{10}
\begin{align} 
  \gamma =& R_{\rm{U}}^2\! \left(\! \theta_2\! -\! \frac{1}{2}\sin\big(2\theta_2\big)\! \right)\! + R_{\rm{A}}^2\! \left(\! \varphi_2\! -\! \frac{1}{2}\sin\big(2\varphi_2\big)\! \right), \label{eqL1-1} \\
		\theta_2 =& \arccos\left(\frac{m_0 + R_{\rm{U}}^2 - R_{\rm{A}}^2}{2m_0 R_{\rm{U}}}\right) , \label{eqL1-2} \\
	\varphi_2 =& \arccos\left(\frac{m_0 + R_{\rm{A}}^2 - R_{\rm{U}}^2}{2m_0 R_{\rm{A}}}\right) , \label{eqL1-3} \\
	\varphi_3 =& \arccos\left(\frac{m_0^2 - R_{\rm{U}}^2 + r^2}{2m_0r}\right) . \label{eqL1-4}
\end{align}
	
\begin{proof}
See Appendix~\ref{ApA}.
\end{proof}
\end{Lemma}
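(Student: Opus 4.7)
The plan is to exploit the following: since the terrestrial terminals form a BPP on $\mathcal{S_{\textrm{U}}}$ governed by the uniform density in (8), a terminal conditioned on lying in the overlap $\mathcal{S}'_{\textrm{A}}\cap\mathcal{S_{\textrm{U}}}$ is itself uniformly distributed on that overlap. Hence the conditional CDF of the distance $r$ to the projection $\bm{a}$ is $F_R(r\mid m_0)=\bigl|\mathcal{B}(\bm{a},r)\cap\mathcal{S}'_{\textrm{A}}\cap\mathcal{S_{\textrm{U}}}\bigr|/\gamma$, with $\gamma$ the area of the overlap, and differentiating with respect to $r$ yields the PDF as the arc length of $\partial\mathcal{B}(\bm{a},r)$ inside the overlap, divided by $\gamma$.

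The first step is to classify the geometric sub-configurations dictated by $m_0$. If $0<m_0<R_{\textrm{U}}-R_{\textrm{A}}$, then $\mathcal{S}'_{\textrm{A}}\subset\mathcal{S_{\textrm{U}}}$, the overlap coincides with $\mathcal{S}'_{\textrm{A}}$ of area $\pi R_{\textrm{A}}^2$, and for every $r<R_{\textrm{A}}$ the circle $\partial\mathcal{B}(\bm{a},r)$ lies wholly inside the overlap, so its length is $2\pi r$ and $f_{R^{1)}}(r\mid m_0)=2r/R_{\textrm{A}}^2$. When $R_{\textrm{U}}-R_{\textrm{A}}<m_0<R_{\textrm{U}}+R_{\textrm{A}}$ the two discs meet in a proper lens, and I would compute $\gamma$ by decomposing the lens into the two circular segments cut off from $\mathcal{S}'_{\textrm{A}}$ and $\mathcal{S_{\textrm{U}}}$ by their common chord. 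Applying the law of cosines in the triangle with vertices $\bm{u}$, $\bm{a}$, and either of the two intersection points of $\partial\mathcal{S}'_{\textrm{A}}$ and $\partial\mathcal{S_{\textrm{U}}}$ delivers the half-angles $\theta_2$ and $\varphi_2$ of (12)-(13), and summing the two standard segment areas reproduces (11).

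The second step splits the intersecting case by the range of $r$. When $\bm{a}\in\mathcal{S_{\textrm{U}}}$ and $r<R_{\textrm{U}}-m_0$, the circle $\partial\mathcal{B}(\bm{a},r)$ stays inside $\mathcal{S_{\textrm{U}}}$; combined with $r<R_{\textrm{A}}$ this places the full perimeter $2\pi r$ inside the overlap, whence $f_{R^{2)}}=2\pi r/\gamma$. Otherwise $\partial\mathcal{B}(\bm{a},r)$ crosses $\partial\mathcal{S_{\textrm{U}}}$ in two points; the law of cosines in the triangle with vertices $\bm{u}$, $\bm{a}$, and one such point yields the half-angle $\varphi_3$ in (14), and the arc of $\partial\mathcal{B}(\bm{a},r)$ inside $\mathcal{S_{\textrm{U}}}$ has length $2r\varphi_3$. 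Writing this arc length equivalently as the $r$-derivative of the three-term circular-segment expansion of $|\mathcal{B}(\bm{a},r)\cap\mathcal{S_{\textrm{U}}}|$ produces the three summands displayed in $f_{R^{3)}}$ of (10).

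The main technical obstacle is the bookkeeping in the last sub-case $R_{\textrm{U}}<m_0<R_{\textrm{U}}+R_{\textrm{A}}$, where $\bm{a}$ lies outside $\mathcal{S_{\textrm{U}}}$ and the chord of $\mathcal{S_{\textrm{U}}}$ cut by $\partial\mathcal{B}(\bm{a},r)$ sits on the opposite side of $\bm{u}$ compared with the previous sub-case. I would check that the validity threshold becomes $r>m_0-R_{\textrm{U}}$ rather than $r>R_{\textrm{U}}-m_0$, that the arccosine in (14) continues to return the correct half-angle at $\bm{a}$ of the chord seen from there, and that the contributing segment of $\mathcal{S_{\textrm{U}}}$ is selected with the correct sign. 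Once this is verified, the same arc-length argument delivers $f_{R^{3)}}$ in both of the ranges stated in the lemma.
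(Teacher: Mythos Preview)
Your proposal is correct and follows essentially the same route as the paper: both arguments identify the conditional law as uniform on the lens $\mathcal{S}'_{\textrm{A}}\cap\mathcal{S_{\textrm{U}}}$, compute the CDF as an area ratio, and differentiate, with the paper carrying this out via explicit Cartesian integrals for the segment areas while you use the law of cosines and the arc-length interpretation of the derivative. The only cosmetic difference is that the paper presents $f_{R^{3)}}$ as the mechanical $r$-derivative of its two-term CDF expression (yielding the three summands of (10)), whereas you first note the equivalent compact form $2r\varphi_3/\gamma$ before matching it to that expansion.
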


\begin{remark}\label{Rem1} 
Due to the independent and uniform distribution of TNs within the overlapping area, it can be observed that when the distance between the projection point and the center of the user area is $M=m_0$, the distance variables $r_m$ and $r_n$ for the target TN and interference TNs are independently and identically distributed. Specifically, this can be expressed as $f_{R_m}(r_m|m_0)=f_{R_n}(r_n|m_0)=f_{R}(r|m_0)$.
\end{remark}

\begin{Lemma} \label{L2}
The PDF of the distance $M$ from the projection point of the AN to the center of the user area is given by
\begin{align}\label{fm} 
	f_M(m_0) =& \frac{2m_0}{\big(R_{\rm{A}}+R_{\rm{U}}\big)^2} , ~  0<m_0<R_{\rm{A}}+R_{\rm{U}}.
\end{align}
\end{Lemma}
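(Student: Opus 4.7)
The plan is to reduce the distribution of $M$ to the classical calculation for the radial coordinate of a uniformly distributed point in a disk. From the construction of $\Phi_{\textrm{A}}$ in Section~\ref{S2.1}, the aerial nodes are generated as candidates from a homogeneous PPP on the disk $\mathcal{S}_{\textrm{A}}$, followed by type-II Mat\'ern thinning. By the assumption stated immediately before the statement, the radius of $\mathcal{S}_{\textrm{A}}$ satisfies $R_{\textrm{C}}=R_{\textrm{A}}+R_{\textrm{U}}$, and $\mathcal{S}_{\textrm{A}}$ is concentric with the user region $\mathcal{S}_{\textrm{U}}$ centered at $\bm{u}$ (otherwise the stated support $0<m_0<R_{\textrm{A}}+R_{\textrm{U}}$ would not be attained with equality at the endpoint).

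First, I would argue that the ground projection $\bm{a}$ of a typical retained aerial node is marginally uniform on $\mathcal{S}_{\textrm{A}}$. The underlying PPP is homogeneous and hence uniform on $\mathcal{S}_{\textrm{A}}$; moreover, the score assignment and hard-core retention rule are invariant under rotations about the center of $\mathcal{S}_{\textrm{A}}$, so the marginal position law of any retained point is rotationally symmetric. Combined with the uniform density of the underlying PPP, this gives a uniform marginal law for $\bm{a}$ on $\mathcal{S}_{\textrm{A}}$, up to the well-known boundary effect of type-II MHCPPs which is typically absorbed into the standard modeling assumption in this literature.

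Second, once $\bm{a}$ is uniform on the disk of radius $R_{\textrm{A}}+R_{\textrm{U}}$ centered at $\bm{u}$, the distance $M=\|\bm{a}-\bm{u}\|_2$ has CDF obtained by an elementary area ratio,
\begin{equation*}
F_M(m_0)=\mathbb{P}(M\le m_0)=\frac{\pi m_0^2}{\pi(R_{\textrm{A}}+R_{\textrm{U}})^2}=\frac{m_0^2}{(R_{\textrm{A}}+R_{\textrm{U}})^2},\quad 0<m_0<R_{\textrm{A}}+R_{\textrm{U}}.
\end{equation*}
Differentiating with respect to $m_0$ yields the stated expression $f_M(m_0)=2m_0/(R_{\textrm{A}}+R_{\textrm{U}})^2$, and normalization $\int_0^{R_{\textrm{A}}+R_{\textrm{U}}} f_M(m_0)\,\mathrm{d}m_0=1$ provides an immediate sanity check.

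The main conceptual obstacle is justifying the uniformity of the retained aerial node's marginal position under the type-II Mat\'ern thinning, since the retention probability of a candidate depends on the local configuration of neighbors and is known to be slightly lower in the interior than near the boundary of $\mathcal{S}_{\textrm{A}}$. The clean, symmetry-based derivation above is the standard approximation used throughout the stochastic geometry literature on MHCPPs, and the lemma should be read under that convention; the remaining steps are routine geometric computations.
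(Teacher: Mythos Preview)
Your proposal is correct and follows essentially the same approach as the paper: compute the CDF of $M$ as the area ratio $m_0^2/(R_{\textrm{A}}+R_{\textrm{U}})^2$ for a point uniform on the disk $\mathcal{S}_{\textrm{A}}$ of radius $R_{\textrm{C}}=R_{\textrm{A}}+R_{\textrm{U}}$, then differentiate. If anything, you are more careful than the paper, which simply asserts the CDF directly; your explicit discussion of why the marginal position of a retained MHCPP point is treated as uniform (rotational invariance of the thinning rule, with the boundary effect absorbed into the standard modeling convention) is a useful addition rather than a departure.
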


\begin{proof}
Considering that the movement range of the projection point is within ${\cal B}(\bm{u},R_{\textrm{U}})$, the range of the distance $m_0$ from the projection point to the center of the user area ${\color{black}\mathcal{S_{\textrm{U}}}}$ is $0<m_ 0<R_{\textrm{U}}+R_{\textrm{A}}$. Thus the CDF of $M$ can be expressed as
\begin{align}\label{Fm1} 
	F_M(m_0) =& \frac{m_0^2}{\big(R_{\textrm{A}}+R_{\textrm{U}}\big)^2} , ~ 0<m_0<R_{\textrm{A}}+R_{\textrm{U}}.
\end{align}
The corresponding PDF can be derived by differentiating (\ref{Fm1}) with respect to $m_0$. This completes the proof. 
\end{proof}

\section{Coverage Probability}\label{S4}

In this section, we conduct an analysis on the coverage probabilities for both the T-A link and A-S link, assuming that at least one user is in ${\color{black}\mathcal{S}_{\textrm{A}}'}$. The coverage probability refers to the likelihood that the SINR at the receiver is larger than the minimum SINR threshold necessary for successful data transmission. In other words, if the SINR of the signal at a receiver surpasses a pre-established threshold, the transmitter is deemed to be inside the coverage area of the receiver's communication network.
		
\begin{figure*}[t]	\setcounter{equation}{20}
\begin{equation}\label{Lemma4.1} 
	\mathcal{L}_{I_{\textrm{T}}}(s) = \left\{
		\begin{aligned}
			\mathcal{L}_{I_1}(s) =& \sum_{n_I=1}^{N_0-1} \binom{N_0-1}{n_I} \frac{R_{\textrm{A}}^{2n_I}\big(R_{\textrm{U}}^2-R_{\textrm{A}}^2\big)^{N_0-1-n_I}}{R_{\textrm{U}}^{2(n_I+1)}} \left(\int_{0}^{R_{\textrm{A}}} D_2 \,\frac{2r_n}{R_{\textrm{A}}^2}\mathrm{d}r_n \right)^{\!n_I}\!\! , &\!\! 0 < m_0 <R_{\textrm{U}}\!-\!R_{\textrm{A}} ,\\ 
			\mathcal{L}_{I_2}(s) =& \sum_{n_I=1}^{N_0-1} \binom{N_0-1}{n_I} \frac{\gamma^{n_I}\big(R_{\textrm{U}}^2-\gamma\big)^{N_0-1-n_I}}{R_{\textrm{U}}^{2(n_I+1)}} \\ 
			&\times \left(\int_{R_{\textrm{U}}-m_0}^{R_{\textrm{A}}} D_2 \frac{2\pi r}{\gamma}\mathrm{d}r_n+\int_{R_{\textrm{U}}-m_0}^{R_{\textrm{A}}}D_1 D_2 \mathrm{d}r_n\right)^{n_I}\!\! , &\!\! R_{\textrm{U}}\!-\!R_{\textrm{A}}<m_0<R_{\textrm{U}}, \\
			\mathcal{L}_{I_3}(s) =& \sum_{n_I=1}^{N_0-1} \binom{N_0-1}{n_I} \frac{\gamma^{n_I}(R_{\textrm{U}}^2-\gamma)^{N_0-1-n_I}}{R_{\textrm{U}}^{2(n_I+1)}} \left(\int_{m_0-R_{\textrm{U}}}^{R_{\textrm{A}}} D_1D_2\,\mathrm{d}r_n\right)^{n_I}\!\! , &\!\!  R_{\textrm{U}}<m_0<R_{\textrm{U}}\!+\!R_{\textrm{A}} .
		\end{aligned}
  \right.
\end{equation} 
\hrulefill
\vspace*{-2mm}
\end{figure*}

\begin{figure*}[b]\setcounter{equation}{22}
\vspace*{-1mm}
\hrulefill 
\begin{align}\label{cov1} 
  P_{\textrm{cov}}^{\textrm{T-A}} =& \int \int \sum_{n=1}^{N_{\textrm{TA}}} (-1)^{n+1} \binom{N_{\textrm{TA}}}{n} \mathcal{L}_{I_{\textrm{T}}}(s) f_{R_m}(r_m|m_0)\,f_M(m_0) \,\mathrm{d}r_m \,\mathrm{d}m_0 \nonumber \\
  =& \int_0^{R_{\textrm{U}}-R_{\textrm{A}}} \int_0^{R_{\textrm{A}}} \sum_{n=1}^{N_{\textrm{TA}}} \,\,(-1)^{n+1} \binom{N_{\textrm{TA}}}{n} \,\,\mathcal{L}_{I_1}(s) \,\, f_{R^{1)}}(r_m|m_0)\,\, f_M(m_0)\,\,\mathrm{d}r_m \,\mathrm{d}m_0 \nonumber \\
  & + \int_{R_{\textrm{U}}-R_{\textrm{A}}}^{R_{\textrm{U}}} \Bigg( \int_{0}^{R_{\textrm{U}}-m_0} \sum_{n=1}^{N_{\textrm{TA}}}\,\, (-1)^{n+1} \binom{N_{\textrm{TA}}}{n} \,\,\mathcal{L}_{I_2}(s)\,\, f_{R^{2)}}(r_m|m_0)\,\, \mathrm{d}r_m \nonumber \\
  & + \int_{R_{\textrm{U}}-m_0}^{R_{\textrm{A}}} \sum_{n=1}^{N_{\textrm{TA}}} \,\,(-1)^{n+1} \,\,\binom{N_{\textrm{TA}}}{n}\,\, \mathcal{L}_{I_2}(s)  \,\, f_{R^{3)}}(r_m|m_0) \,\,\mathrm{d}r_m	\Bigg) f_M(m_0) \,\,\mathrm{d}m_0 \nonumber \\
  & + \int_{R_{\textrm{U}}}^{R_{\textrm{U}}+R_{\textrm{A}}} \int_{m_0-R_0}^{R_{\textrm{A}}} \sum_{n=1}^{N_{\textrm{TA}}} \,\,(-1)^{n+1} \binom{N_{\textrm{TA}}}{n} \,\, \mathcal{L}_{I_3}(s) f_{R^{3)}}(r_m|m_0)\,\,f_M(m_0)\,\,\mathrm{d}r_m \,\mathrm{d}m_0 . 
\end{align} 
\end{figure*}

\subsection{T-A Link}\label{S4.1}

\begin{Theorem}\label{The1}
The coverage probability of a TN communicating with an AN within the coverage range of the AN under the Nakagami fading channel is given by\setcounter{equation}{16}
\begin{align}\label{cov} 
	P_{\rm{cov}}^{\rm{T}\textrm{-}\rm{A}} \triangleq& \mathbb{P}( \mathrm{SINR}_1\geq T_{h_1}) \nonumber \\
	=& \int \int \sum_{n=1}^{N_{\rm{TA}}}(-1)^{n+1}\binom{N_{\rm{TA}}}{n}\mathbb{E}_{I}\left[ \exp\left(-s I_{\rm{T}} \right)\right] \nonumber \\
	& \times f_{R_m}(r_m|m_0)\,f_M(m_0) \,\mathrm{d}r_m \,\mathrm{d}m_0,		
\end{align}
where $T_{h_1}$ is the SINR threshold of the T-A link, 
\begin{align}\label{eqLapTra} 
 	\mathbb{E}_{I}\left[\mathrm{exp}\left(-s{I_{\rm{T}}} \right)\right] =& \mathcal{L}_{I_{\rm{T}}}\left(s\right),
\end{align}
is the Laplace transform of the cumulative interference power $I_{\rm{T}}$, $s= \frac{n\eta T_{h_1} (H_{\rm{A}}^2+r_m^2)^{\frac{\alpha_1}{2}}}{P_{\rm{T}}}$ with $\eta =N_{\rm{TA}}(N_{\rm{TA}}!)^{-\frac{1}{N_{\rm{TA}}}}$, and $N_{\rm{TA}}$ is Nakagami fading parameter. 
\end{Theorem}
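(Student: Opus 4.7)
The plan is to derive the coverage probability in three conceptual stages: (i) convert the Nakagami tail probability into an alternating sum of exponentials of the interference, (ii) take expectation over the interference to obtain a Laplace transform, and (iii) deaverage over the two distance variables $R_m$ and $M$ using the distributions supplied by Lemma~\ref{L1} and Lemma~\ref{L2}.

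First I would start from the definition $P_{\textrm{cov}}^{\textrm{T-A}}\triangleq \mathbb{P}(\mathrm{SINR}_1\geq T_{h_1})$ and substitute the interference-limited expression (\ref{eqSINR}). Rearranging the inequality isolates the target channel gain:
\begin{equation*}
    |h_{T_m A}|^{2}\,\geq\,\frac{T_{h_1}\,(H_{\textrm{A}}^{2}+R_{m}^{2})^{\alpha_1/2}}{P_{\textrm{T}}}\,I_{\textrm{T}}.
\end{equation*}
Since $|h_{T_m A}|^{2}$ follows a normalized Gamma distribution with integer shape $N_{\textrm{TA}}$, its CCDF admits the standard tight Alzer-type approximation
\begin{equation*}
    \mathbb{P}\!\left(|h_{T_m A}|^{2}\geq g\right)\;\approx\; 1-\bigl(1-e^{-\eta g}\bigr)^{N_{\textrm{TA}}},\qquad \eta=N_{\textrm{TA}}(N_{\textrm{TA}}!)^{-1/N_{\textrm{TA}}},
\end{equation*}
which when expanded by the binomial theorem yields the alternating sum $\sum_{n=1}^{N_{\textrm{TA}}}(-1)^{n+1}\binom{N_{\textrm{TA}}}{n}e^{-n\eta g}$. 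This is the source of the alternating-sign binomial sum appearing in (\ref{cov}) and of the factor $n\eta$ inside $s$.

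Next, conditioning on the distance $R_m=r_m$ and on the interference realization, I would substitute $g=T_{h_1}I_{\textrm{T}}(H_{\textrm{A}}^{2}+r_{m}^{2})^{\alpha_1/2}/P_{\textrm{T}}$ and take the expectation with respect to $I_{\textrm{T}}$. By linearity, the expectation passes inside the finite sum, and each term becomes $\mathbb{E}_{I_{\textrm{T}}}[\exp(-s I_{\textrm{T}})]=\mathcal{L}_{I_{\textrm{T}}}(s)$ with the indicated $s$. This delivers the integrand of (\ref{cov}) prior to spatial deaveraging.

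Finally I would deaverage over the spatial randomness of the target user's location. Because the AN's ground-projection lies at distance $M$ from the user-region centre and the target TN lies at distance $R_m$ from that projection, the joint law is $f_{R_m}(r_m|m_0)f_M(m_0)$ by conditioning, with the two factors supplied respectively by Lemma~\ref{L1} and Lemma~\ref{L2}. Integrating yields exactly (\ref{cov}). The main obstacle I anticipate is not the outer structure (which is routine once the Alzer expansion is invoked) but the fact that $f_{R_m}(r_m|m_0)$ is piecewise, so the double integral must be split into the four regimes prescribed by Lemma~\ref{L1}; this is cleanly done in the subsequent display (\ref{cov1}), but it requires care to keep the limits consistent with the overlap geometry of $\mathcal{S}_{\textrm{A}}'$ and $\mathcal{S}_{\textrm{U}}$, and to ensure that $\mathcal{L}_{I_{\textrm{T}}}(s)$ is evaluated with the correct branch of $f_{R_n}(r_n|m_0)$ in each regime (as anticipated in (\ref{Lemma4.1})).
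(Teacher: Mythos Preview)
Your proposal is correct and follows essentially the same route as the paper's proof in Appendix~\ref{ApB}: condition on $(R_m,M)$, rewrite the SINR event as a tail condition on the normalized Gamma variable $|h_{T_mA}|^2$, invoke Alzer's tight bound $\mathbb{P}(|h|^2<\psi)\lesssim(1-e^{-\eta\psi})^{N_{\rm TA}}$ with $\eta=N_{\rm TA}(N_{\rm TA}!)^{-1/N_{\rm TA}}$, expand via the binomial theorem, and identify each term as $\mathcal{L}_{I_{\rm T}}(s)$ before deaveraging over $f_{R_m}(r_m|m_0)f_M(m_0)$. The only cosmetic difference is that the paper writes the spatial integral first and then evaluates the conditional probability, whereas you describe the steps in the opposite order; substantively the arguments are identical.
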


\begin{proof}
See Appendix~\ref{ApB}.
\end{proof}

\begin{Lemma}\label{L3}
Laplace transform of random variable $I_{\rm{T}}$ is 
\begin{align}\label{eqLapTra2} 
	\!\!\mathcal{L}_{I_{\rm{T}}}\!\left(s\right)\! =&\mathbb{E}_{N_I,R_n}\!\! \Bigg[ \!\prod_{T_n\!\in\Phi_{\rm U}  \! {\backslash} T_m} \underset{D_2}{\underbrace{\left(\!1\!+\!\frac{s P_{\rm{T}}}{N_{\rm{TA}}(H_{\rm{A}}^2\! +\! r_n^2)^{\frac{\alpha_1}{2}}}\!\right)^{\!\!\!\!-N_{\rm{TA}}}}} \Bigg]\! ,\!
\end{align}
where the expectation is over the number of interference users $N_I$ and the interfering users' distances $R_n$.
\end{Lemma}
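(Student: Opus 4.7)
The plan is to start from the definition $\mathcal{L}_{I_{\textrm{T}}}(s) = \mathbb{E}\!\left[\exp(-s I_{\textrm{T}})\right]$ and work conditionally on the realization of the interfering user set $\Phi_{\textrm{U}}\setminus T_m$. Substituting the expression (\ref{eqInterf}) for $I_{\textrm{T}}$ converts the exponential of a sum into a product of exponentials. Since the small-scale fading coefficients $\{|h_{T_n}|^2\}_{T_n}$ are mutually independent across distinct interferers (and independent of the point process realization), the expectation over fading can be pushed inside the product, leaving a per-interferer scalar expectation to be computed.

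The key analytical step is then to evaluate, for a single interferer $T_n$ at projected distance $r_n$, the Laplace transform $\mathbb{E}\!\left[\exp\bigl(-c_n |h_{T_n}|^2\bigr)\right]$ with $c_n = s P_{\textrm{T}}(H_{\textrm{A}}^2 + r_n^2)^{-\alpha_1/2}$. By the T-A channel fading assumption, $|h_{T_n}|^2$ follows a normalized Gamma distribution with integer shape $N_{\textrm{TA}}$ and scale $1/N_{\textrm{TA}}$, so its moment generating function evaluated at $-c_n$ is the standard $(1 + c_n/N_{\textrm{TA}})^{-N_{\textrm{TA}}}$. Plugging $c_n$ back in reproduces precisely the factor $D_2$ appearing inside the product in the statement of the lemma.

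What remains outside the product is an expectation over the random number $N_I$ of interferers and their (random) projected distances $R_n$ to the aerial node's ground projection $\bm{a}$. The lemma leaves this expectation in unevaluated form as $\mathbb{E}_{N_I,R_n}[\cdot]$ because its computation hinges on the BPP statistics of $\Phi_1$ restricted to $\mathcal{B}(\bm{a},R_{\textrm{A}})$, together with the geometry-dependent conditional density $f_R(r|m_0)$ established in Lemma~\ref{L1} and the density $f_M(m_0)$ from Lemma~\ref{L2}; this residual expectation is carried out explicitly in the subsequent display (\ref{Lemma4.1}).

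The step I expect to require the most care is not the Gamma MGF calculation itself, which is routine, but rather the justification that the factorization is legitimate. Specifically, one must argue that, conditional on $T_m$ being the designated target, the remaining points of $\Phi_1\cap \mathcal{B}(\bm{a},R_{\textrm{A}})$ constitute a BPP of size $N_I$ whose locations are i.i.d. with the conditional law in Lemma~\ref{L1}, and that the fading realizations $\{|h_{T_n}|^2\}$ are independent of these locations; with Remark~\ref{Rem1} making the i.i.d.\ structure of $R_n$ explicit, the product-form expectation inside $\mathbb{E}_{N_I,R_n}[\cdot]$ is then unambiguously well-posed and the lemma follows.
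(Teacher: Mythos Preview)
Your proposal is correct and follows essentially the same approach as the paper's proof in Appendix~\ref{ApC}: factor $\exp(-sI_{\textrm{T}})$ into a product over interferers, use independence of the Nakagami fading to pull the fading expectation inside the product, and then apply the normalized Gamma MGF to obtain the $D_2$ factor. Your added commentary on why the factorization is legitimate (independence of fading from locations, i.i.d.\ structure via Remark~\ref{Rem1}) is more explicit than the paper's terse derivation, but the underlying argument is identical.
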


\begin{proof}
See Appendix~\ref{ApC}.
\end{proof}

The point distribution of the interference users can be described by a BPP. Therefore, the number of interference users $N_I$ follows a binomial distribution with a certain probability of success. The probability of success can be expressed as\setcounter{equation}{19}
\begin{eqnarray}\label{P_I} 
	P_{I}\!=\begin{cases}
		P_{I_1}=\frac{R_{\textrm{A}}^2}{R_{\textrm{U}}^2}	, &  0<m_0<R_{\textrm{U}}-R_{\textrm{A}} ,\\ 
		P_{I_2}=\frac{\gamma}{R_{\textrm{U}}^2}, &  R_{\textrm{U}}-R_{\textrm{A}}<m_0<R_{\textrm{U}}+R_{\textrm{A}},
	\end{cases}
\end{eqnarray}
where $\gamma$ is given in (\ref{eqL1-1}). Noting the PDF $f_{R}(r_n|m_0)$ (\ref{fr5}), we obtain the three expressions of $\mathcal{L}_{I_{\rm{T}}}\!\left(s\right)$ in the three different ranges of $0\! <\! m_0\! <\! R_{\textrm{U}}\!-\!R_{\textrm{A}}$, $R_{\textrm{U}}\!-\!R_{\textrm{A}}\! <\! m_0\! <\! R_{\textrm{U}}$ and $R_{\textrm{U}}\! <\! m_0\! <\! R_{\textrm{U}}\!+\!R_{\textrm{A}}$, which are given in (\ref{Lemma4.1}) at the top of the page, where $N_0$ is the total number of users, $D_1\! =\! f_{R^{3)}}(r_n|m_0)$,  and $D_2=\left(\!1\!+\!\frac{s	P_{\textrm{T}}}{N_{\textrm{TA}}(H_{\rm{A}}^2+r_n^2)^{\frac{\alpha_1}{2}}}\!\right)^{\!\!\!\!-N_{\textrm{TA}}}$.

$\mathcal{L}_{I_1}(s)$ for $0<m_0<R_{\textrm{U}}-R_{\textrm{A}}$ is derived as follows:\setcounter{equation}{21}
\begin{align}\label{eqLI1} 
	\mathcal{L}_{I_1}\!(s) =& \!\ \! \sum_{n_I=1}^{N_0-1}\! \binom{N_0-1}{n_I} (P_{I_1})^{n_I}\! \left(\! \int_{0}^{R_{\textrm{A}}} \!\!D_2 f_{R^{1)}}(r_n|m_0) \mathrm{d}r_n\! \right)^{\!n_I} \nonumber \\
	&\times (1-P_{I_1})^{N_0-1-n_I}	\nonumber \\
	=& \!\! \sum_{n_I=1}^{N_0-1}\! \binom{N_0-1}{n_I}\frac{R_{\textrm{A}}^{2n_I}(R_{\textrm{U}}^2-R_{\textrm{A}}^2)^{N_0-1-n_I}}{R_{\textrm{U}}^{2(n_I+1)}}\nonumber \\
	&\times  \left(\int_{0}^{R_{\textrm{A}}} D_2 \,\frac{2r_n}{R_{\textrm{A}}^2}\mathrm{d}r_n \right)^{\!n_I} .
\end{align}
	$\mathcal{L}_{I_2}(s)$ for $R_{\textrm{U}}-R_{\textrm{A}}<m_0<R_{\textrm{U}}$ is derived as follows:
\begin{eqnarray}
	\mathcal{L}_{I_2}\!\left(s\right)\!\!\!\!&=&\!\!\!\!\!\sum_{n_I=1}^{N_0-1}\!\!\binom{N_0-1}{n_I}(P_{I_2})^{n_I}\!\!\left(\int_{0}^{R_{\textrm{U}}-m_0}\!\!\!\!\!\!\! D_2 \,f_{R_2}(r_n|m_0)\mathrm{d}r_n  \right.\nonumber\\
	&+& \!\!\!\!\! \left. \int_{R_{\textrm{U}}-m_0}^{R_{\textrm{A}}}\!\!\!\!\!D_2 f_{R_3}(r_n|m_0)\mathrm{d}r_n \! \right)^{n_I} \!\!\!\!\!\!(1-P_{I_2})^{N_0-1-n_I}	\nonumber\\
	&=& \!\!\!\!\!\sum_{n_I=1}^{N_0-1}\!\binom{N_0-1}{n_I}\frac{\gamma^{n_I}(R_{\textrm{U}}^2-\gamma)^{N_0-1-n_I}}{R_{\textrm{U}}^{2(n_I+1)}}\nonumber\\
	&\times& \!\!\!\!\left(\int_{R_{\textrm{U}}-m_0}^{R_{\textrm{A}}}\!\!\!\!\!D_2 \frac{2\pi r}{\gamma}\mathrm{d}r_n+\int_{R_{\textrm{U}}-m_0}^{R_{\textrm{A}}}D_1 D_2\mathrm{d}r_n\right)^{n_I}\!\!.
\end{eqnarray}

$\mathcal{L}_{I_3}(s)$ for $R_{\textrm{U}}<m_0<R_{\textrm{U}}+R_{\textrm{A}}$ is derived as follows:
\begin{eqnarray}
	\mathcal{L}_{I_3}\!\left(s\right)\!\!\!\!&=&\!\!\!\!\!\!\sum_{n_I=1}^{N_0-1}\!\!\binom{N_0-1}{n_I}\!(P_{I_2})^{n_I}\!\!\left(\int_{m_0-R_{\textrm{U}}}^{R_{\textrm{A}}} \!\!\!\!\!\!\!D_2 \,f_{R_4}(r_n|m_0)\mathrm{d}r_n\!\right)^{\!n_I}\nonumber\\	&\times&\!\!\!(1-P_{I_2})^{N_0-1-n_I}	\nonumber\\
	&=& \!\!\!\!\!\sum_{n_I=1}^{N_0-1}\!\binom{N_0-1}{n_I}\,\, \frac{\gamma^{n_I}(R_{\textrm{U}}^2-\gamma)^{N_0-1-n_I}}{R_{\textrm{U}}^{2\,(n_I+1)}}\nonumber\\
	&\times& \left(\int_{m_0-R_{\textrm{U}}}^{R_{\textrm{A}}} D_1D_2\,\mathrm{d}r_n\right)^{n_I}.
\end{eqnarray}

By substituting (\ref{fr5}), (\ref{fm}) and (\ref{Lemma4.1}) into (\ref{cov}), we obtain the coverage probability $P_{\textrm{cov}}^{\textrm{T-A}}$ of the terrestrial-aerial link given in (\ref{cov1}) at the bottom of the page.

\begin{figure*}[b]\setcounter{equation}{27}
\vspace*{-4mm}
\hrulefill
\vspace*{-2mm}
\begin{align}\label{eqCPa-s} 
	P_{\textrm{cov}}^{\textrm{A-S}} =&	1 - \sum_{k=0}^{\infty} \frac{\Psi (k)}{(\beta-\delta)^{k+1}} \Gamma (k+1) \sum_{t=0}^{k+1} \binom{k+1}{t} (-1)^t \exp\left(\lambda_{\textrm{A}} S_{\textrm{A}} (M_2 -1) \right) \nonumber \\ 
	=& 1 - \sum_{k=0}^{\infty} \frac{(-1)^{k} \kappa \delta^{k} (1-q)_{k}\,}{k! \, (\beta-\delta)^{k+1}} \sum_{t=0}^{k+1} \binom{k+1}{t} (-1)^t  \nonumber \\ 
	& \times \exp\Bigg(\frac{\pi (R_{\textrm{U}}+R_{\textrm{A}})^2 \left( 1- \exp\big(-\pi D_{\mathrm{min}}^{2} \lambda_1\big) \right)}{\pi D_{\mathrm{min}}^{2}} \Bigg(\frac{(2cq)^q \big( 1 + \frac{2 t c P_n 	G_{\textrm{t}} G_{\textrm{r}} \zeta (\beta-\delta) T_{h_2}}{P_m G_{A_m\!-\!S}}\big)^{q-1}}{\left((2 c q+\Omega )\big(1+\frac{2 t c P_n 	G_{\textrm{t}} G_{\textrm{r}} \zeta (\beta-\delta) T_{h_2}}{P_m G_{A_m\!-\!S}}\big) - \Omega\right)^q} \frac{\theta}{2\pi} \nonumber \\ 
	& + \frac{(2 c q)^q \big(1+\frac{2 t c P_n 	g_{\textrm{t}}  G_{\textrm{r}}\zeta (\beta-\delta) T_{h_2}}{P_m G_{A_m\!-\!S}}\big)^{q-1}}
{\left((2 c q+\Omega ) \big(1+\frac{2 t c P_n 	g_{\textrm{t}}  G_{\textrm{r}} \zeta (\beta-\delta) T_{h_2}}{P_m G_{A_m\!-\!S}}\big) - \Omega\right)^q} \left(1-\frac{\theta}{2\pi}\right) - 1 \Bigg) \Bigg) .
\end{align} 
\vspace*{-1mm}
\end{figure*}
		
\subsection{A-S Link}\label{S4.2}

The target transmitter $A_m$ has the transmit power $P_m$, while the power level for other $A_n$ in $\Phi_{\textrm{A}}$ is $P_n$.
	
\begin{Theorem}\label{T2}
The coverage probability for an arbitrarily located AN under the SR fading channel is given by\setcounter{equation}{23}
\begin{align}\label{eqT2} 
		P_{\rm{cov}}^{\rm{A}\textrm{-}\rm{S}} \triangleq & \mathbb{P}\big(\mathrm{SINR}_2\geq T_{h_2}\big) \nonumber \\
	=& 1 - \sum_{k=0}^{\infty} \frac{\Psi (k)}{(\beta-\delta)^{k+1}}\, \Gamma (k+1)\, \sum_{t=0}^{k+1} \binom{k+1}{t} \nonumber \\
	&	\times (-1)^t \mathbb{E}\left[ \exp\big(-{s}'	I_{\rm{A}} \big) \right] ,		
\end{align}
where $\Psi (k) = \frac{(-1)^{k} \kappa \delta^{k}}{(k!)^{2}}(1-q)_{k}$, ${s}'=\frac{t\zeta(\beta-\delta) T_{h_2} d_{m}^{\alpha_2}}{P_m G_{A_m\!-\!S}}$, $\zeta=(\Gamma(k+2))^{-\frac{1}{k+1}}$, and $q$ is Nakagami fading parameter of the SR fading model.
\end{Theorem}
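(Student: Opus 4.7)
The plan is to reduce $\mathbb{P}(\mathrm{SINR}_2 \geq T_{h_2})$ to the complementary CDF of the SR fading power $|h_{A_m S}|^2$, rewrite that CDF as a weighted sum of pure exponentials in $I_{\mathrm{A}}$, and then recognise the resulting expectation over $I_{\mathrm{A}}$ as a Laplace transform. From the interference-limited SINR expression (\ref{eqSINRa-s}), conditioning on $I_{\mathrm{A}}$ and on the target distance $d_m$ gives
\[
P_{\mathrm{cov}}^{\mathrm{A}\text{-}\mathrm{S}} = \mathbb{E}\!\left[\,\overline{F}_{|h|^2}\!\left(\tfrac{T_{h_2}\, d_m^{\alpha_2}}{P_m\, G_{A_m\!-\!S}}\,I_{\mathrm{A}}\right)\right],
\]
so everything hinges on obtaining a Laplace-friendly form for $\overline{F}_{|h|^2}$.

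My first move would be to convert the SR density (\ref{eqPDFscf}) into a polynomial-times-exponential series. Applying Kummer's transformation ${}_1F_1(q;1;\delta x) = e^{\delta x}\,{}_1F_1(1-q;1;-\delta x)$ and expanding the hypergeometric series termwise yields $f_{|h|^2}(x) = \sum_{k=0}^{\infty} \Psi(k)\, x^k e^{-(\beta-\delta)x}$ with $\Psi(k) = \frac{(-1)^k\kappa\delta^k(1-q)_k}{(k!)^2}$, exactly matching the definition in the statement. Integrating termwise over $[0,y]$ via the substitution $u=(\beta-\delta)x$ then produces $F_{|h|^2}(y) = \sum_k \frac{\Psi(k)}{(\beta-\delta)^{k+1}}\,\bar{\gamma}(k+1,(\beta-\delta)y)$.

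Because a lower incomplete gamma function does not combine cleanly with the Laplace transform of $I_{\mathrm{A}}$, I would next invoke Alzer's tight approximation $\bar{\gamma}(k+1,z) \approx \Gamma(k+1)\bigl(1 - e^{-\zeta z}\bigr)^{k+1}$ with $\zeta = (\Gamma(k+2))^{-1/(k+1)}$, and then binomial-expand the $(k+1)$-th power to write it as $\sum_{t=0}^{k+1}\binom{k+1}{t}(-1)^t e^{-t\zeta(\beta-\delta)y}$. Substituting $y = \frac{T_{h_2} d_m^{\alpha_2}}{P_m G_{A_m\!-\!S}} I_{\mathrm{A}}$ turns each exponential into $e^{-s' I_{\mathrm{A}}}$ with precisely the $s'$ stated in the theorem. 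Taking the expectation over $I_{\mathrm{A}}$ replaces every exponential by $\mathcal{L}_{I_{\mathrm{A}}}(s') = \mathbb{E}[\exp(-s' I_{\mathrm{A}})]$, and regrouping the constants recovers (\ref{eqT2}).

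The main obstacle I anticipate is the Alzer step: it is an approximation rather than an identity, so a rigorous version of the argument must show that the truncation error is uniformly small across the summation indices $(k,t)$ and that the infinite sum in $k$ may legitimately be interchanged with the expectation over $I_{\mathrm{A}}$. A subsidiary concern is absolute convergence of the $(1-q)_k\delta^k/(k!)^2$-weighted series used to justify the termwise integration in the second step; standard tail bounds on ${}_1F_1$ should suffice, but this is where any careful treatment of the theorem must concentrate its technical effort.
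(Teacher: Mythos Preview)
Your proposal is correct and follows essentially the same route as the paper's proof: Kummer's transformation to obtain the series form of the SR density, termwise integration to express the CDF via lower incomplete gammas, Alzer's approximation $\bar{\gamma}(k+1,z)\approx\Gamma(k+1)(1-e^{-\zeta z})^{k+1}$, and a binomial expansion to reduce everything to $\mathbb{E}[\exp(-s' I_{\mathrm{A}})]$. Your caveats about the Alzer step being an approximation and the need to justify interchanging the infinite sum with the expectation are well taken; the paper itself treats the Alzer step as an approximation (writing ``$\approx$'') and does not address these convergence issues explicitly.
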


\begin{proof}
See Appendix~\ref{ApD}.
\end{proof}

To simplify the expression of $P_{\textrm{cov}}^{\textrm{A-S}}$, we first note that the expectation of $\mathbb{E}\left[\exp(-{s}'I_{\textrm{A}})\right]$ is over the the distance $d_m$ between $A_m$ and $S$ as well as the interference, i.e.,
\begin{align}\label{eqESI} 
	\mathbb{E}\left[\exp(-{s}'I_{\textrm{A}})\right] =& \mathbb{E}_{d_m,I} \left[\exp(-{s}' I_{\textrm{A}})\right] .
\end{align}
Compared with the finite-size operational range of ANs, typically spanning only a few kilometers, the spatial separation between the satellite and ANs is considerably larger, extending to hundreds even thousands of kilometers. Hence, the distance disparity between the satellite and ANs is negligible. Therefore, it is postulated that all of the aerial transmitters possess an equal transmission distance to $S$, i.e., $d_m=d_0$. Since $d_m$ is approximately a constant, (\ref{eqESI}) depends on the interference only. Consequently, we have
\begin{align}\label{eqLT-IA} 
  \mathbb{E}\left[\exp(-{s}'I_{\textrm{A}})\right] =& \mathcal{L}_{I_{\textrm{A}}}\left({s}'\right) ,
\end{align}
that is, $\mathbb{E}\left[\exp(-{s}'I_{\textrm{A}})\right]$ is the Laplace transform of the cumulative interference power $I_{\textrm{A}}$.

\begin{Lemma}\label{L4}
Laplace transform of random variable $	I_{\rm{A}} $ is
\begin{align}\label{eqL4}  
	\mathcal{L}_{	I_{\rm{A}} }\left({s}'\right) =& \exp\big(	\lambda_{\rm A} 	S_{\rm{A} } (M_2 -1)\big),
\end{align}
where $\lambda_{\rm A}\! =\! \frac{1-\mathrm{exp}(-\pi D_{\mathrm{min}}^{2}\lambda _1)}{\pi D_{\mathrm{min}}^{2}}$, $	S_{\rm{A}} \! =\! \pi 	R_{\rm{C}}^2$, and $ M_2\! =\! M_1(t_1)\frac{\theta}{2\pi}\! +\! M_1(t_2)(1\!-\!\frac{\theta}{2\pi})$ with $M_1(t_l)\! =\! \frac{(2cq)^q(1+2ct_l)^{q-1}}{[(2cq+\Omega )(1+2ct_l)-\Omega]^q} $, $l\!\in\!\left\{ 1,2\right\}$ , $t_1\! =\! {s}'P_n d_0^{-\alpha_2}	G_{\rm{t}} 	G_{\rm r} $ and  $t_2\! =\! {s}'P_n d_0^{-\alpha_2}	g_{\rm{t}} 	G_{\rm r} $.
\end{Lemma}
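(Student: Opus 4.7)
My plan is to start from the definition
\begin{align*}
\mathcal{L}_{I_{\mathrm{A}}}({s}') = \mathbb{E}\left[\exp\left(-{s}' \sum_{A_n\in\Phi_{\mathrm{A}}\backslash A_m} P_n G_{A_n\!-\!S}\,|h_{A_n\!S}|^2\, d_n^{-\alpha_2}\right)\right],
\end{align*}
invoke the approximation $d_n \approx d_0$ justified in the preceding paragraph, and then pull the product over interferers inside the expectation using the mutual independence of the fading coefficients $\{|h_{A_n\!S}|^2\}$ and the beamforming gains $\{G_{A_n\!-\!S}\}$ across distinct $n$. This isolates a per-interferer factor that depends only on the randomness of $G_{A_n\!-\!S}$ and $|h_{A_n\!S}|^2$, not on the spatial location of $A_n$, which is the key simplification enabled by treating $d_n$ as constant.

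Next I would handle the point process. Since the type-II MHCPP does not admit a closed-form probability generating functional, I would approximate $\Phi_{\mathrm{A}}$ restricted to $\mathcal{S_{\mathrm{A}}}$ by a homogeneous PPP of matching intensity $\lambda_{\mathrm{A}} = \frac{1-\exp(-\pi D_{\min}^2\lambda_1)}{\pi D_{\min}^2}$ on a disc of area $S_{\mathrm{A}} = \pi R_{\mathrm{C}}^2$; removing $A_m$ is absorbed by Slivnyak's theorem. Applying the PGFL of a PPP and using that the integrand is spatially constant gives
\begin{align*}
\mathcal{L}_{I_{\mathrm{A}}}({s}') = \exp\!\left(-\lambda_{\mathrm{A}}\!\int_{\mathcal{S_{\mathrm{A}}}}\!\left(1 - \mathbb{E}\!\left[e^{-{s}' P_n G_{A_n\!-\!S}|h|^2 d_0^{-\alpha_2}}\right]\right)\mathrm{d}A\right) = \exp\!\big(\lambda_{\mathrm{A}} S_{\mathrm{A}}(M_2-1)\big),
\end{align*}
provided I can identify $M_2$ with the inner expectation.

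The remaining work is to evaluate $\mathbb{E}[\exp(-{s}' P_n G_{A_n\!-\!S}|h|^2 d_0^{-\alpha_2})]$ by conditioning first on $G_{A_n\!-\!S}$ using (\ref{eqDBFg}): with probability $\theta/(2\pi)$ the effective exponent is $t_1|h|^2$ and with probability $1-\theta/(2\pi)$ it is $t_2|h|^2$, where $t_1, t_2$ are as defined in the lemma. For each case I need the Laplace transform of $|h|^2$ under SR fading,
\begin{align*}
M_1(t) = \int_0^\infty e^{-tx} \kappa\, e^{-\beta x}\, {_1F_1}(q;1;\delta x)\,\mathrm{d}x,
\end{align*}
which I would compute using the identity $\int_0^\infty e^{-px}{_1F_1}(q;1;\delta x)\,\mathrm{d}x = p^{q-1}(p-\delta)^{-q}$ with $p = t+\beta$ and the explicit values of $\kappa,\beta,\delta$, yielding exactly $\frac{(2cq)^q(1+2ct)^{q-1}}{[(2cq+\Omega)(1+2ct)-\Omega]^q}$. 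Weighting by the mainlobe/sidelobe probabilities then produces $M_2 = M_1(t_1)\frac{\theta}{2\pi} + M_1(t_2)(1-\frac{\theta}{2\pi})$, completing the identification.

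The main obstacle is the PPP approximation step: replacing the type-II MHCPP by an intensity-matched PPP sidesteps the intractable MHCPP PGFL but must be acknowledged as an approximation that retains first-order density information while discarding the hard-core repulsion; I would justify it on the grounds that matching $\lambda_{\mathrm{A}}$ is the standard surrogate used in the stochastic geometry literature and that its accuracy will be validated by the Monte Carlo results reported in Section~\ref{S6}. A secondary technical point is the hypergeometric integral for $M_1(t)$; this reduces to a standard Laplace transform of ${_1F_1}$ and to the algebraic simplification $(2cs{+}1)(2cq{+}\Omega)-\Omega = (2cq{+}\Omega)(1{+}2cs)-\Omega$, which is routine once the substitutions for $\kappa,\beta,\delta$ are tracked carefully.
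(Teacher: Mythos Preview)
Your proposal is correct and follows essentially the same approach as the paper: approximate the MHCPP by an intensity-matched PPP, exploit the constant-distance assumption $d_n\approx d_0$ so that the per-interferer factor is spatially constant, then average over the shadowed-Rician fading (via its MGF) and the two-state beamforming gain to obtain $M_2$. The only cosmetic differences are that the paper quotes the SR MGF $M_1(t)$ directly from \cite{2003A} rather than deriving it via the ${_1F_1}$ integral, and it sums over the Poisson-distributed interferer count $N_{\mathrm{A}}$ explicitly rather than invoking the PPP PGFL---both routes collapse to the same $\exp(\lambda_{\mathrm{A}}S_{\mathrm{A}}(M_2-1))$ here because the integrand is constant in space.
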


\begin{proof}
See Appendix~\ref{ApE}.
\end{proof}

By inserting (\ref{eqLT-IA}) and (\ref{eqL4}) into (\ref{eqT2}), we obtain the closed-form analytical expression of $P_{\textrm{cov}}^{\textrm{A-S}}$ for the A-S link in (\ref{eqCPa-s}) at the bottom of this page.
\section{Average Ergodic Rate}\label{S5}
	
	The average achievable ergodic rate, also known as the Shannon throughput, is measured in bits per second per Hertz (bits/s/Hz). It represents the mean data rate that can be achieved by a communication system using a normalized frequency resource of 1 Hz. This metric also corresponds to the ergodic capacity of a fading communication link, normalized to a unit bandwidth. Formally, the average ergodic rate is defined as 
\setcounter{equation}{28}
\begin{eqnarray}\label{eqAER}	
	\bar{C} \triangleq \frac{1}{K}\mathbb{E}\left[ \log_2\left(1+\text{SINR}\right) \right] .
\end{eqnarray}

\subsection{T-A Link}\label{S5.1}

\begin{Theorem}\label{T3}
	Under a Nakagami fading  channel, the average rate of any arbitrary terrestrial node is given by
	\begin{align}\label{eqT3} 
	\bar{C}^{\rm{T}\textrm{-}\rm{A}} 	&= \frac{1}{K}\int \!\!\int \!\! \int_{t>0}  \sum_{n=1}^{N_{\rm{TA}}}(-1)^{n+1}\binom{N_{\rm{TA}}}{n}\mathbb{E}_{I_{\rm{T}}}\! \! \left[\exp\left(-s_1 {I_{\rm{T}}}\right)\right]\! \nonumber\\
		&\hspace*{4mm}\times f_{R_{m}}(r_m|m_0) f_M(m_0)\,\mathrm{d}t\,\mathrm{d}r_{m}\,\mathrm{d}m_{0}, 
	\end{align}
	where $s_1\! =\! \frac{n \eta (2^t\! -\! 1) } {P_{\rm{T}} (H_{\rm{A}}^2+r_m^2)^{-\frac{\alpha_1}{2}}} $,  $\mathbb{E}_{I_{\rm{T}}}\! \left[\exp\left(-s_1 {I_{\rm{T}}}\right)\right]$ is obtained by  replacing $s$ in $\mathcal{L}_{	I_{\rm{T}} }\left({s}\right)$ of (\ref{eqLapTra2}) with $s_1$, $f_{R_{m}}(r_m|m_0)$ is shown as (\ref{fr5}), and $f_M(m_0)$ is shown as (\ref{fm}).
\end{Theorem}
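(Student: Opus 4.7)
The plan is to reduce the average ergodic rate to an integral of coverage probabilities, and then invoke Theorem~\ref{The1} directly. The starting point is the definition in (\ref{eqAER}), which gives
$\bar{C}^{\textrm{T-A}}=\frac{1}{K}\mathbb{E}\bigl[\log_2(1+\mathrm{SINR}_1)\bigr]$.
Since $\log_2(1+\mathrm{SINR}_1)$ is a nonnegative random variable, I would apply the standard tail-integral identity
$\mathbb{E}[X]=\int_{0}^{\infty}\mathbb{P}(X>t)\,\mathrm{d}t$
to rewrite the expectation as
$\bar{C}^{\textrm{T-A}}=\frac{1}{K}\int_{0}^{\infty}\mathbb{P}\bigl(\log_2(1+\mathrm{SINR}_1)>t\bigr)\,\mathrm{d}t
=\frac{1}{K}\int_{0}^{\infty}\mathbb{P}\bigl(\mathrm{SINR}_1>2^{t}-1\bigr)\,\mathrm{d}t$.

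The key observation is that the integrand is precisely the T-A coverage probability $P_{\textrm{cov}}^{\textrm{T-A}}$ of Theorem~\ref{The1}, but with the fixed threshold $T_{h_1}$ replaced by the variable threshold $2^t-1$. Substituting $T_{h_1}\mapsto 2^{t}-1$ inside the Laplace-transform argument
$s=\frac{n\eta T_{h_1}(H_{\textrm{A}}^{2}+r_m^{2})^{\alpha_1/2}}{P_{\textrm{T}}}$
appearing in (\ref{cov})--(\ref{eqLapTra}) yields exactly
$s_1=\frac{n\eta(2^{t}-1)}{P_{\textrm{T}}(H_{\textrm{A}}^{2}+r_m^{2})^{-\alpha_1/2}}$,
which matches the expression claimed in the theorem. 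Therefore, replacing the $T_{h_1}$-dependent $s$ in the coverage-probability expression by the $t$-dependent $s_1$ and integrating over $t>0$ produces the claimed triple-integral formula.

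The remaining step is to justify swapping the outer integral over $t$ with the integrals over $r_m$ and $m_0$ and with the finite sum over $n$, which I would do by Fubini/Tonelli since all the integrands involved (probabilities, densities, and the $\binom{N_{\textrm{TA}}}{n}$-weighted Laplace transforms) are nonnegative and integrable over the bounded spatial domains specified in Lemma~\ref{L1} and Lemma~\ref{L2}. Then the final substitution of the Laplace transform $\mathcal{L}_{I_{\textrm{T}}}(s_1)$ from Lemma~\ref{L3}, written in the expectation form $\mathbb{E}_{I_{\textrm{T}}}[\exp(-s_1 I_{\textrm{T}})]$, yields (\ref{eqT3}).

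The main obstacle is conceptual rather than technical: one must recognize that Theorem~\ref{The1} is stated for a deterministic threshold but its derivation (via the Alzer-type bound approximating the Nakagami CCDF by a sum of exponentials, as used in Appendix~\ref{ApB}) is valid pointwise in $T_{h_1}$, so the same expression can be reused verbatim with $T_{h_1}$ promoted to the integration variable $2^{t}-1$. Once this is observed, the proof collapses to the tail-integral identity plus the substitution above, with no new probabilistic machinery required.
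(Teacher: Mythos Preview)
Your proposal is correct and follows essentially the same route as the paper's Appendix~\ref{ApF}: apply the tail-integral identity $\mathbb{E}[X]=\int_{t>0}\mathbb{P}(X>t)\,\mathrm{d}t$ to $\log_2(1+\mathrm{SINR}_1)$, then reuse the Alzer/binomial manipulation from Theorem~\ref{The1} with $T_{h_1}$ replaced by $2^t-1$. The only minor slip is your Tonelli justification---the $(-1)^{n+1}\binom{N_{\textrm{TA}}}{n}$-weighted summands are not all nonnegative---but since the sum over $n$ is finite and the whole inner expression is a probability bounded in $[0,1]$, the interchange is harmless.
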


\begin{proof}
	See Appendix~\ref{ApF}.
\end{proof}

\subsection{A-S Link} \label{S5.2}

\begin{Theorem}\label{T4}
	Under an SR fading channel, the average rate of any arbitrary aerial node is given by
	\begin{align}\label{eqT3-m} 
		& \bar{C}^{\rm{A}\textrm{-}\rm{S}}\!\! =\!\!  \frac{1}{K}\int_{t>0}\! \Bigg(1 - \sum_{k=0}^{\infty} \frac{\Psi (k)}{(\beta-\delta)^{k+1}} \Gamma (k+1) \sum_{v=0}^{k+1} \binom{k+1}{v} \nonumber \\
		&	\hspace*{5mm}\times (-1)^v \mathbb{E}_{I_{\rm{A}}}\left[\exp\big(-{s_1}'I_{\textrm{A}}\big)\right] \Bigg)\!\mathrm{d}t,   
		\end{align}
	with ${s}_1'\! =\! \frac{v\,\zeta(\beta-\delta)\, (2^t-1) \,d_{m}^{\alpha_2}}{P_m G_{A_m\!-\!S}}$, and $\mathbb{E}_{I_{\rm{A}}}\! \! \left[\exp\left(-s_1' {I_{\rm{A}}}\right)\right]$ is obtained by replacing $s'$ in $\mathcal{L}_{	I_{\rm{A}} }\left({s}'\right)$ of  (\ref{eqL4}) with $s_1'$.
\end{Theorem}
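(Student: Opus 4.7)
The plan is to mirror the derivation used for the T-A link in Theorem~\ref{T3} (Appendix~\ref{ApF}), but with the A-S link SINR and its coverage probability playing the corresponding roles. Starting from the definition in (\ref{eqAER}), I would write $\bar{C}^{\textrm{A-S}} = \frac{1}{K}\mathbb{E}\left[\log_2(1+\mathrm{SINR}_2)\right]$. Since $\mathrm{SINR}_2 \geq 0$, the logarithm is non-negative, so I can apply the standard tail-integral identity $\mathbb{E}[Y] = \int_{t>0} \mathbb{P}(Y>t)\,\mathrm{d}t$ for non-negative $Y$ to obtain
\begin{equation*}
\bar{C}^{\textrm{A-S}} = \frac{1}{K}\int_{t>0} \mathbb{P}\!\left(\log_2(1+\mathrm{SINR}_2) > t\right)\mathrm{d}t.
\end{equation*}

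Next, I would rewrite the event inside the probability as $\{\mathrm{SINR}_2 > 2^t - 1\}$, so the integrand is exactly the A-S coverage probability of Theorem~\ref{T2} evaluated at the threshold $T_{h_2} = 2^t - 1$. Substituting (\ref{eqT2}) directly with this replacement, and combining with the Laplace-transform representation $\mathbb{E}[\exp(-s' I_{\textrm{A}})] = \mathcal{L}_{I_{\textrm{A}}}(s')$ in (\ref{eqLT-IA}), the dummy parameter $s' = \frac{t\zeta(\beta-\delta)T_{h_2} d_m^{\alpha_2}}{P_m G_{A_m\!-\!S}}$ from Theorem~\ref{T2} becomes $s_1' = \frac{v\zeta(\beta-\delta)(2^t - 1) d_m^{\alpha_2}}{P_m G_{A_m\!-\!S}}$ once the outer binomial summation index is relabelled from $t$ to $v$ to avoid clashing with the new time-like integration variable $t$. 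The expression in (\ref{eqT3-m}) then follows directly, with $\mathbb{E}_{I_{\textrm{A}}}[\exp(-s_1' I_{\textrm{A}})]$ obtained from (\ref{eqL4}) by the analogous substitution $s'\to s_1'$.

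The main obstacles I anticipate are essentially bookkeeping rather than conceptual. First, care is needed with the relabelling of the summation index in Theorem~\ref{T2} so that no symbol clash occurs with the ergodic-rate integration variable $t$; this is the reason the statement uses $v$ and $s_1'$ in place of $t$ and $s'$. Second, a Fubini-type interchange between the outer integral $\int_{t>0} \cdot\,\mathrm{d}t$ and the infinite Taylor-type series $\sum_{k=0}^{\infty}$ arising from the confluent hypergeometric expansion must be justified, which reduces to verifying that the series $\sum_{k} \frac{\Psi(k)}{(\beta-\delta)^{k+1}}\Gamma(k+1)$ multiplied by the uniformly bounded finite binomial sum and the Laplace transform (which lies in $[0,1]$ as a tail probability) yields an integrable dominating function on $t>0$. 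Since $P_{\textrm{cov}}^{\textrm{A-S}}\in[0,1]$ for every fixed $t$ and the SINR has finite mean under the interference-limited SR model, the resulting outer integrand decays fast enough for $t\to\infty$ to ensure absolute convergence, so the interchange is legitimate and the final closed-form (\ref{eqT3-m}) is obtained.
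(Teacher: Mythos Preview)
Your proposal is correct and follows essentially the same approach as the paper: both start from the tail-integral identity for the non-negative random variable $\log_2(1+\mathrm{SINR}_2)$, rewrite the event as $\{\mathrm{SINR}_2 > 2^t-1\}$, and then invoke the coverage-probability machinery of Theorem~\ref{T2} at threshold $2^t-1$, with the binomial index relabelled from $t$ to $v$. The only cosmetic difference is that the paper's Appendix~\ref{ApG} re-expands the CDF/Alzer/binomial steps inline rather than citing Theorem~\ref{T2} directly, and it does not explicitly discuss the Fubini interchange you mention.
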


\begin{proof}
	See Appendix~\ref{ApG}.
\end{proof}

\section{Numerical Results}\label{S6}

In this section, we verify the derived analytical expressions using Monte Carlo simulations with 50,000 runs. The results from the analytical derivations of Section~\ref{S4} are indicated in the following figures as `Analysis', while the Monte Carlo results are indicated in the figures as `Simulation'. Unless otherwise specifically stated, the default system parameters utilized in the simulations are listed in Table~\ref{Table2}. 

\begin{table}[t] \scriptsize
	\vspace*{-1mm}
\caption{Default Simulation System Parameters.} 
\label{Table2}
\vspace*{-4mm}
\begin{center}
	\begin{tabular} {c|c|c}
		\Xhline{1.2pt}
		\toprule
		\textbf{Notation} & \textbf{Parameters}& \textbf{Values} \\
		\midrule
		$H_{\textrm{A}}$ & Height of ANs & 0.05\,km \\
		$d_0$	& Distance between ANs and the satellite & 400\,km \\
		$R_{\textrm{U}}$, $R_{\textrm{A}}$ & Radius of ${\color{black}\mathcal{S_{\textrm{U}}}}$ and ${\color{black}\mathcal{S_{\textrm{A}}}'}$ & 9.5\,km, 0.5\,km \\
		\emph{$D_{\mathrm{min}}$} & \makecell{Minimum distance between every\\ two candidate points} & 1\,km \\
		$	P_{\textrm{T}}$, $	P_{\textrm{A}}$ & Power of transmitters in \emph{TN} and \emph{AN} & 20\,dBw, 20\,dBw \\
		$	G_{\textrm{t}}$, $g_{\textrm{t}}$ & Antenna gains of the main and side lobes & 10\,dB, -10\,dB \\
		$	\lambda_{\textrm{T}}$ & Density of terrestrial nodes & $10^{-4}$ \\
		$\lambda_{1}$ & Density of candidate points & $5\times10^{-7}$ \\
		$N_{\textrm{TA}}$ & Nakagami fading parameter & 3 \\
		$(c,q,\Omega)$ & SR fading model & $(0.158,1,0.1)$ \\
		$\alpha_1$, $\alpha_2$ & Path-loss exponents of T-A and A-S link & 2, 2 \\
		$T_{h_1}$ & SINR threshold of the T-A link & variable \\
		$T_{h_2}$ & SINR threshold of the A-S link & -20\,dB \\
	 \bottomrule
\end{tabular}	
\end{center}
\vspace*{-4mm}
\end{table}

\begin{figure}[!t]
	\vspace*{-1mm}
	\begin{center}
		\includegraphics[width=0.9\columnwidth]{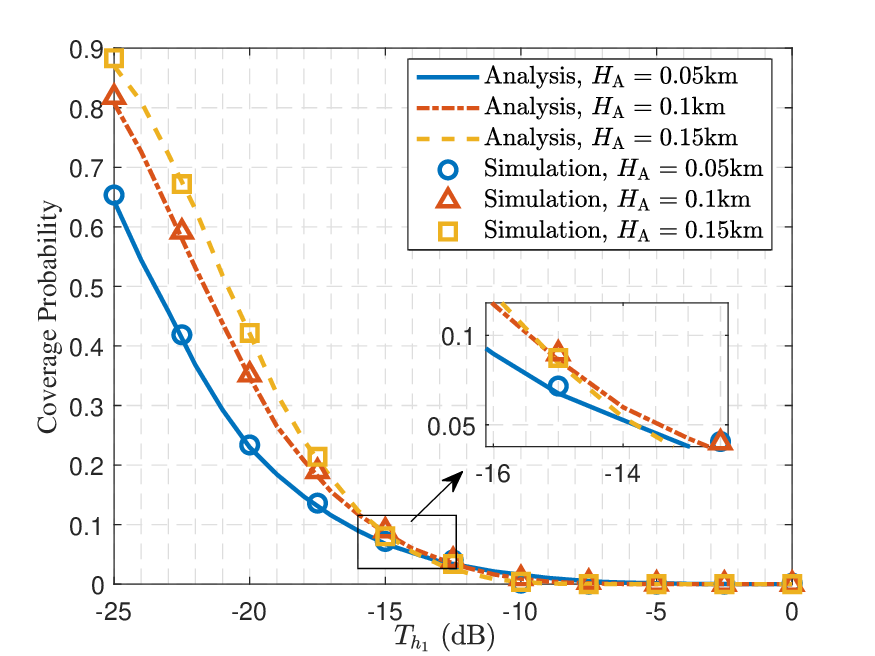}
	\end{center}
	\vspace*{-6mm}
	\caption{Coverage probability of the terrestrial-aerial link as the function of SINR threshold $T_{h_1}$ given three different $H_{\textrm{A}}$.}
	\label{fig:H} 
	\vspace*{-6mm}
\end{figure}
\subsection{Performance of Terrestrial-Aerial Link}\label{S6.1}

In this subsection, we conduct a simulation under various system parameters to verify the analytical coverage probability and average ergodic rate of the terrestrial-aerial link derived in Subsection~\ref{S4.1}, and the results obtained are  depicted in Figs.~\ref{fig:H} to \ref{fig:Rate2}. It can be seen that the analytical results closely match the corresponding simulation results, which supports the validity of our \textbf{Theorem~\ref{The1}}.

\begin{figure}[b]
	\vspace*{-6mm}
	\begin{center}
		\includegraphics[width=0.9\columnwidth]{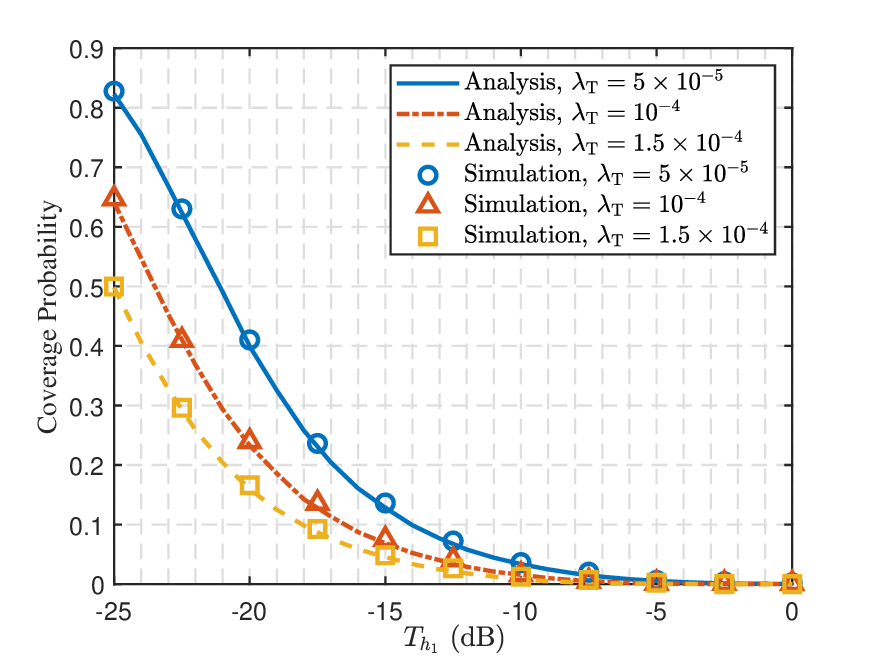}
	\end{center}
	\vspace*{-6mm}
	\caption{Coverage probability of the terrestrial-aerial link as the function of SINR threshold $T_{h_1}$ given three different ${\color{black}\lambda_{\textrm{T}}}$.}
	\label{fig:LambdaT} 
	\vspace*{-1mm}
\end{figure}
Fig.~\ref{fig:H} depicts the coverage probability as the function of the SINR threshold $T_{h_1}$, given three different values of the height of the ANs $H_{\textrm{A}}$. It can be seen from Fig.~\ref{fig:H} that increasing the SINR threshold $T_{h_1}$ decreases the coverage probability, i.e., decreasing the likelihood of experiencing the link coverage. This is expected due to the inverse relationship between $T_{h_1}$ and the probability of achieving an SINR that surpasses the given threshold value. Notably, the results of Fig.~\ref{fig:H} indicates that increasing the height of ANs enhances the coverage probability of the T-A link. This can be explained by examining the impact of $H_{\textrm{A}}$ on the SINR. From the SINR (\ref{eqSINR}) of the T-A link and the accumulative interference power (\ref{eqInterf}) of the link, it is clear that the reduction in the MUI is far more than the reduction in the target signal power when increasing $H_{\textrm{A}}$. Therefore, increasing $H_{\textrm{A}}$ increases the link SINR, leading to the enhancement of the coverage probability.

\begin{figure}[!t]
\vspace*{-1mm}
\begin{center}  
\includegraphics[width=0.9\columnwidth]{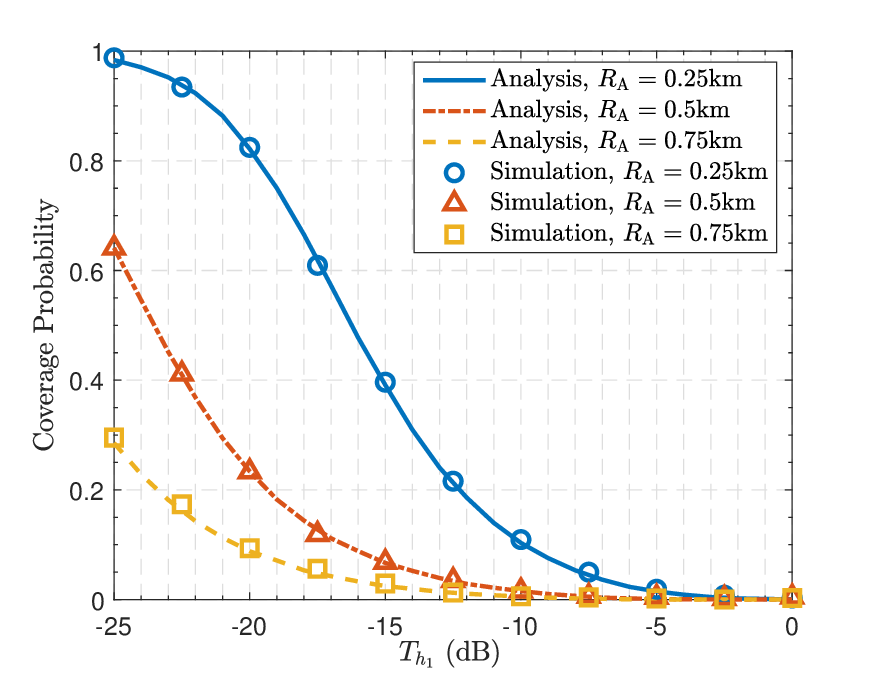}
\end{center}
\vspace*{-6mm}
\caption{Coverage probability of the terrestrial-aerial link as the function of SINR threshold $T_{h_1}$ given three different $R_{\textrm{A}}$.}
\label{fig:R_C} 
\vspace*{-3mm}
\end{figure}

Fig.~\ref{fig:LambdaT} investigates the impact of the density of terrestrial nodes $\lambda_{\textrm{T}}$ on the achievable coverage probability performance. It can be seen from Fig.~\ref{fig:LambdaT} that increasing the  density of terrestrial nodes reduces the achievable coverage probability. This is because a higher $	\lambda_{\textrm{T}}$ indicates a higher number of terrestrial transmitters concurrently attempting to access the UAV, leading to a higher MUI and consequently a lower coverage probability.
 
\begin{figure}[!b]
	\vspace*{-7mm}
	\begin{center}	
		\includegraphics[width=0.9\columnwidth]{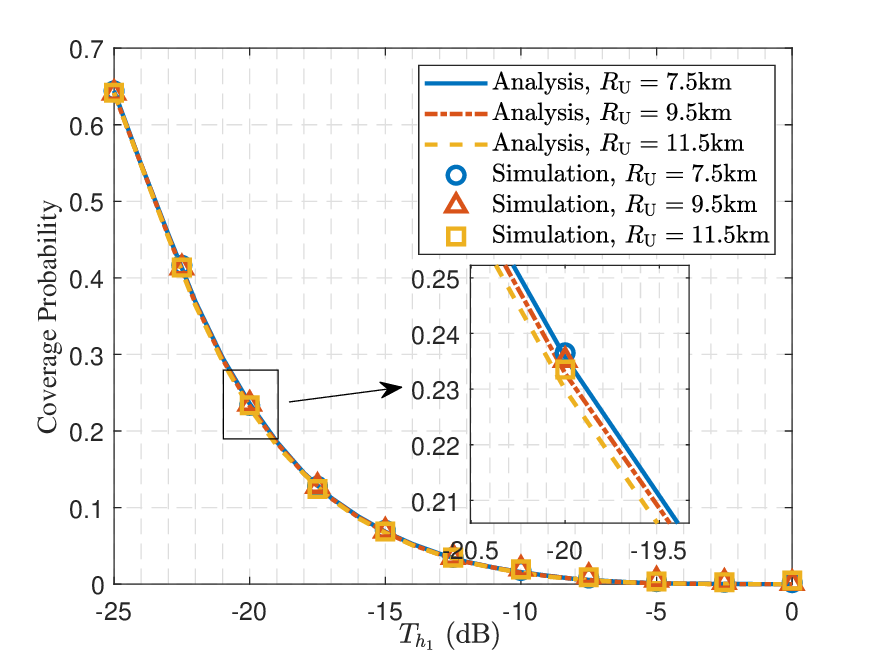}
	\end{center}
	\vspace*{-6mm}
	\caption{Coverage probability of the terrestrial-aerial link as the function of SINR threshold $T_{h_1}$ given three different $R_{\textrm{U}}$.}
	\label{fig:R_U} 
	\vspace*{-1mm}
\end{figure} 

Fig.~\ref{fig:R_C} portrays the coverage probability as the function of the SINR threshold $T_{h_1}$, given three different values of the AN's ground coverage radius $R_{\textrm{A}}$, where the influence of $R_{\textrm{A}}$ on the achievable coverage probability is clearly exhibited. Specifically, increasing the AN's ground coverage radius leads to noticeably reduction in the coverage probability. This is because a larger ground region ${\color{black}\mathcal{S_{\textrm{A}}}'}$ covers more terrestrial nodes, which can communicate with the same AN. This results in a higher number of territorial transmitters concurrently attempting to access the AN, leading to a higher MUI and consequently a lower coverage probability.

 \begin{figure}[!t]  
	\vspace*{-1mm}
	\begin{center}
		\includegraphics[width=0.9\columnwidth]{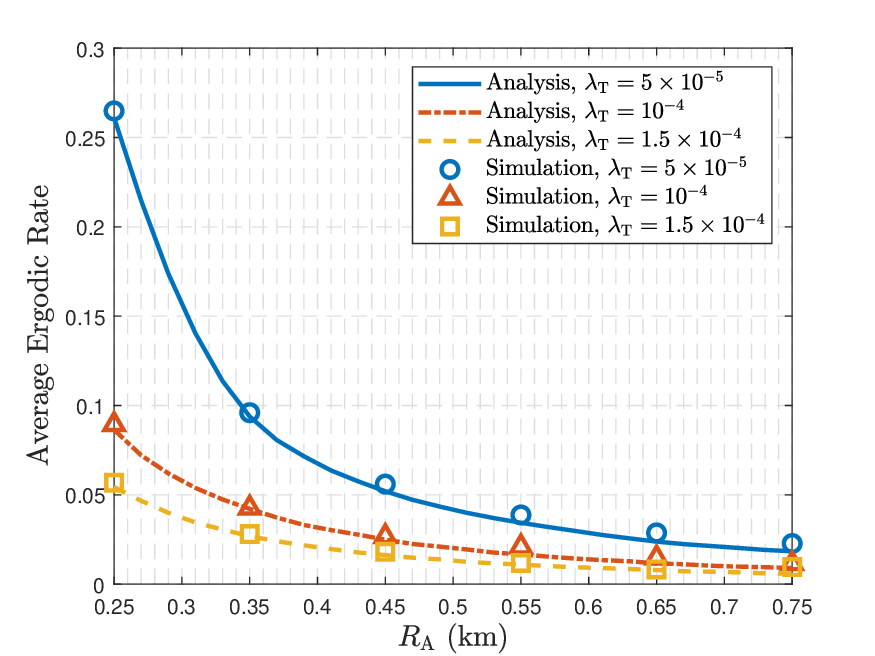}
	\end{center}
	\vspace*{-6mm}
	\caption{Average ergodic rate of the terrestrial-aerial link as the function of AN's ground coverage radius $R_{\textrm{A}}$ given three different values of $\lambda_{\textrm{T}}$.}
	\label{fig:Rate1} 
		\vspace*{-6mm}
\end{figure}

Fig.~\ref{fig:R_U} studies the influence of the radius $R_{\textrm{U}}$ on the coverage probability, indicating that impact of $R_{\textrm{U}}$ on the coverage probability is negligible. Increasing $R_{\textrm{U}}$ increases the area ${\color{black}\mathcal{S_{\textrm{U}}}}$ of terrestrial nodes but this hardly changes the number of the TNs within the AN's coverage area ${\color{black}\mathcal{S_{\textrm{A}}}'}$, given the same user density $	\lambda_{\textrm{T}}$. That is, the number of ground nodes connected to the same AN is hardly changed. Consequently, the MUI of the T-A link is hardly changed and the coverage probability is hardly affected, when $R_{\textrm{U}}$ is changed.

Fig.~\ref{fig:Rate1} shows the average ergodic rate as the function of AN's ground coverage radius $R_{\textrm{A}}$, given three different values for the density of terrestrial nodes $\lambda_{\textrm{T}}$. As expected, increasing $R_{\textrm{A}}$ enhances the average ergodic rate. This is because a larger coverage radius results in more interference nodes, given a fixed density. Additionally, a higher density of terrestrial nodes $\lambda_{\textrm{T}}$ leads to more ground nodes, which increases the MUI of the T-A link, thereby reducing the average ergodic rate.

 \begin{figure}[!b]  
	\vspace*{-6mm}
	\begin{center}
		\includegraphics[width=0.9\columnwidth]{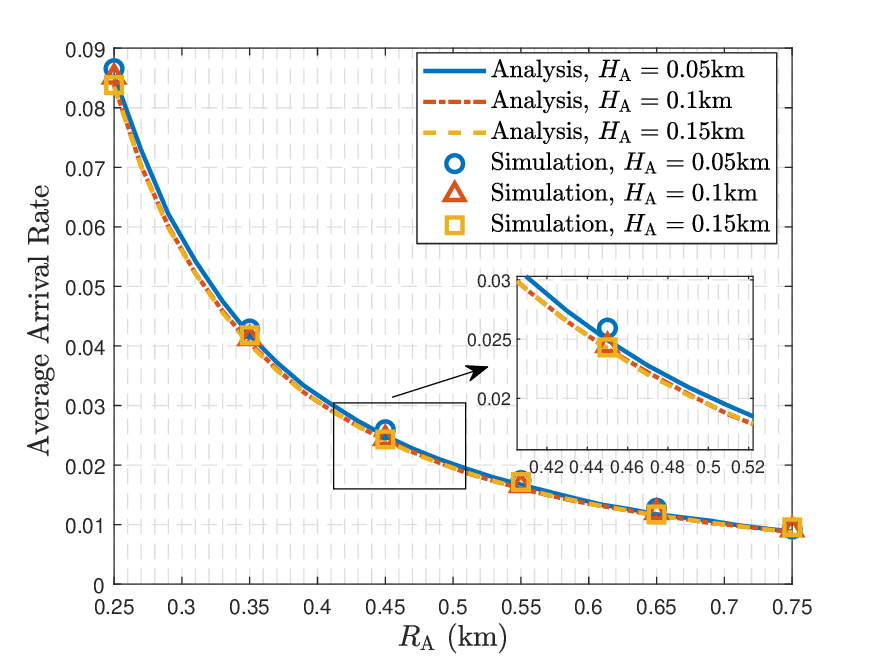}
	\end{center}
	\vspace*{-6mm}
	\caption{Average ergodic rate of the terrestrial-aerial link as the function of AN's ground coverage radius $R_{\textrm{A}}$ given three different values of $H_{\textrm{A}}$.}
	\label{fig:Rate2} 
	\vspace*{-1mm}
\end{figure}

  \begin{figure}[!t]  
	\vspace*{-1mm}
	\begin{center}
		\includegraphics[width=0.9\columnwidth]{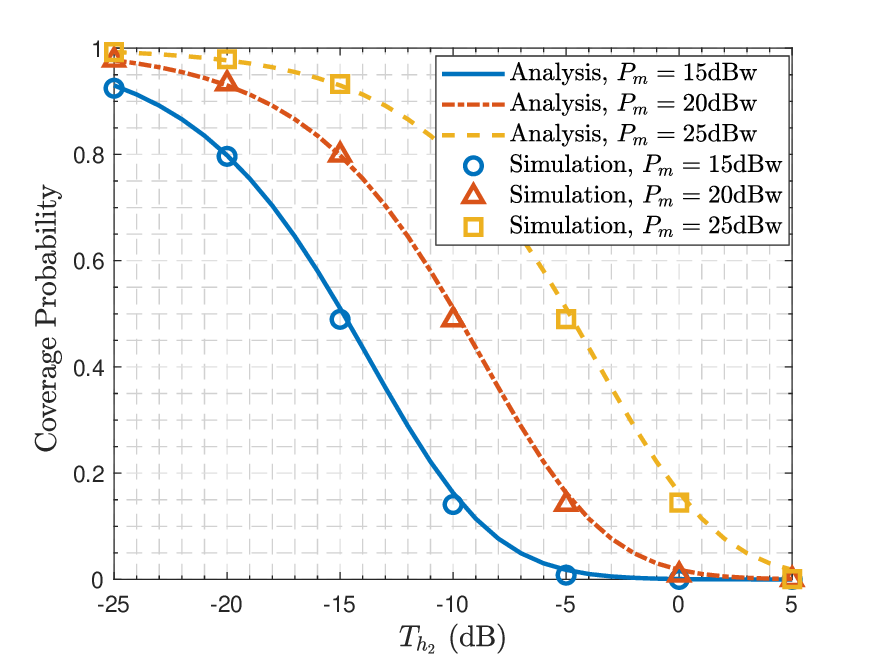}
	\end{center}
	\vspace*{-6mm}
	\caption{Coverage probability of the aerial-satellite link as the function of the SINR threshold $T_{h_2}$ given three different values of $P_m$.}
	\label{fig:Pm} 
	\vspace*{-7mm}
\end{figure}
Fig.~\ref{fig:Rate2} demonstrates the impact of the values of the height of the ANs, $H_{\textrm{A}}$, on the data transmission rate of the T-A link. As explained for Fig.~\ref{fig:Rate2}, an increase of the values of  $H_{\textrm{A}}$ reduces the average ergodic rate slightly. This can be explained by examining the impact of $H_{\textrm{A}}$ on the SINR. However, the increase of $H_{\textrm{A}}$ has a negligible impact on the SINR, resulting in only minor variations in the average ergodic rate.

\subsection{Performance of Aerial-Satellite Link}\label{S6.2}

Similarly, Monte Carlo simulations are employed to validate the close-form analytical coverage probability of the aerial-satellite link provided by \textbf{Theorem~\ref{T2}}. In the simulation, all the interfering ANs have the same transmit power of $P_n = 	P_{\textrm{A}}$, while the target AN's transmit power $P_m$ is a variable. The results obtained are presented in Figs.~\ref{fig:Pm} to \ref{fig:ratelambda}, which confirm that Monte Carlo simulated coverage probability closely matches the analytical theoretical result.

 \begin{figure}[!b]
	\vspace*{-7mm}
	\begin{center}
		\includegraphics[width=0.9\columnwidth]{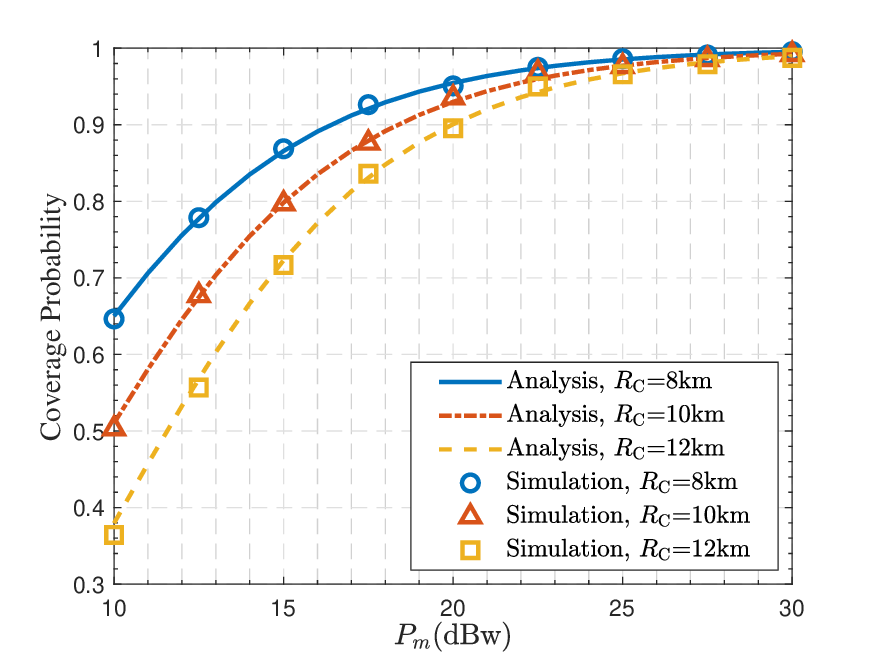}
	\end{center}
	\vspace*{-6mm}
	\caption{Coverage probability of the aerial-satellite link as the function of target AN's transmit power $P_m$ given three different values of $R_{\textrm{C}}$.}
	\label{fig:RC} 
	\vspace*{-1mm}
\end{figure}

  More specifically, in Fig.~\ref{fig:Pm}, we analyze the relationship between the coverage probability and the SINR threshold. From Fig.~\ref{fig:Pm}, it is evident that there exists an inverse relationship between the threshold value $T_{h_2}$ and the coverage probability, whereby an increase of $T_{h_2}$ leads to a decrease of the coverage probability. Furthermore, given the other network parameters unchanged, an increase of the target AN's transmit power $P_m$ leads to an enhanced coverage probability. This is because when the transmit signal strength of the targeted aerial node is enhanced while the interference power from other aerial nodes remain unchanged, the SINR increases, which manifests an increase in coverage probability.

\begin{figure}[!t]
	\vspace*{-1mm}
	\begin{center}
		\includegraphics[width=0.9\columnwidth]{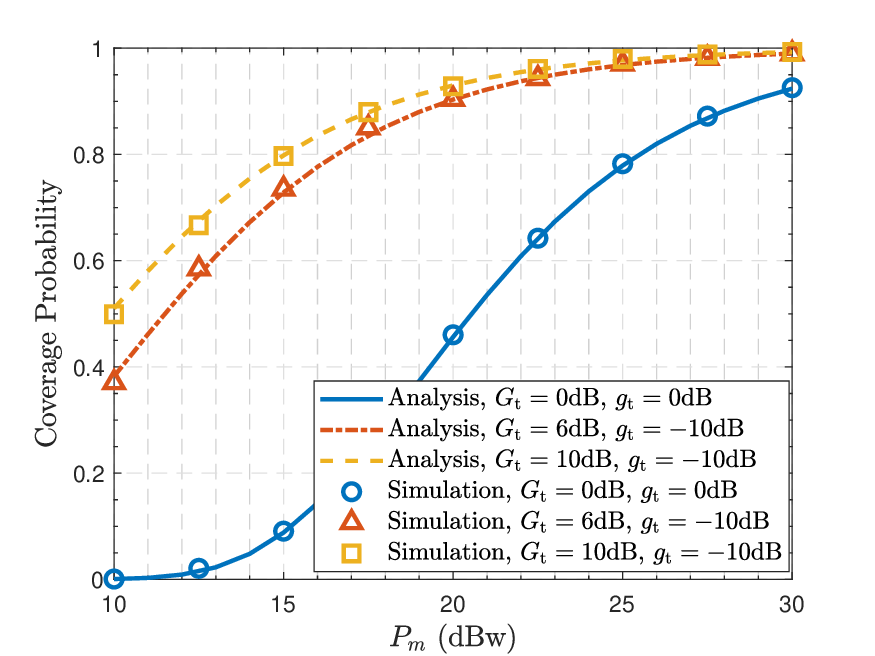}
	\end{center}
	\vspace*{-6mm}
	\caption{Coverage probability of the aerial-satellite link as the function of target AN's transmit power $P_m$ given three different combinations of  $	G_{\textrm{t}}$ and $g_{\textrm{t}}$.}
	\label{fig:G} 
	\vspace*{-7mm}
\end{figure}

Fig.~\ref{fig:RC} plots the coverage probability of the A-S link as the function of target AN's transmit power $P_m$, given three different values for the radius $R_{\textrm{C}}$ of the circle space ${\color{black}\mathcal{S_{\textrm{A}}}}$. As expected, increasing $P_m$ increases the coverage probability, since increasing the power of the target transmitter leads to an increase in its SINR and this reduces the risk of communication interruption. In addition, the impact of $R_{\textrm{C}}$ on the coverage probability is clearly shown in Fig.~\ref{fig:RC}, namely, the expansion of the distribution space of aerial transmitters leads to a worsen coverage probability. This is because the increased availability of space for the MHCPP deployment of air nodes results in a higher number of transmitters concurrently attempting to access the satellite, leading to a higher MUI and consequently a worsen  coverage probability.

\begin{figure}[!b]
\vspace*{-7mm}
\begin{center}
\includegraphics[width=0.9\columnwidth]{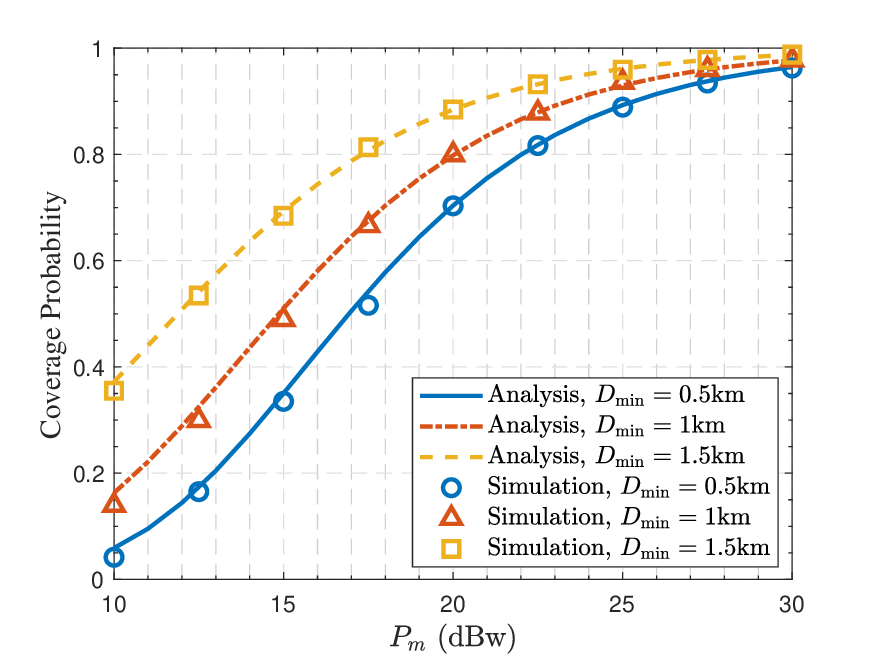}
\end{center}
\vspace*{-6mm}
\caption{Coverage probability of the aerial-satellite link as the function of target AN's transmit power $P_m$ given three different values of $D_{\textrm{min}}$.}
\label{fig:Dmin} 
\vspace*{-1mm}
\end{figure}

Fig.~\ref{fig:G} investigates the impact of different antenna gains on the coverage probability, and the results indicate that higher mainlobe gain and lower sidelobe gain improve the achievable coverage probability performance. The case of both $	G_{\textrm{t}}$ and $g_{\textrm{t}}$ being 0\,dB simulates the scenario without directional BF, while the other two sets of $	G_{\textrm{t}}$ and $g_{\textrm{t}}$ represent scenarios with different directional BF gains. As expected, the utilization of directional BF yields a notable enhancement in coverage probability, compared with the case of without directional BF. With directional BF, the satellite is in the mainlobe of the target transmitter, while it is randomly in the mainlobe or sidelobe of interfering transmitters. Furthermore, a higher disparity in the strength between the mainlobe and side lobes results in a reduced overall interference on the intended recipient. Consequently, the SINR of the signal received from the target node at the satellite is enhanced, leading to an improvement in the coverage probability.

Fig.~\ref{fig:Dmin} studies the impact of the minimum distance $D_{\textrm{min}}$ between every two candidate points on the coverage probability. Evidently, higher $D_{\textrm{min}}$ leads to lower aerial transmitters' access to the satellite, thereby resulting in a decrease of the interference towards the intended transmitter. Consequently, the SINR of the target link increases, which enhances the likelihood of coverage.

\begin{figure}[!t]  
	\vspace*{-1mm}
	\begin{center}
		\includegraphics[width=0.9\columnwidth]{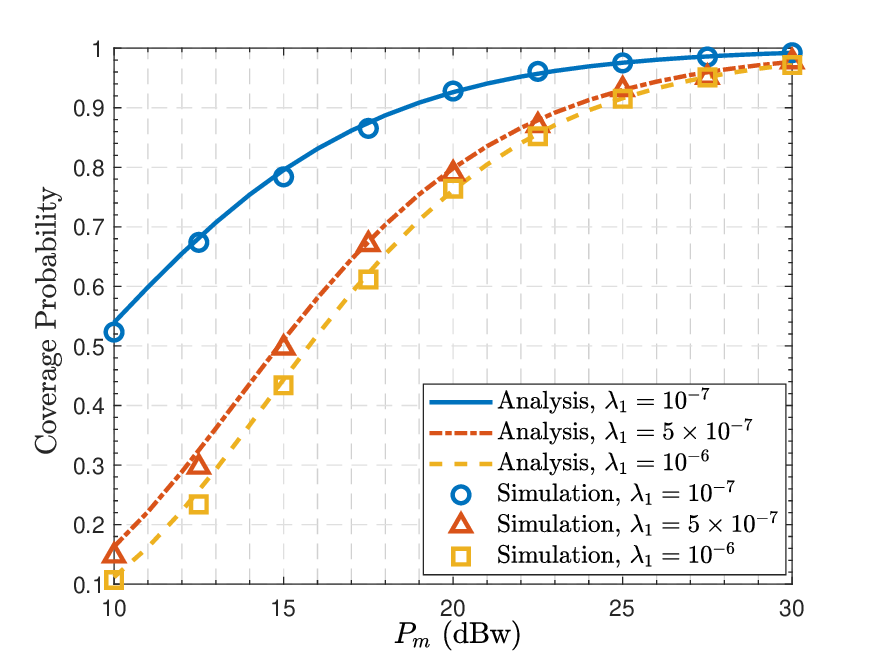}
	\end{center}
	\vspace*{-6mm}
	\caption{Coverage probability of the aerial-satellite link as the function of target AN's transmit power $P_m$ given three different values of $\lambda_{1}$.}
	\label{fig:lambda1} 
	\vspace*{-6mm}
\end{figure}
\begin{figure}[!b]  
	\vspace*{-6mm}
	\begin{center}
		\includegraphics[width=0.9\columnwidth]{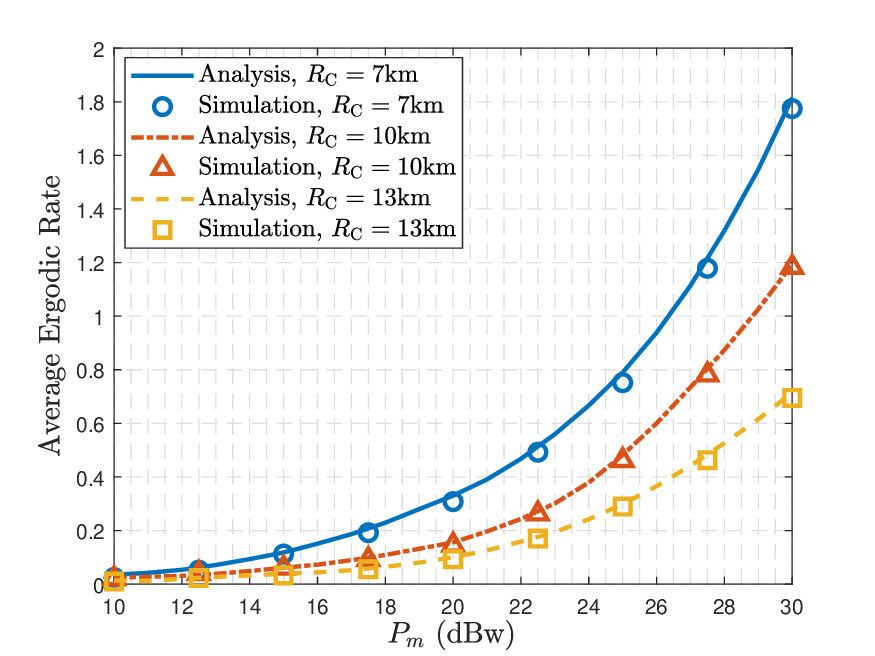}
	\end{center}
	\vspace*{-6mm}
	\caption{Average ergodic rate of the aerial-satellite link as the function of target AN's transmit power $P_m$ given three different values of $R_{\textrm{C}}$.}
	\label{fig:rateRC} 
	\vspace*{-1mm}
\end{figure}

In Fig.~\ref{fig:lambda1}, we proceed to compare the coverage probability under various values of $\lambda_1$.   The graph illustrates a negative correlation between the node density and the coverage probability. It can be easily comprehended that an augmented density corresponds to an increased number of nodes, consequently resulting in a greater MUI. This, in turn, contributes to a heightened risk of outage. It may be readily understood that an enhanced density is associated with a higher quantity of nodes, hence leading to an amplified MUI. Consequently, this phenomenon leads to an increased susceptibility to service disruption, manifested as a decrease in the coverage probability.

Fig.~\ref{fig:rateRC} depicts the average ergodic rate as the function of the target AN's transmit power $P_m$, given three different values for the radius $R_{\textrm{C}}$ of the ANs' deployment area  ${\color{black}\mathcal{S_{\textrm{A}}}}$. As expected, increasing $P_m$ increases the average ergodic rate. In addition, the impact of $R_{\textrm{C}}$ on the achievable average ergodic rate is clearly shown in Fig.~\ref{fig:rateRC}. Evidently, increasing the available area of ${\color{black}\mathcal{S_{\textrm{A}}}}$ results in a reduction of the average ergodic rate, because there are more interfering transmitters. Observe that Fig. \ref{fig:rateRC} is consistent with Fig.~\ref{fig:RC}, which is to be expected given the relationship between the coverage probability and the average ergodic rate.

\begin{figure}[!t]  
		\vspace*{-2mm}
	\begin{center}
		\includegraphics[width=0.9\columnwidth]{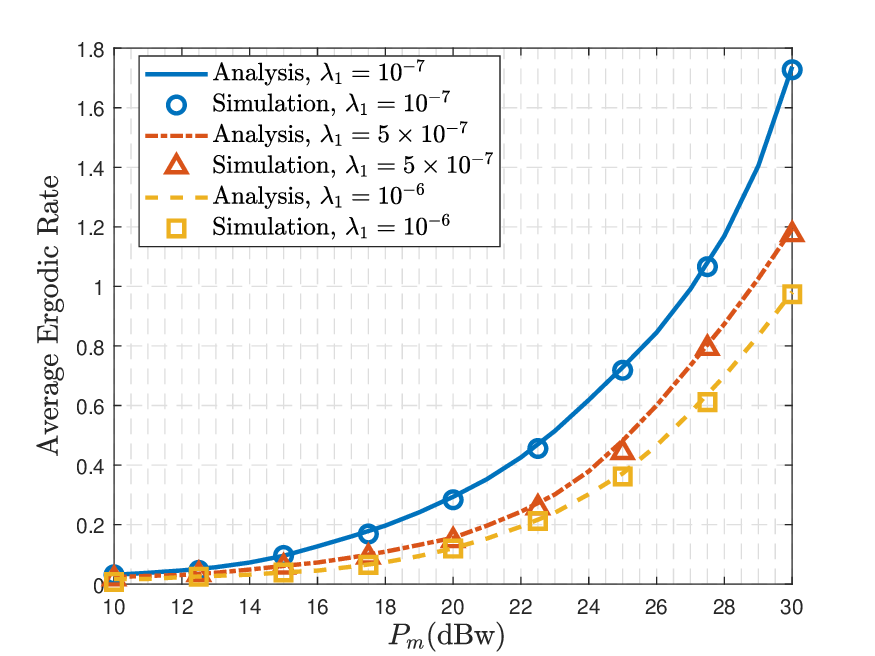}
	\end{center}
	\vspace*{-6mm}
	\caption{Average ergodic rate of the aerial-satellite link as the function of target AN's transmit power $P_m$ given three different values of $\lambda_{1}$.}
	\label{fig:ratelambda} 
	\vspace*{-7mm}
\end{figure}

Fig.~\ref{fig:ratelambda} further investigates the impact of the node density on the average ergodic rate. As expected, increasing $\lambda_{1}$ reduces the average ergodic rate, since increasing $\lambda_{1}$ leads to more interfering nodes. Obviously, Fig.~\ref{fig:ratelambda} is entirely consistent with Fig.~\ref{fig:lambda1}, because a higher coverage probability corresponds to a higher average ergodic rate and vice versa.

\section{Conclusions}\label{S7}

In this paper, we have proposed a tractable approach for analyzing the coverage probability   and the average ergodic rate  of T-A links and A-S links in a CSATN system,  whose terrestrial terminals are located in a finite-size region. This condition incurs significant challenge for the performance analysis.   Utilizing the expressions of coverage probability derived under various conditions, we can input relevant parameter values in analogous scenarios to determine the joint coverage probability of the end-to-end links spanning the terrestrial terminals, the aerial relays, and the satellite. Furthermore, with these theoretical results, we can gain a clear understanding of the impact imposed by critical system parameters, including the coverage area of aerial nodes, the flying altitude of aerial nodes, as well as the terrestrial and aerial nodes’ densities, transmission distance, and antenna gain, on the achievable system performance. Therefore, our study offers theoretical guidance and valuable insights on how to conduct CSATN planning, deployment and optimization in practice. Our future work will focus on analyzing the uplink ergodic sum rate for the CSATNs studied here.

\appendix
\begin{figure*}[!t]
	\begin{minipage}[t]{0.28\linewidth} %
		\centering
		\subfigure[$0\!<\!m_0\!<\!R_{\textrm{U}}\!-\!R_{\textrm{A}} \!\!$] {	
			\centering
			\includegraphics[width=0.96\linewidth]{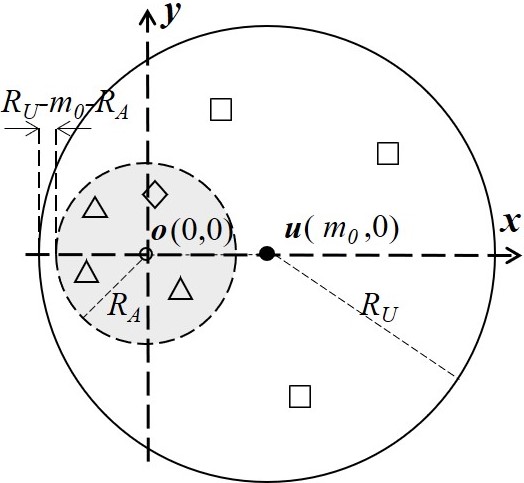}
			\label{fr1}
		}
	\end{minipage}  
	\begin{minipage}[t]{0.28\linewidth} %
		\centering
		\subfigure[$R_{\textrm{U}}\!\!-\!\!R_{\textrm{A}}\!<\!m_0\!<\!R_{\textrm{U}}$] {	
			\centering
			\includegraphics[width=1.02\linewidth]{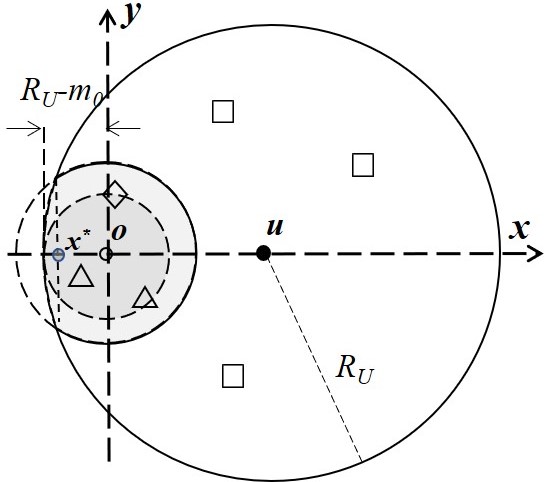}
			\label{fr2}
		}
	\end{minipage}
	\begin{minipage}[t]{0.28\linewidth} %
		\centering
		\subfigure[ $R_{\textrm{U}}\!<\!m_0\!<\!R_{\textrm{U}}\!+\!R_{\textrm{A}} \!\!$ ] {	
			\centering
			\includegraphics[width=1.48\linewidth]{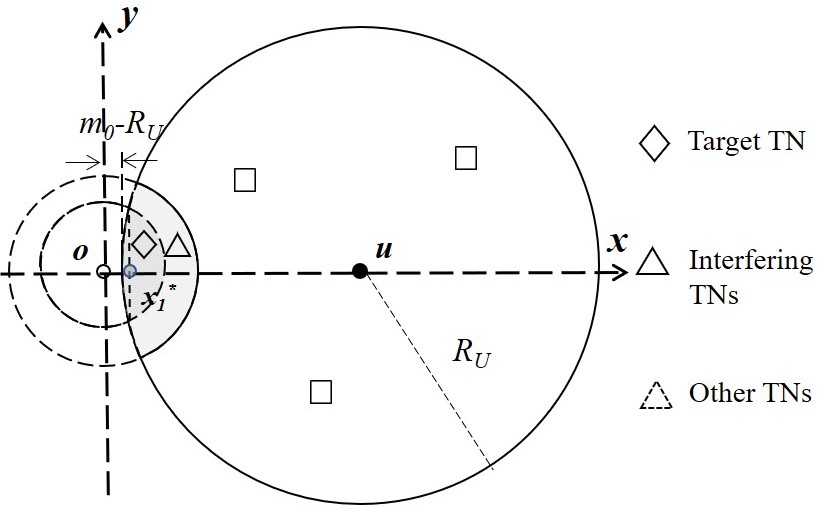}
			\label{fr4}
		}
	\end{minipage}
	\vspace*{-2mm}
	\caption{Illustration of positional relationships under different conditions.}
	\label{fr} 
	\vspace*{-1mm}
\end{figure*}

\subsection{Proof of Lemma~\ref{L1}}\label{ApA}

\begin{proof}
In order to facilitate calculations, we provide a two-dimensional Cartesian coordinate system, as depicted in Fig.~\ref{fr}. The AN projection point, denoted as $\bm{o}$, is situated at the origin of the coordinate system. The center point $\bm{u}$ of the user area is situated in the positive half of the x-axis. Additionally, conditioned on $M=m_0$, the coordinates of $\bm{u}$ is given by $(m_0,0)$. 

\emph{1)}~Given that $0<m_0<R_{\textrm{U}}-R_{\textrm{A}}$, ${\color{black}\mathcal{S}_{\textrm{A}}'}$ is entirely contained within ${\color{black}\mathcal{S_{\textrm{U}}}}$, and $0<r<R_{\textrm{A}}$. The CDF of $r$ is \setcounter{equation}{31}
\begin{align}\label{eqApA1} 
  F_{R^{1)}}(r|m_0) =& \frac{\pi r^2}{\pi R_{\textrm{A}}^2} = \frac{r^2}{R_{\textrm{A}}^2},
\end{align}
and the corresponding PDF is given by
\begin{align}\label{eqApA2} 
  f_{R^{1)}}(r|m_0) =& \frac{2r}{{\color{black}R_{\textrm{A}}}^2}.
\end{align}

Given that $R_{\textrm{U}}-R_{\textrm{A}}<m_0<R_{\textrm{U}}$, ${\color{black}\mathcal{S}_{\textrm{A}}'}$ and ${\color{black}\mathcal{S_{\textrm{U}}}}$ exhibit partial overlap. It is possible to further categorize this scenario into two distinct situations, dependent on whether the projection point falls inside the area of ${\cal S}_{\cal U}$. According to Fig.~\ref{fr2}, it is evident that the region of overlap between ${\color{black}\mathcal{S}_{\textrm{A}}'}$ and ${\color{black}\mathcal{S_{\textrm{U}}}}$ corresponds to the intersection area of two circles: one with a radius of $R_{\textrm{A}}$ centered at the origin, and the other with a radius of $R_{\textrm{U}}$ centered at $(m_0,0)$. Hence, the abscissa $x^{*}$ at which the two circles intersect can be expressed as 
\begin{align}\label{eqApA3} 
  x^{*} =& \frac{m_{0}^{2}-R_{\textrm{U}}^{2}+R_{\textrm{A}}^{2}}{2m_{0}}.
\end{align}
Then the intersecting area, denoted as $\gamma$, can be written as
\begin{align}\label{eqApA4} 
	\gamma\! =&\! \int_{\!m_{0}-R_{U}}^{x^{*}}\!\!\!\!\!\!\!\!2\sqrt{R^{2}-(x\! -\! m_{0})^{2}}{\rm d}x+\! \int_{x^{*}}^{R_{A}}\!\!\!2\sqrt{R_{A}^{2}-x^{2}}{\rm d}x.
\end{align}
By setting $\frac{m_0-x}{R_{\textrm{U}}}=\cos (\theta)$ and $\frac{x}{R_{\textrm{A}}}=\cos (\varphi)$, we obtain
\begin{align} \label{gamma} 
	\gamma =& R_{\textrm{U}}^2\! \left(\! \theta_2 - \frac{1}{2}\sin\big(2\theta_2\big)\! \right)\! + R_{\textrm{A}}^2\! \left(\! \varphi_2 - \frac{1}{2}\sin\big(2\varphi_2\big)\! \right)\! ,\!
\end{align}
i.e., $\gamma$ is given in (\ref{eqL1-1}) with $\theta_2$ and $\varphi_2$ given in (\ref{eqL1-2}) and (\ref{eqL1-3}).
	
\emph{2)}~If $0<r<R_{\textrm{U}}-m_0$ and $R_{\textrm{U}}-R_{\textrm{A}}<m_0<R_{\textrm{U}}$, the PDF of $r$ can be expressed as
\begin{align}\label{eqApA5} 
  f_{R^{2)}}(r|m_0) =& \frac{2\pi r}{\gamma}.
\end{align} 
  
\emph{3)}~However, if $R_{\textrm{U}}\! -\! m_0\! <\! r\! <\! R_{\textrm{A}}$ and $R_{\textrm{U}}\! -\! R_{\textrm{A}}\! <\! m_0\! <\! R_{\textrm{U}}$, the overlapping region contains only a segment of the circle defined by the equation $x^2+y^2=r^2$. The abscissa $x_1^{*}$, which represents the intersection point between the circle $x^2+y^2=r^2$ and the circle $(x-m_0)^2+y^2=R_{\textrm{U}}^2$, is given by 
\begin{align}\label{eqApA6} 
	x_1^{*}=\frac{m_0^2-R_{\textrm{U}}^2+r^2}{2m_0}.
\end{align}
Thus the CDF of $r$ can be written as 
\begin{align}\label{eqApA7} 
	& F_{R^{3)}}(r|m_{0}) = \frac{1}{\gamma}\Bigg( \int_{m_{0}-R_{\textrm{U}}}^{x_{1}^{*}}2\sqrt{R_{U}^{2}-(x-m_0)^{2}}dx  \nonumber \\
	& \hspace*{20mm} + \int_{{x_{1}^{*}}}^{r}2\sqrt{r^{2}-x^{2}}dx\Bigg) \nonumber \\
	&\hspace*{4mm} =\frac{1}{\gamma}\! \left(\!\! R_{\textrm{U}}^2\! \bigg(\! \theta_4\! -\! \frac{1}{2}\sin\big(2\theta_4\big)\!\! \bigg)\! +\! r^2\!\bigg(\! \varphi_3\! -\! \frac{1}{2}\sin\big(2\varphi_3\big)\! \bigg)\!\! \right)\! ,\!
\end{align}
where $\theta_4=\arccos\left(\frac{m_0+R_{\textrm{U}}^2-r^2}{2m_0R_{\textrm{U}}}\right)$ and $\varphi_3$ is given in (\ref{eqL1-4}). The corresponding PDF $f_{R^{3)}}(r|m_0)$ is then given by (\ref{Lemma1.1}).

\emph{4)}~Similarly, if $m_0-R_{\textrm{U}}<r<R_{\textrm{A}}$ and $R_{\textrm{U}}<m_0<R_{\textrm{U}}+R_{\textrm{A}}$, as shown in Fig.~\ref{fr4}, we can easily obtain the CDF of r as
\begin{align}\label{eqApA8} 
	& F_{R^{4)}}(r|m_{0}) = \frac{1}{\gamma}\Bigg( \int_{m_{0}-R_{\textrm{U}}}^{x_{1}^{*}}2\sqrt{R_{U}^{2}-(x-m_0)^{2}}dx  \nonumber \\
	& \hspace*{20mm} + \int_{{x_{1}^{*}}}^{r}2\sqrt{r^{2}-x^{2}}dx\Bigg) \nonumber
\end{align}	
\begin{align} 
	&\hspace*{4mm} =\frac{1}{\gamma}\! \left(\!\! R_{\textrm{U}}^2\! \bigg(\! \theta_4\! -\! \frac{1}{2}\sin\big(2\theta_4\big)\!\! \bigg)\! +\! r^2\!\bigg(\! \varphi_3\! -\! \frac{1}{2}\sin\big(2\varphi_3\big)\! \bigg)\!\! \right)\! ,\!
\end{align}
where $\theta_4=\arccos\left(\frac{m_0+R_{\textrm{U}}^2-r^2}{2m_0R_{\textrm{U}}}\right)$ and $\varphi_3$ is given in (\ref{eqL1-4}). Thus, the corresponding PDF $f_{R^{4)}}(r|m_0)$ is given in (\ref{Lemma1.1}). 
This completes the proof.
\end{proof}

\subsection{Proof of Theorem~\ref{The1}}\label{ApB}

\begin{proof}
\begin{align}\label{eqApB1} 
  P_{\textrm{cov}}^{\textrm{T-A}} =& \mathbb{P}\left(\mathrm{SINR}_1\geq T_{h_1}\right) \nonumber \\
  =& \mathbb{E}_{R_m}\left[ \mathbb{P}\left(\mathrm{SINR}_1 \geq T_{h_1}|R_m=r_m,m_0\right) \right] \nonumber \\
  =& \int \int \mathbb{P}\left(\mathrm{SINR}_1 \geq T_{h_1}|R_m=r_m,m_0\right) \nonumber \\
  &\times f_{R_m}(r_m|m_0)\,f_M(m_0) \,\mathrm{d}r_m \,\mathrm{d}m_0.
\end{align}
We note that $\mathbb{P}\left(\mathrm{SINR}_1 \geq T_{h_1}|R_m=r_m,m_0\right)$ satisfies:
\begin{align}\label{eqApB2} 
	& \text{$\mathbb{P}\left(\mathrm{SINR}_1 \geq T_{h_1}|R_m=r_m,m_0\right)$} \nonumber \\
	& = 1 - \mathbb{P}\left(|h_{T_m\!A}|^2 \le \frac{T_{h_1}{I_{\textrm{T}}}(H_{\textrm{A}}^2+r_m^2)^{\frac{\alpha_1}{2}}}{	P_{\textrm{T}}} \right)\nonumber \\
	& \overset{\mathrm{(a)}}{<}  1 - \mathbb{E}\left[\left(1 - \exp\left(\frac{-\eta T_{h_1}{I_{\textrm{T}}}(H_{\textrm{A}}^2+r_m^2)^{\frac{\alpha_1}{2}}}{	P_{\textrm{T}}} \right) \right)^{N_{\textrm{TA}}} \right] \nonumber \\
	& \overset{\mathrm{(b)}}{=} \sum_{n=1}^{N_{\textrm{TA}}}(-1)^{n+1}\binom{N_{\textrm{TA}}}{n}\mathbb{E}_I\!\!\left[\exp\!\left(\! \frac{-n\eta T_{h_1}{I_{\textrm{T}}}(H_{\textrm{A}}^2+r_m^2)^{\frac{\alpha_1}{2}}}{	P_{\textrm{T}}}\right) \!\right] \nonumber \\
	& \overset{\mathrm{(c)}}{=} \sum_{n=1}^{N_{\textrm{TA}}}(-1)^{n+1}\binom{N_{\textrm{TA}}}{n}\mathbb{E}_I\left[ \exp\left(-s{I_{\textrm{T}}} \right)\right],
 \end{align}
where (a) is a tight upper bound when $N_{\textrm{TA}}$ is small \cite{alzer1997some}, that is, for small $N_{\textrm{TA}}$, $\mathbb{P}\left(|h|^{2}<\psi\right)<\mathbb{E}\left[\left(1 - \exp (-\psi\eta)\right)^{N_{\textrm{TA}}}\right]$ with $\eta=N_{\textrm{TA}}(N_{\textrm{TA}}!)^{-\frac{1}{N_{\textrm{TA}}}}$, and (b) is obtained by the binomial theorem, while (c) is obtained by denoting $s= \frac{n\eta T_{h_1}(H_{\textrm{A}}^2+r_m^2)^{\frac{\alpha_1}{2}}} {	P_{\textrm{T}}} $.  This completes the proof.
\end{proof}

\subsection{Proof of Lemma~\ref{L3}}\label{ApC}

\begin{proof}
Assumed that the distance denoted as $R_n$ between the interference user and the projection point is equal to $r_n$. Thus, $\mathcal{L}_{I_{\textrm{T}}}\left(s\right)$ can be written as
\begin{align} \label{eqApC1} 
	& \mathcal{L}_{I_{\textrm{T}}}(s) = \mathbb{E}\! \left[\! \prod_{T_n\in\Phi_{\textrm{U}} {\backslash} T_m}\!\!\! \exp\!\left( -s 	P_{\textrm{T}}\left|h_{T_n\!A}\right|^2 (H_{\textrm{A}}^2+r_n^2)^{-\frac{\alpha_1}{2}} \right)\! \right] \nonumber \\
		& = \mathbb{E}_{N_I,R_n}\!\Bigg[\! \prod_{T_n\in\Phi_{\textrm{U}} {\backslash} T_m}\!\!\! \mathbb{E}_{|h_{T_n\!A}|^2}\bigg[\exp\bigg(\left|h_{T_n\!A}\right|^2 \nonumber \\
	& \hspace*{15mm} \times \left(-s 	P_{\textrm{T}} (H_{\textrm{A}}^2+r_n^2)^{-\frac{\alpha_1}{2}}\right) \bigg)\! \bigg]\Bigg] \nonumber \\
	& \overset{\mathrm{(a)}}{=} \mathbb{E}_{N_I,R_n}\!\!\!\left[ \!\prod_{T_n\!\in\Phi_{\textrm{U}}  \! {\backslash}\! T_m}    \underset{D_2}{\underbrace{\left(\!1\!+\!\frac{s	P_{\textrm{T}}}{N_{\textrm{TA}}(H_{\textrm{A}}^2+r_n^2)^{\frac{\alpha_1}{2}}}\!\right)^{\!\!\!-N_{\textrm{TA}}}}} \!\right]\!\!,
\end{align}
where (a) is obtained by using the moment-generating function (MGF) of the normalized Gamma random	variable. This completes the proof.
\end{proof}

\subsection{Proof of Theorem~\ref{T2}}\label{ApD}

\begin{proof}
Using the Kummer’s transform of the hypergeometric function \cite{5313912}, the PDF of \emph{$|h|^{2}$} (\ref{eqPDFscf}) can be rewritten as
\begin{align}\label{eqApD1} 
  f_{|h|^{2}}(x) =& \sum\limits_{k=0}^{\infty} \Psi (k) x^{k} \exp\left(-(\beta-\delta) x\right) ,
\end{align}
where 
\begin{align}\label{eqApD2} 
  \Psi (k) =& \frac{(-1)^{k} \kappa \delta^{k}}{(k!)^{2}}\, (1-q)_{k} .
\end{align}
Then, the CDF of \emph{$|h|^{2}$} can be expressed as
\begin{align}\label{eqApD3} 
  F_{|h|^{2}}(x) =& \sum_{k=0}^{\infty} \Psi (k) \int_{0}^{x} t^{k} \exp\left(-(\beta-\delta)t\right) {\rm d} t \nonumber \\
	=& \sum_{k=0}^{\infty} \frac{\Psi (k)}{\left(\beta-\delta\right)^{k+1}} \bar{\gamma}\left(k+1,\left(\beta-\delta\right)x\right).
\end{align}
Therefore, we obtain $P_{\textrm{cov}}^{\textrm{A-S}}$ as
\begin{align}\label{eqApD4} 
	& P_{\textrm{cov}}^{\textrm{A-S}} \triangleq \mathbb{P}\left( \frac{P_m G_{A_m\!-\!S} \left|h_{A_m\!S}\right|^2 d_{m}^{-\alpha_2}}{I_{\textrm{A}}} \geq T_{h_2} \right) \nonumber \\
	& = 1 - \mathbb{P}\left( \left|h_{A_m\!S}\right|^2 \le \frac{T_{h_2} I_{\textrm{A}} d_{m}^{\alpha_2}}{P_m G_{A_m\!-\!S}}  \right) \nonumber \\
	& = 1 - \mathbb{E}\left[ \kappa \sum_{k=0}^{\infty} \frac{\Psi (k)}{(\beta-\delta)^{k+1}} \bar{\gamma}\left(k+1,(\beta\!-\!\delta) \frac{T_{h_2} I_{\textrm{A}} d_{m}^{\alpha_2}}{P_m G_{A_m\!-\!S}} \right) \right] \nonumber \\
		& \overset{\mathrm{(a)}}{\approx} 1 - \mathbb{E}\Bigg[ \sum_{k=0}^{\infty} \frac{\Psi (k)}{(\beta-\delta)^{k+1}}\Gamma (k+1) \nonumber \\
			&	\hspace*{5mm}\times \left( 1 - \exp\left(- \frac{\zeta(\beta-\delta) T_{h_2} I_{\textrm{A}} d_{m}^{\alpha_2}}{P_m G_{A_m\!-\!S}} \right)\right)^{k+1}\Bigg]  \nonumber \\
				\hspace*{-10pt}& \overset{\mathrm{(b)}}{=} 1 - \sum_{k=0}^{\infty} \frac{\Psi (k)}{(\beta-\delta)^{k+1}} \Gamma (k+1) \sum_{t=0}^{k+1} \binom{k+1}{t} \nonumber \\
			&	\hspace*{5mm}\times (-1)^t \mathbb{E}\left[\exp\big(-{s}'I_{\textrm{A}}\big)\right] , 
\end{align}
where (a) is approximated by using $\bar{\gamma}(k+1,x) < \Gamma (k+1)(1-\exp(-\zeta x))^{k+1}$ \cite{alzer1997some}, $\zeta=(\Gamma(k+2))^{-\frac{1}{k+1}}$,  and (b) is obtained from  binomial theorem with ${s}'=\frac{t\,\zeta(\beta-\delta)\, T_{h_2} \,d_{m}^{\alpha_2}}{P_m G_{A_m\!-\!S}}$. This completes the proof.
\end{proof}

\subsection{Proof of Lemma~\ref{L4}}\label{ApE}

\begin{proof}
\begin{align}\label{eqApE1} 
  & \mathcal{L}_{I_{\textrm{A}}}({s}')\! =\! \mathbb{E}\! \left[\exp\! \left(\! -{s}'\!\! \sum_{A_n\in\Phi_{\textrm{A}} {\backslash} A_m}\!\! P_{n} G_{A_n\!-\!S} \left|h_{A_n\!S}\right|^2 d_{n}^{-\alpha_2}\! \right)\! \right] \nonumber \\
 	& =\! \mathbb{E}\! \left[\! \prod_{A_n\in\Phi_{\textrm{A}} {\backslash} A_m}\!\!\!\!\! \mathbb{E}_{|h_{A_n\!S}|^2}\! \left[\exp\!\left(-{s}' P_{n} G_{A_n\!-\!S} \left|h_{A_n\!S}\right|^2 d_{n}^{-\alpha_2}\right)\! \right]\! \right] \nonumber \\
 	& \overset{\mathrm{(a)}}=\! \mathbb{E}_{N_{\textrm{A}}}\!\!\left[\! \prod_{A_n\in\Phi_{\textrm{A}} \! {\backslash}\! A_m}\!\!\!\!\!\! \mathbb{E}_{G_{A_n\!-\!S},|h_{A_n\!S}|^2} \!\! \left[\exp\left(-t_A \left|h_{A_n\!S}\right|^2 \right)\! \right] \! \right]\! ,\!
\end{align}
where (a) is obtained becasue all aerial transmitters have an equal distance to $S$, and $t_A={s}' P_{n} G_{A_n\!-\!S} d_{0}^{-\alpha_2}$.

As shown in \cite{2003A}, the MGF of the SR fading model is defined as $M_S(x)=\mathbb{E}\left[\exp(-x S)\right]=	\frac{(2 c q)^q(1+2 c x)^{q-1}}{((2 c q+\Omega )(1+2 c x)-\Omega)^q}$. Thus, we further obtain
\begin{align}
  & \mathcal{L}_{I_{\textrm{A}}}\!\left({s}'\right)\! =\! \mathbb{E}_{N_{\textrm{A}}}\!\! \left[\! \prod_{A_n\in\Phi_{\textrm{A}} {\backslash}\! A_m}\!\!\!\!\!\!\! \mathbb{E}_{G_{A_n\!-\!S}}\!\! \left[\! \underset{M_1(t_A)}{\underbrace{\frac{(2 c q)^q(1 + 2 c t_A)^{q-1}}{((2 c q\! +\! \Omega )(1\! +\! 2 c t_A)\! -\! \Omega)^q}}}\! \right]\! \right] \nonumber 
\end{align}
\begin{align}\label{eqApE2} 
	& \overset{\mathrm{(b)}}= \mathbb{E}_{N_{\textrm{A}}}\!\! \left[\! \prod_{A_n\in\Phi_{\textrm{A}} {\backslash}\! A_m} \!\! \left(\! \underset{M_2}{\underbrace{M_1(t_1)\frac{\theta}{2\pi}\! +\! M_1(t_2)\Big(1\!-\!\frac{\theta}{2\pi}\Big)}} \! \right) \! \right] \nonumber \\
	& \overset{\mathrm{(c)}}= \sum_{n=0}^{\infty} \frac{(\lambda_{\textrm{A}}S_{\textrm{A}})^{n}}{n!} \exp(-\lambda_A S_{\textrm{A}})(M_2)^{n} \nonumber \\
  & = \exp\left(\lambda_{\textrm{A}} S_{\textrm{A}} (M_2 -1)\right) ,
\end{align}
where (b) is obtained by denoting $t_1 ={s}' P_n d_0^{-\alpha_2} 	G_{\textrm{t}} 	G_{\textrm{r}}$ and $t_2 ={s}' P_n d_0^{-\alpha_2}	g_{\textrm{t}} 	G_{\textrm{r}}$, and (c) is obtained by the fact that $N_{\textrm{A}}$ follows the PPP with the density of $\lambda_{\textrm{A}}$ and $S_{\textrm{A}}=\pi R_{\textrm{C}}^2$. This completes the proof.
\end{proof}	
\subsection{Proof of Theorem~\ref{T3}}\label{ApF}
\begin{proof}
	From (\ref{eqAER}), we have
	\begin{align}\label{eqA6} 
	 \bar{C}^{\textrm{T-A}} 	&\triangleq \frac{1}{K}\mathbb{E}\left[\log_2(1 + \mathrm{SINR_1})\right] \nonumber \\
		& \overset{\mathrm{(a)}}=\! \frac{1}{K}\int \!\!\int \!\! \int_{t>0}\!\!\!\! \mathbb{E}\left[\mathbb{P}\!\!\ \Big(\log_2 (1+ \mathrm{SINR_1} >t\Big)|R_{m}= r_{m}, m_0 \right]\! \nonumber\\
		&\hspace*{4mm}\times f_{R_{m}}(r_m|m_0) f_M(m_0)\,\mathrm{d}t\,\mathrm{d}r_{m}\,\mathrm{d}m_{0},
	\end{align}	
	where (a)~follows from the fact that if the random variable $X$ involved is positive, $\mathbb{E}[X]\! =\! \int_{t>0}\mathbb{P}(X>t)\mathrm{d}t$.
	
	We note that $\mathbb{E}\left[\mathbb{P}\!\!\ \Big(\log_2 (1+ \mathrm{SINR_1} >t\Big)|R_{m}= r_{m}, m_0 \right]$ satisfies:
	\begin{align}
		&\mathbb{E}\left[\mathbb{P}\!\!\ \Big(\log_2 (1+ \mathrm{SINR_1} >t\Big)|R_{m}= r_{m}, m_0 \right] \nonumber\\
		&=   \mathbb{E}\! \left[\mathbb{P}\left( |h_{T_m\!A}|^2\! >\! \frac{I_{\textrm{T}}(2^t\! -\! 1)}{P_{\textrm{T}} (H_{\textrm{A}}^2+r_m^2)^{-\frac{\alpha_1}{2}}}\right)|R_{m}= r_{m}, m_0\right] \nonumber \\
		&\overset{\mathrm{(b)}}{<}  \left(1-\mathbb{E}\!\left[\!\left(\!1\!-\!\exp\!\left( \frac{-\eta {I_{\textrm{T}}}(2^t\! -\! 1) } {P_{\textrm{T}} (H_{\textrm{A}}^2+r_m^2)^{-\frac{\alpha_1}{2}}}  \right)\! \right)^{\!\! N_{\textrm{TA}}} \right] \right) \nonumber \\
		&\overset{\mathrm{(c)}}{=}  \sum_{n=1}^{N_{\textrm{TA}}}(-1)^{n+1}\binom{N_{\textrm{TA}}}{n}\mathbb{E}_{I_{\textrm{T}}}\! \! \left[\exp\left(-s_1 {I_{\textrm{T}}}\right)\right], 
	\end{align}
where (b)~is obtained by the tight upper bound when $N_{\textrm{TA}}$ is small \cite{alzer1997some}, that is, $\mathbb{P}\left[|h|^{2}\! <\! \psi\right]\! <\! \left(1\! -\! \exp(-\psi\eta)\right)^{N_{\textrm{TA}}}$ with $\eta\! =\! N_{\textrm{TA}}(N_{\textrm{TA}}!)^{-\frac{1}{N_{\textrm{TA}}}}$, and (c)~is obtained by the binomial theorem and by denoting $s_1\! =\! \frac{n \eta (2^t\! -\! 1) } {P_{\textrm{T}} (H_{\textrm{A}}^2+r_m^2)^{-\frac{\alpha_1}{2}}}$.  This completes the proof.
\end{proof}	

\subsection{Proof of Theorem~\ref{T4}}\label{ApG}
\begin{proof}
	
	Starting from the definition (\ref{eqAER}), we have
		\begin{align}  \label{eqA8}
		& \bar{C}^{\textrm{A-S}} \triangleq \frac{1}{K}\mathbb{E}\left[\log_2(1 + \mathrm{SINR}_2)\right] \nonumber \\
		& \overset{\mathrm{(a)}}=\! \frac{1}{K}\mathbb{E}\!\!\left[\int_{t>0}\!\! \mathbb{P}\Big(\!\log_2 (1\!+\!\frac{P_{\textrm{m}} G_{A_m\!-\!S}\!\left|h_{A_m\!S}\right|^2 \!d_m^{-\alpha_2}} {I_{\textrm{A}}} \!>\!t\!\Big)\mathrm{d}t \!\right] \nonumber \\
		& \overset{\mathrm{(b)}}= \frac{1}{K}\mathbb{E}\! \left[\int_{t>0}\!\!\!\left( 1- \mathbb{P}\Big( \left|h_{A_m\!S}\right|^2\! \leq \! \frac{ I_{\textrm{A}}(2^t\! -\! 1)}{P_{\textrm{m}} G_{A_m\!-\!S}  d_0^{-\alpha_2}}\Big) \right) \mathrm{d}t \right]\!  \nonumber \\
		& \overset{\mathrm{(c)}}= \frac{1}{K}\int_{t>0}\! \Bigg(1\! - \! \mathbb{E} \Bigg[ \sum_{k=0}^{\infty} \frac{\Psi\left(k\right)}{(\beta-\delta)^{\,k+1}} \nonumber\\
		&\hspace*{5mm}\times\!\gamma\! \left(\! k\! + \!1,(\beta\! -\! \delta) \frac{ I_{\textrm{A}}(2^t\! -\! 1)d_0^{\alpha_2}}{P_{\textrm{m}} G_{A_m\!-\!S}  }\! \right)\! \Bigg]    \Bigg) \mathrm{d}t \nonumber\\
		& \overset{\mathrm{(d)}}{\approx} \frac{1}{K}\int_{t>0}\! \Bigg(1 - \mathbb{E}\Bigg[ \sum_{k=0}^{\infty} \frac{\Psi (k)}{(\beta-\delta)^{k+1}}\Gamma (k+1) \nonumber \\
		&	\hspace*{5mm}\times \left( 1 - \exp\left(- \frac{\zeta(\beta-\delta) (2^t-1) I_{\textrm{A}} d_{0}^{\alpha_2}}{P_m G_{A_m\!-\!S}} \right)\right)^{k+1}\Bigg] \Bigg) \mathrm{d}t \nonumber \\ 
		&\overset{\mathrm{(e)}}= \frac{1}{K}\int_{t>0}\! \Bigg(1 - \sum_{k=0}^{\infty} \frac{\Psi (k)}{(\beta-\delta)^{k+1}} \Gamma (k+1) \sum_{v=0}^{k+1} \binom{k+1}{v} \nonumber \\
		&	\hspace*{5mm}\times (-1)^v \mathbb{E}\left[\exp\big(-{s_1}'I_{\textrm{A}}\big)\right] \Bigg)\!\mathrm{d}t,   
	\end{align}
	where (a)~follows from the fact that if the random variable $X$ involved is positive, $\mathbb{E}[X]\! =\! \int_{t>0}\mathbb{P}(X\! >\! t)\mathrm{d}t$, (b) is due to the fact that (d)~is approximately a constant $d_0$, (c) is obtained by substituting (\ref{eqApD3}) into right-hand side expression of (b), (d) is approximated by using $\gamma(k+1,x)\! <\! \Gamma (k+1)(1-\exp(-\zeta x))^{k+1}$ with $\zeta\! =\! (\Gamma(k\! +\! 2))^{-\frac{1}{k+1}}$ \cite{alzer1997some}, and (e)~is obtained from the binomial theorem with ${s}_1'\! =\! \frac{v\,\zeta(\beta-\delta)\, (2^t-1) \,d_{0}^{\alpha_2}}{P_m G_{A_m\!-\!S}}$. 
\end{proof}	
\small

\begin{thebibliography}{10}
\providecommand{\url}[1]{#1}
\csname url@samestyle\endcsname
\providecommand{\newblock}{\relax}
\providecommand{\bibinfo}[2]{#2}
\providecommand{\BIBentrySTDinterwordspacing}{\spaceskip=0pt\relax}
\providecommand{\BIBentryALTinterwordstretchfactor}{4}
\providecommand{\BIBentryALTinterwordspacing}{\spaceskip=\fontdimen2\font plus
\BIBentryALTinterwordstretchfactor\fontdimen3\font minus
  \fontdimen4\font\relax}
\providecommand{\BIBforeignlanguage}[2]{{%
\expandafter\ifx\csname l@#1\endcsname\relax
\typeout{** WARNING: IEEEtran.bst: No hyphenation pattern has been}%
\typeout{** loaded for the language `#1'. Using the pattern for}%
\typeout{** the default language instead.}%
\else
\language=\csname l@#1\endcsname
\fi
#2}}
\providecommand{\BIBdecl}{\relax}
\BIBdecl

\bibitem{9042251} 
E.~Yaacoub and M.-S.~Alouini, ``A key 6G challenge and opportunity--connecting the base of the pyramid: A survey on rural connectivity,'' \emph{Proc. IEEE}, vol.~108, no.~4, pp.~533--582, Apr.~2020.

\bibitem{9314201} 
Q.~Huang, \emph{et al.}, ``Uplink massive access in mixed RF/FSO satellite-aerial-terrestrial networks,'' \emph{IEEE Trans. Commun.}, vol.~69, no.~4, pp.~2413--2426, Apr.~2021.

\bibitem{8353853} 
X.~Zhu, \emph{et al.}, ``Cooperative multigroup multicast transmission in integrated terrestrial-satellite networks,'' \emph{IEEE J. Sel. Areas Commun.}, vol.~36, no.~5, pp.~981--992, May 2018.

\bibitem{9693912} 
Y.~Zhang, \emph{et al.}, ``Resource allocation in terrestrial-satellite-based next generation multiple access networks with interference cooperation,'' \emph{IEEE J. Sel. Areas Commun.}, vol.~40, no.~4, pp.~1210--1221, Apr.~2022.

\bibitem{1522108} 
B.~Evans, \emph{et al.}, ``Integration of satellite and terrestrial systems in future multimedia communications,'' \emph{IEEE Wireless Commun}, vol.~12, no.~5, pp.~72--80, Oct.~2005.

\bibitem{9179999} 
G.~Pan, J.~Ye, Y.~Tian, and M.-S.~Alouini, ``On HARQ schemes in satellite-terrestrial transmissions,'' \emph{IEEE Trans. Wireless Commun.}, vol.~19, no.~12, pp.~7998--8010, Dec.~2020.

\bibitem{9610113} 
X.~Zhu and C.~Jiang, ``Integrated satellite-terrestrial networks toward 6G: Architectures, applications, and challenges,'' \emph{IEEE Internet Things J.}, vol.~9, no.~1, pp.~437--461, Jan.~2022.

\bibitem{9003405} 
Y.~Ruan, \emph{et al.}, ``Cooperative resource management for cognitive satellite-aerial-terrestrial integrated networks towards IoT,'' \emph{IEEE Access}, vol.~8, pp.~35759--35769, Feb.~2020.







\bibitem{huaicong2022ergodic} 
H.~Kong, \emph{et al.}, ``Ergodic sum rate for uplink NOMA transmission in satellite-aerial-ground integrated networks,'' \emph{Chinese J. Aeronaut.}, vol.~35, no.~9, pp.~58--70, Sep.~2022.

\bibitem{9808306} 
H.~Kong, \emph{et al.},  ``Hybrid multiple access transmission in satellite-aerial-terrestrial networks,'' \emph{IEEE Commun. Lett.}, vol.~26, no.~9, pp.~2146--2150, Sep.~2022.

\bibitem{9347980} 
A.~Yastrebova, \emph{et~al.}, ``Theoretical and simulation-based analysis of terrestrial interference to LEO satellite uplinks,'' in \emph{Proc. GLOBECOM 2020} (Taipei, Taiwan, China), Dec.~7-11, 2020, pp.~1--6.



\bibitem{9509510} 
B.~A.~Homssi and A.~Al-Hourani, ``Modeling uplink coverage performance in hybrid satellite-terrestrial networks,'' \emph{IEEE Commun. Lett.}, vol.~25, no.~10, pp.~3239--3243, Oct.~2021.
\bibitem{9676997} 
B. Manzoor, A. Al-Hourani, and B. A. Homssi, ``Improving iot-over-satellite connectivity using frame repetition technique,'' \emph{IEEE Wireless Commun. Lett.}, vol. 11, no. 4, pp. 736–740, Apr. 2022.
\bibitem{9838778} 
C.~C.~Chan, B.~Al~Homssi, and A.~Al-Hourani, ``A stochastic geometry approach for analyzing uplink performance for IoT-over-satellite,'' in \emph{Proc. ICC 2022} (Seoul, South Korea), May~16-20, 2022, pp.~2363--2368.

\bibitem{9130899} 
G.~Pan, J.~Ye, Y.~Zhang, and M.-S.~Alouini, ``Performance analysis and optimization of cooperative satellite-aerial-terrestrial systems,'' \emph{IEEE Trans. Wireless Commun.}, vol.~19, no.~10, pp.~6693--6707, Oct.~2020.

\bibitem{song2022cooperative} 
Z.~Song, \emph{et~al.}, ``Cooperative satellite-aerial-terrestrial systems: A stochastic geometry model,'' \emph{IEEE Trans. Wireless Commun.}, vol.~22, no.~1, pp.~220--236, Jan.~2023.

\bibitem{9789274} 
H.~Zhang, \emph{et~al.}, ``Outage analysis of cooperative satellite-aerial-terrestrial networks with spatially random terminals,'' \emph{IEEE Trans. Commun.}, vol.~70, no.~7, pp.~4972--4987, Jul.~2022.

\bibitem{9582189} 
D.~Vasudha, J.~Asritha, A.~Maloo, and P.~K.~Sharma, ``Coverage analysis of cooperative satellite-UAV-terrestrial communication system with receiver mobility,'' in \emph{Proc. INDISCON 2021} (Nagpur, India), Aug.~27-29, 2021, pp.~1--5.

\bibitem{7882710} 
M.~Afshang and H.~S.~Dhillon, ``Fundamentals of modeling finite wireless networks using binomial point process,'' \emph{IEEE Trans. Wireless Commun.}, vol.~16, no.~5, pp.~3355--3370, May 2017.

\bibitem{6574907} 
B.~Cho, K.~Koufos, and R.~Jantti, ``Bounding the mean interference in Mat\'ern type II hard-core wireless networks,'' \emph{IEEE Wireless Commun. Lett.}, vol.~2, no.~5, pp.~563--566, Oct.~2013.


\bibitem{6932503} 
T.~Bai and R.~W.~Heath, ``Coverage and rate analysis for millimeter-wave cellular networks,'' \emph{IEEE Trans. Wireless Commun.}, vol.~14, no.~2, pp.~1100--1114, Feb.~2015.

\bibitem{8016632} 
W.~Yi, Y.~Liu, and A.~Nallanathan, ``Modeling and analysis of D2D millimeter-wave networks with Poisson cluster processes,'' \emph{IEEE Trans. Commun.}, vol.~65, no.~12, pp.~5574--5588, Dec.~2017.

\bibitem{Zhao_etal2019} 
G. Zhao, \emph{et al.}, ``Mobile-traffic-aware offloading for energy- and spectral-efficient large-scale D2D-enabled cellular networks,'' \emph{IEEE Trans. Wireless Commun.}, vol.~18, no.~6, pp.~3251--3264, Jun.~2019.

\bibitem{1623307} 
C.~Loo, ``A statistical model for a land mobile satellite link,'' \emph{IEEE Trans. Veh. Technol.}, vol.~34, no.~3, pp.~122--127, Aug. 1985.

\bibitem{8068989} 
O.~Y.~Kolawole, S.~Vuppala, M.~Sellathurai, and T.~Ratnarajah, ``On the performance of cognitive satellite-terrestrial networks,'' \emph{IEEE Trans. Cogn. Commun. Netw}, vol.~3, no.~4, pp.~668--683, Dec.~2017.

\bibitem{2021Stochastic} 
A.~Talgat, M.~A.~Kishk, and M.~S.~Alouini, ``Stochastic geometry-based analysis of LEO satellite communication systems,'' \emph{IEEE Commun. Lett.}, vol.~25, no.~8, pp.~2458--2462, Aug.~2021.

\bibitem{9678973} 
D.-H.~Jung, J.-G.~Ryu, W.-J.~Byun, and J.~Choi, ``Performance analysis of satellite communication system under the shadowed-Rician fading: A stochastic geometry approach,'' \emph{IEEE Trans. Commun.}, vol.~70, no.~4, pp.~2707--2721, Apr.~2022.

\bibitem{bhatnagar2013closed} 
M.~R.~Bhatnagar and M.~Arti, ``On the closed-form performance analysis of maximal ratio combining in shadowed-Rician fading LMS channels,'' \emph{IEEE Commun. Lett.}, vol.~18, no.~1, pp.~54--57, Jan.~2014.

\bibitem{7869087} 
M.~Sellathurai, S.~Vuppala, and T.~Ratnarajah, ``User selection for multi-beam satellite channels: A stochastic geometry perspective,'' in \emph{Proc. 50th Asilomar Conf. Signals, Syst. Comput.} (Pacific Grove, CA, USA), Nov.~6-9, 2016, pp.~487--491.

\bibitem{8894851} 
Q.~Huang, \emph{et al.}, ``Performance analysis of integrated satellite-terrestrial multiantenna relay networks with multiuser scheduling,'' \emph{IEEE Trans. Aerosp. Electron. Syst.}, vol.~56, no.~4, pp.~2718--2731, Aug.~2020.



\bibitem{9520123} 
X.~Zhang, \emph{et al.}, ``Stochastic geometry-based analysis of cache-enabled hybrid satellite-aerial-terrestrial networks with non-orthogonal multiple access,'' \emph{IEEE Trans. Wireless Commun.}, vol.~21, no.~2, pp.~1272--1287, Feb.~2022.

\bibitem{9926973} 
A. G. Kanatas, ``Spherical Random Arrays With Application to Aerial Collaborative Beamforming,'' \emph{IEEE Trans. on Antennas and Propag.}, vol.~71, no.~1, pp.~550-562, Jan.~2023.

\bibitem{8059815} 
T. Van Luyen and T. Vu Bang Giang, ``Interference Suppression of ULA Antennas by Phase-Only Control Using Bat Algorithm,'' \emph{IEEE Antennas Wireless Propag. Lett.}, vol.~16, pp.~3038-3042,~2017.


\bibitem{2003A} 
A.~Abdi, W.~C.~Lau, M.-S.~Alouini, and M.~Kaveh, ``A new simple model for land mobile satellite channels: First- and second-order statistics,'' \emph{IEEE Trans. Commun.}, vol.~2, no.~3, pp.~519--528, May 2003.

\bibitem{jung2018outage} 
D.-H.~Jung and D.-G.~Oh, ``Outage performance of shared-band on-board processing satellite communication system,'' in \emph{Proc. VTC2018-Fall} (Chicago, IL, USA), Aug. 27-30, 2018, pp.~1--5.

\bibitem{zhang2019performance} 
X.~Zhang, \emph{et al.}, ``Performance analysis of NOMA-based cooperative spectrum sharing in hybrid satellite-terrestrial networks,'' \emph{IEEE Access}, vol.~7, pp.~172321--172329, Dec.~2019.

\bibitem{alzer1997some} 
H.~Alzer, ``On some inequalities for the incomplete gamma function,'' \emph{Math. Comput.}, vol.~66, no.~218, pp.~771--778, 1997.

\bibitem{5313912} 
A.~L.~Fructos, R.~R.~Boix, and F.~Mesa, ``Application of Kummer's transformation to the efficient computation of the 3-D Green's function with 1-D periodicity,'' \emph{IEEE Trans. Antennas Propag.}, vol.~58, no.~1, pp.~95--106, Jan.~2010.

\end{thebibliography}

\begin{IEEEbiography}
		[{\includegraphics[width=1in,height=1.25in,clip,keepaspectratio]{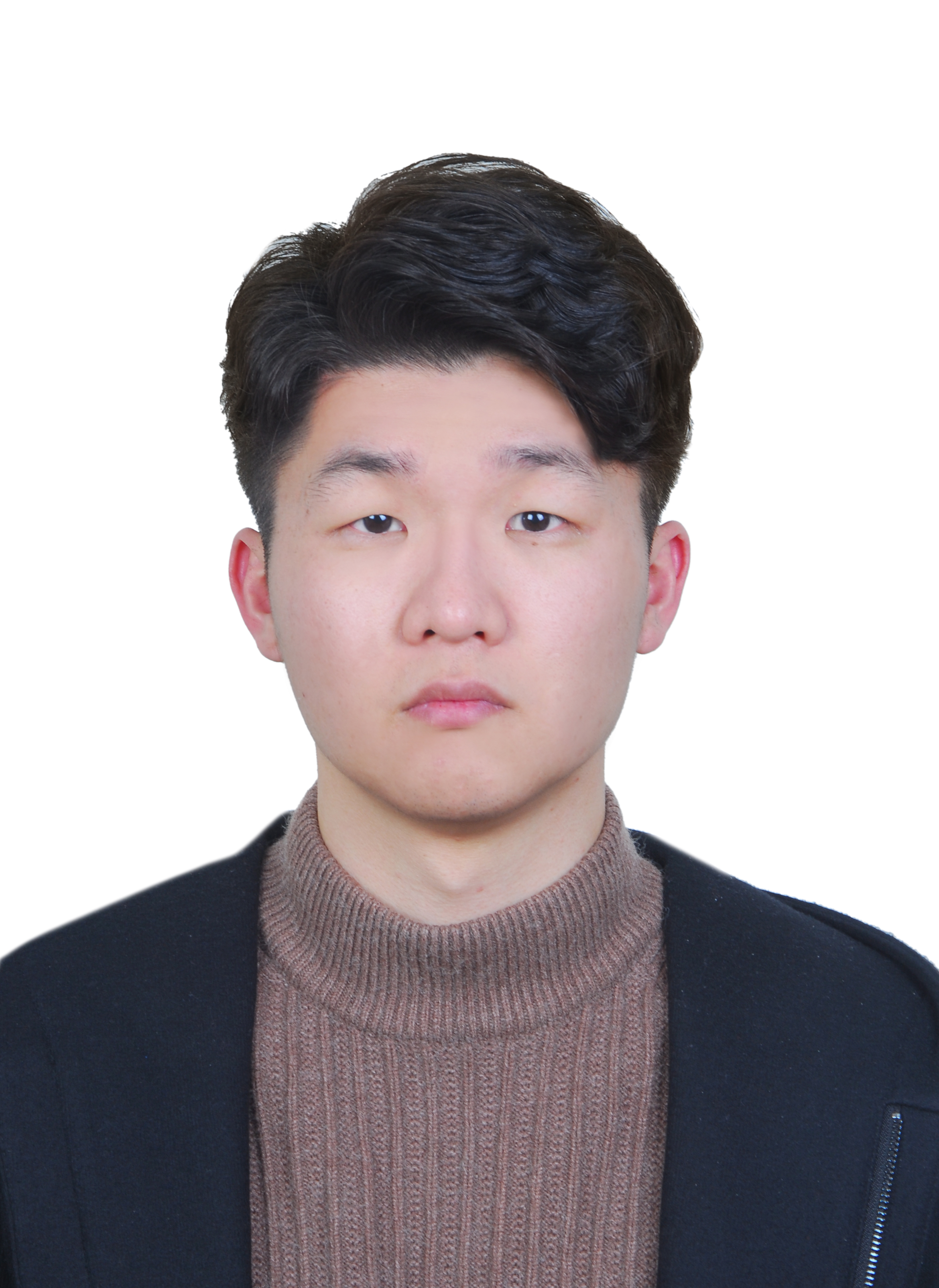}}]
		{Wen-Yu Dong}
	received the B.S. degree in electronic and information engineering from Sichuan  University (SCU), China, in 2019. He is currently pursuing the Ph.D. degree in information and communication engineering with the School of Information and Communication Engineering, Beijing University of Posts and Telecommunications (BUPT), and with the Key Laboratory of Universal Wireless Communications, Ministry of Education. His current research interest includes high-dynamic mobile ad hoc network architecture and protocol stack design, routing design for mobile ad hoc networks, and integrated space-air-ground network modeling and performance analysis based on stochastic geometry.
\end{IEEEbiography}
	
\begin{IEEEbiography}
	[{\includegraphics[width=1in,height=1.25in,clip,keepaspectratio]{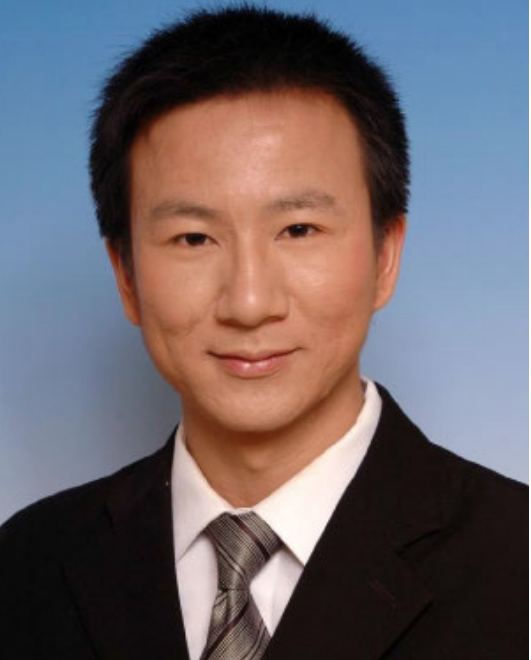}}]
	{Shaoshi Yang}
	   (Senior Member, IEEE) received the B.Eng. degree in information engineering from Beijing University of Posts and Telecommunications (BUPT), China, in 2006, and the Ph.D. degree in electronics and electrical engineering from Univer sity of Southampton, UK, in 2013. From 2008 to 2009, he was a Researcher with Intel Labs China. From 2013 to 2016, he was a Research Fellow with the School of Electronics and Computer Science, University of Southampton. From 2016 to 2018, he was a Principal Engineer with Huawei Technologies Co., Ltd., where he made significant contributions to the products, solutions and standardization of 5G, wideband IoT, and cloud gaming/VR. He was a Guest Researcher with the Isaac Newton Institute for Mathematical Sciences, University of Cambridge. He is currently a Full Professor with BUPT. His research interests include 5G/5G-A/6G, massive MIMO, mobile ad hoc networks, distributed artificial intelligence, and cloud gaming/VR. He is a standing committee member of the CCF Technical Committee on Distributed Computing and Systems. He received Dean’s Award for Early Career Research Excellence from University of Southampton in 2015, Huawei President Award for Wireless Innovations in 2018, IEEE TCGCC Best Journal Paper Award in 2019, IEEE Communications Society Best Survey Paper Award in 2020, CAI Invention and Entrepreneurship Award in 2023, CIUR Industry University-Research Cooperation and Innovation Award in 2023, and the First Prize of Beijing Municipal Science and Technology Advancement Award in 2023. He was/is an Editor of \emph{IEEE Transactions on Communications}, \emph{IEEE Systems Journal}, \emph{IEEE Wireless Communications Letters}, and \emph{Signal Processing} (Elsevier). For more details of his research progress, please refer to https://shaoshiyang.weebly.com/.
\end{IEEEbiography}

\begin{IEEEbiography}
	[{\includegraphics[width=1in,height=1.25in,clip,keepaspectratio]{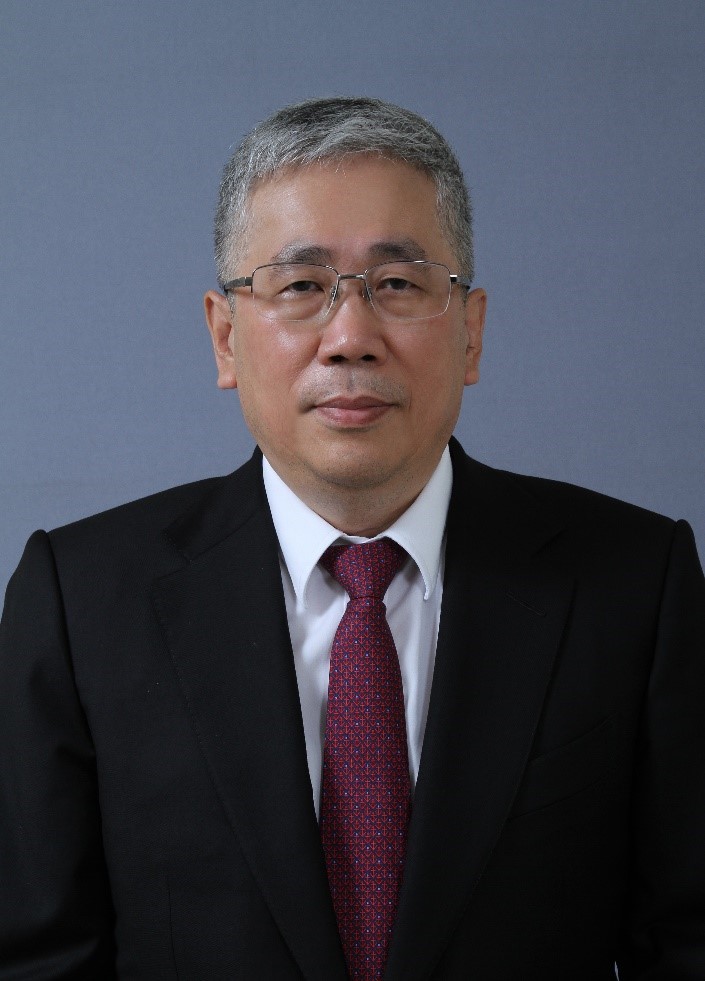}}]
	{Ping Zhang}
(Fellow, IEEE)  is currently a professor of School of Information and Communication Engineering at Beijing University of Posts and Telecommunications, the director of State Key Laboratory of Networking and Switching Technology, a member of IMT-2020 (5G) Experts Panel, a member of Experts Panel for China’s 6G development. He served as Chief Scientist of National Basic Research Program (973 Program), an expert in Information Technology Division of National High-tech R$\&$D program (863 Program), and a member of Consultant Committee on International Cooperation of National Natural Science Foundation of China. His research interests mainly focus on wireless communication. He is an Academician of the Chinese Academy of Engineering (CAE).
\end{IEEEbiography}

\begin{IEEEbiography}[{\includegraphics[width=1in,height=1.25in,clip,keepaspectratio]{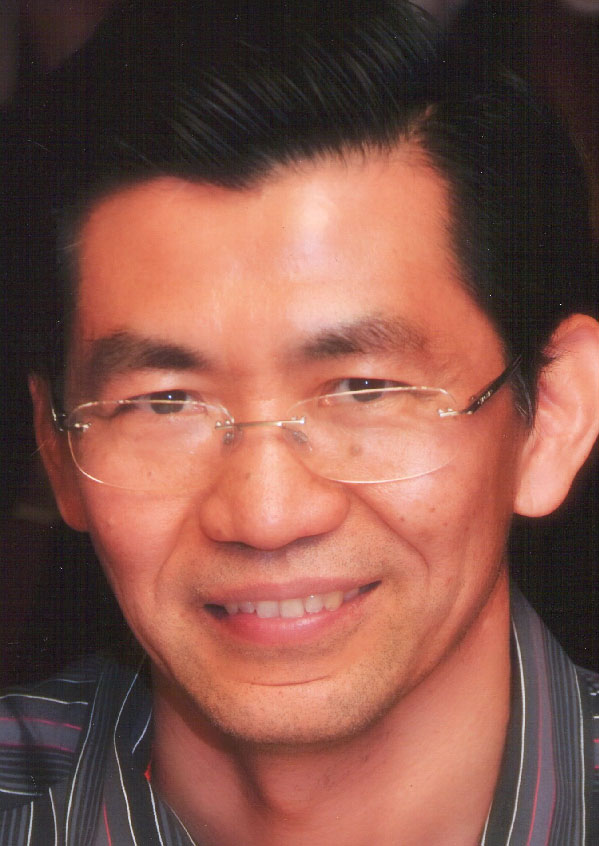}}]{Sheng Chen}
	
	(Life Fellow, IEEE) received his BEng degree from the East China Petroleum Institute, Dongying, China, in 1982, and his PhD degree from the City University, London, in 1986, both in control engineering. In 2005, he was awarded the higher doctoral degree, Doctor of Sciences (DSc), from the University of Southampton, Southampton, UK. From 1986 to 1999, He held research and academic appointments at the Universities of Sheffield, Edinburgh and Portsmouth, all in UK. Since 1999, he has been with the School of Electronics and Computer Science, the University of Southampton, UK, where he holds the post of Professor in Intelligent Systems and Signal Processing. Dr Chen's research interests include adaptive signal processing, wireless communications, modeling and identification of nonlinear systems, neural network and machine learning, evolutionary computation methods and optimization. He has published over 700 research papers. Professor Chen has 20,000+ Web of Science citations with h-index 61 and 39,000+ Google Scholar citations with h-index 83. Dr. Chen is a Fellow of the United Kingdom Royal Academy of Engineering, a Fellow of Asia-Pacific Artificial Intelligence Association and a Fellow of IET. He is one of the original ISI highly cited researchers in engineering (March 2004).
	
\end{IEEEbiography}

\end{document}